\DeclareMathOperator{\Tr}{Tr}
\DeclareMathOperator{\Cov}{\mathbf{Cov}}
\DeclareMathOperator{\Covel}{Cov}
\DeclareMathOperator{\Prec}{\mathbf{Prec}}
\DeclareMathOperator{\Precel}{Prec}
\DeclareMathOperator{\Var}{Var}
\DeclareMathOperator{\Real}{Re}
\DeclareMathAlphabet{\mathbbmsl}{U}{bbm}{m}{sl}
\newcommand{\dirprod}[1]{\underline{#1}}
\begin{document}

\title{Weak Convexity of Fisher Information Matrix and Superresolved Localization of Blinking Sources of Light}

\author{Dmitri B. Horoshko}\email{dmitri.horoshko@uni-ulm.de}
\affiliation{Institut f\"ur Quantenoptik, Universit\"at Ulm, Ulm D-89073, Germany}
\author{Alexander B. Mikhalychev}
\affiliation{Atomicus GmbH, Amalienbadstr. 41C, Karlsruhe, 76227, Germany}
\author{Fedor Jelezko}
\affiliation{Institut f\"ur Quantenoptik, Universit\"at Ulm, Ulm D-89073, Germany}
\author{Polina P. Kuzhir}
\affiliation{University of Eastern Finland, Yliopistokatu 7, Joensuu, 80101, Finland}

\date{\today}

\begin{abstract}
A group of techniques known by the general name of single-molecule localization microscopy reaches a nanometer-scale spatial resolution of point light emitters, well below the diffraction limit of the traditional microscopy. The key feature of these techniques is blinking, alternation of bright and dark states, of each emitter so that no more than one emitter is bright within the width of the point-spread function of the microscope during a time sufficient for its localization. We give a formulation of the optical part of these techniques in terms of quantum metrology, where the limit of precision is determined by the Fisher information on the emitters positions contained in the measurement data. We show that the advantage in resolution provided by making the emitters blink is a consequence of the fundamental property of Fisher information, its convexity. In particular, we prove the weak matrix convexity and the trace convexity of the Fisher information matrix -- two fundamental results in the multiparameter estimation theory. We show also that the advantage in information is inherent to the quantum state of light itself before the measurement and is related to the convexity of the quantum Fisher information matrix. 
\end{abstract}

\maketitle

\section{Introduction}
\label{sec:Intro}

The lateral resolution of a conventional far-field microscope is fundamentally restricted by the diffraction limit, equal to about half the wavelength of the illuminating light \cite{Born&Wolf}, which is above 200 nm in the visible range. In the last 30 years, numerous methods for surpassing the diffraction limit have been developed and are known by the general name of superresolution microscopy \cite{Vangindertael18,Schermelleh19,Defienne24}. One of the most successful directions in this area of research is related to subdiffraction imaging of point-like incoherent sources of light represented by fluorescent molecules \cite{Deschout14,Lelek21,Fazel24}, achieving nanometer-scale lateral resolution and distinguished by the Nobel Prize in Chemistry in 2014 'for the development of superresolved fluorescence microscopy.' An important subdirection is represented by the techniques of stochastically activated (blinking) fluorophores such as photoactivated localization microscopy (PALM) \cite{Betzig06}, stochastic optical reconstruction microscopy (STORM) \cite{Rust06}, points accumulation for imaging in nanoscale topography (PAINT) \cite{Sharonov06}, and their variants, know by the general name of single-molecule localization microscopy (SMLM) and having found numerous applications in cell biology \cite{Lelek21}. The SMLM techniques have also been applied to color centers in diamond providing subdiffraction-limit sensing of magnetic or electric fields \cite{Qu13,Pfender14}.

The general principle of SMLM consists in making the fluorophores blink, by a mechanism specific to each technique, so that not more than one source is present within the width of the point-spread function (PSF) of the microscope during a time sufficient for its precise localization. The latter is achieved by a simple arithmetic mean of the photodetection positions or by taking an analytical model for PSF and estimating the position of its peak by least squares or maximum likelihood estimation \cite{Small14}. The standard deviation of the source position scales as $1/\sqrt{N}$ with the number $N$ of detected photons \cite{Thompson02}, which allowed one to localize isolated fluorophores with an uncertainty of about 1 nm by detecting $N\approx10^4$ photons for each of them by a technique known as fluorescence imaging with one-nanometer accuracy (FIONA) \cite{Yildiz03}. Similarly, the positions of multiple sources within PSF can be found by modeling the intensity distribution on the microscope camera by a sum of PSFs and estimating the positions of its peaks, albeit with a much larger uncertainty \cite{Lidke05,Huang11}.

The $1/\sqrt{N}$ scaling of the position standard deviation is easily understood in the framework of mathematical statistics, namely, its branch known as point estimation theory \cite{Cox&Hinkley,Lehmann-Casella}. The central relation of this theory is the Cram\'er-Rao bound for the variance of any unbiased estimator of the parameter $x$:
\begin{equation}
\left(\Delta x\right)^2 \ge 1/\dirprod{F},
\end{equation}
where $\dirprod{F}$ is the Fisher information on this parameter contained in the measurement data. Here and below underlined letters denote quantities related to the entire measurement session typically consisting of many photodetections. In the case of localization of a single emitter, the subsequent photodetections are statistically independent, and the total Fisher information carried by $N$ photons is simply $\dirprod{F}=NF_\text{1phot}$, where $F_\text{1phot}$ is the Fisher information on the emitter position carried by one photon. The Cram\'er-Rao bound can be asymptotically reached in the high $N$ limit when the estimator is given by the maximum likelihood estimation \cite{Cox&Hinkley,Lehmann-Casella}. Thus, the $1/\sqrt{N}$ scaling of the position standard deviation is a mere consequence of the \emph{additivity} of Fisher information with respect to independent observations. The single-photon Fisher information can be calculated for various models of the PSF and the accompanying noise \cite{Ram06,Chao16} providing the Cram\'er-Rao bound for any given $N$. The variance of the estimated emitter position was shown to approach the Cram\'er-Rao bound for $N=10^2 - 10^4$ photons allowing thus for superresolution. 

The same argument can be applied to determining the separation of two closely spaced emitters: the total Fisher information carried by $N$ photons on the separation is $\dirprod{F}=NF_\text{sep,1phot}$, where $F_\text{sep,1phot}$ is the information on the separation carried by one photon. This information tends to zero together with the separation of the emitters \cite{Tsang16}, however, for whatever small $F_\text{sep,1phot}$, one can make the total information arbitrarily large by collecting sufficiently large number of photons and thus determine the separation with arbitrary precision. In practice, a fluorophore emits a limited number of photons before photobleaching, and, therefore, individual localization of blinking emitters gives much better results than a simultaneous localization of multiple cofluorescent emitters within the width of PSF.  The precision gain attained by making the sources blink is the key feature of SMLM.  A question arises: \emph{Is it a general law of Nature that blinking emitters send more information on their locations than cofluorescent ones?}

In this paper, we give a positive answer to this question, showing that the informational advantage gained by blinking is a consequence of another fundamental property of Fisher information, its \emph{convexity}. We consider the task of localization of an arbitrary number of light sources as a multiparameter estimation problem and prove that, in a scenario where all sources emit the same number of photons and microscopy satisfies the conditions of translation and rotation invariance, the Fisher information matrix possesses the property of \emph{weak matrix convexity}, which results in a higher localization efficiency for blinking sources. We also prove, that in the general case of arbitrary numbers of emitted photons, the Fisher information matrix possesses the property of \emph{trace convexity}, which leads to a higher localization efficiency of blinking sources when it is measured in terms of the new measure, which we call ``average eigenparameter precision.''   

The paper is structured as follows. In Sec.~\ref{sec:Formalism} we recast the formalism of quantum metrology \cite{Giovannetti06,Pirandola18,Albarelli20,Barbieri22} in the context of localization of dim incoherent sources of light \cite{Tsang16,Tsang17,Rehacek17,Bisketzi19,Hervas24}, extending a more traditional approach of classical metrology typically used in this context  \cite{Ram06,Chao16,mikhalychev2019efficiently,Vlasenko20,Kurdzialek21,mikhalychev2021lost}, and introduce three measures of localization efficiency. In Sec.~\ref{sec:Additivity}, we prove the fundamental properties of the Fisher information matrix mentioned above. In Sec.~\ref{sec:Localization}, we compare two scenarios of cofluorescent and blinking sources and show that the latter always has an informational advantage. We illustrate the developed theory by an example of two incoherent sources in Sec.~\ref{sec:Example} and extend the results to the quantum Fisher information matrix in Sec.~\ref{sec:Extension}. The obtained results are summarized in the Conclusion.

\section{Quantum metrology formalism for localization of point sources of light \label{sec:Formalism}}

\subsection{Cram\'er-Rao bound \label{sec:CRB}}
The objective of quantum metrology is inference of an unknown parameter $\theta$, which may be a vector $\theta=\{\theta_1,...,\theta_M\}$, encoded into the state of the object $\rho_O(\theta)$, which is a density operator in the Hilbert space of the object. For this purpose a generalized quantum measurement of the object is repeated $N$ times, the object being prepared in the same state $\rho_O(\theta)$ for every trial, so that the state of all $N$ object samples is $\dirprod{\rho}_O(\theta)=\rho_O(\theta)^{\otimes N}$. In the task of localizing point sources, considered here, $\theta$ includes the spatial coordinates $\bar r_k = (\bar{x}_k,\bar{y}_k)$ of each source on the object plane. In this task, the entire fluorescence session time ($\sim 1$ ms) is split into time bins of short duration ($\sim 1$ ns) so that not more than one photon is detected in each time bin. The empty time bins are discarded, while the time bins with detections are numbered with $n=1,...,N$, where $N$ is the total number of photons detected. The point $r_n$ on the image plane where a photon was detected in the $n$th bin is considered as a result of a measurement of the $n$th sample of the field. That is, it is implied that the sources' coordinates do not vary with time and the subsequent acts of spontaneous emission are statistically independent.  Note that the exact times of photon detections may not be resolved by the camera of the microscope and are not necessary for the statistical model. Even if the probability of photon detection in one time bin is as small as $0.01$, the total number of detected photons can be rather high, typically $N\sim10^4$. Specifically, $\rho_O(\theta)$ can be understood as the density operator of the electronic excitation in a time bin where a photon was detected. Alternatively, it can be understood as the density operator of a spontaneously emitted photon on the object plane.

\begin{figure*}[!t]
\centering
\includegraphics[width=\linewidth]{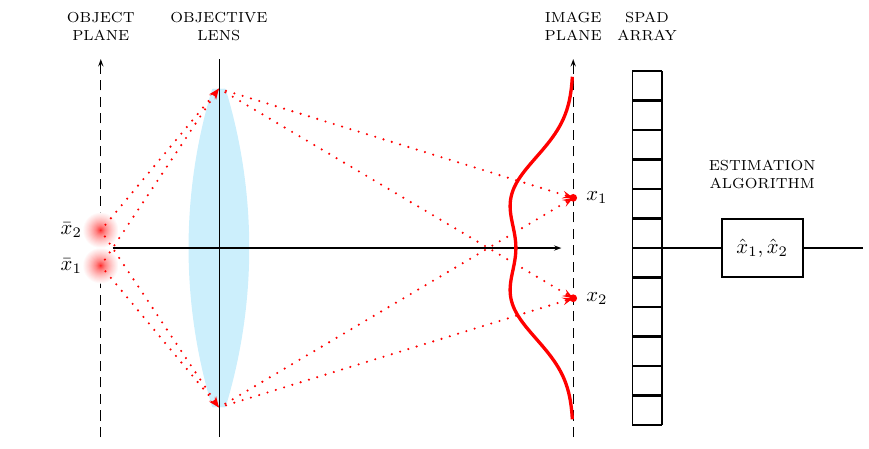}
\caption{Scheme for measuring the positions of two incoherent sources by imaging them in a microscope equipped with a single-photon avalanche diode (SPAD) array and estimating the positions from the measurement data. The dotted red lines show the rays that form the geometrical images at points $x_1$ and $x_2$, while the solid red line shows the field intensity distribution in the image plane as a sum of two point-spread functions centered at $x_1$ and $x_2$. \label{fig:micro}}
\end{figure*}

A generalized quantum measurement consists of measuring a probe system upon its interaction with the object and comprises three main stages: initialization of the probe, interaction between the probe and the object, and measurement of the probe \cite{Barbieri22}. 

In the first stage, the probe is prepared in a state $\dirprod{\rho}_P$, which is, in the simplest case of independent identical systems, just a direct product of the states of each sample of the probe $\rho_P$:  $\dirprod{\rho}_P=\rho_P^{\otimes N}$. However, more complicated entangled probes can be used for reaching the quantum-enhanced sensitivity \cite{Giovannetti06}, which we do not consider here. In optical microscopy, the probe is represented by the part of the optical field captured by the aperture of the microscope, see Fig.~\ref{fig:micro}. For the spontaneous emission of fluorophores, its initial state is vacuum. Thus, its preparation simply means screening the field of view of the microscope from the background light in the frequency band used for imaging. 

In the second stage, the probe and object undergo an interaction, which is described in the most general form by a completely positive trace-preserving linear map $\mathcal{E}$, so that the state of all $N$ samples of the object and probe after the interaction is $\dirprod{\rho}_{OP}(\theta)=\mathcal{E}\left[ \dirprod{\rho}_O(\theta) \otimes\dirprod{\rho}_P\right]$. This map can be made unitary on an extended state space by a Stinespring dilation, that is, by adding up an environment system with the initial state $\dirprod{\rho}_E$ and tracing out the environment after the interaction: $\dirprod{\rho}_{OP}(\theta)=\Tr_E\left\{ \dirprod{U}\dirprod{\rho}_O(\theta) \otimes\dirprod{\rho}_P \otimes\dirprod{\rho}_E\dirprod{U}^\dagger\right\}$, where $\dirprod{U}$ is the unitary operator describing the interaction of the object, probe, and environment and $\dagger$ stands for Hermit conjugation. In the simplest case of independent identical interactions, both the state of the environment and the interaction operator are simple direct products of one-sample operators $\rho_E$ and $U_\theta$:  $\dirprod{\rho}_E=\rho_E^{\otimes N}$ and $\dirprod{U}=U^{\otimes N}$. 
In optical microscopy of fluorophores, the environment is represented by the part of the optical field not captured by the aperture of the microscope and the degrees of freedom of the fluorophore other than the optical transition responsible for the spontaneous emission. The interaction consists in optical excitation of a transition belonging to the environment, from where the excitation is transmitted to the object, which spontaneously emits a photon, belonging to the probe or to the environment.  

In the third stage, the probe is measured giving the outcome $\dirprod{r}$, which is a combination of measurement outcomes $r_n$ for every sample of the probe: $\dirprod{r}=\{r_1,...,r_N\}$. The probability of this outcome is given by the Born rule $\dirprod{p}(\dirprod{r}|\theta) = \Tr_P\left\{\dirprod{\Lambda}(\dirprod{r}) \tilde{\dirprod{\rho}}_P(\theta)\right\}$, where $\tilde{\dirprod{\rho}}_P(\theta) =\Tr_O\left\{\dirprod{\rho}_{OP}(\theta)\right\}$ is the state of the probe with the object traced out, and $\dirprod{\Lambda}(\dirprod{r})$ is the positive-operator-valued measure (POVM) for the outcome $\dirprod{r}$. As before, for independent identical measurements, the POVM is a direct product of POVMs for every sample: $\dirprod{\Lambda}(\dirprod{r})=\otimes_{n=1}^N\Lambda(r_n)$. Collective measurements of probe samples are studied in some scenarios of quantum metrology \cite{Giovannetti06}, but are not considered here. When the magnification of the microscope is sufficiently high, its field of view is covered by a large enough number of pixels, so that the measurement outcome can be represented by the continuous coordinate $r_n=\{x_n,y_n\}$ of the detected photon. The effects of detector pixelation are studied in the literature \cite{Ram06}, but are not considered here. The corresponding single-sample POVM is the projection on the single-photon state at point $r_n$: $\Lambda(r_n)=|r_n\rangle\langle r_n|$, where $|r_n\rangle=a^\dagger(r_n)|\text{vac}\rangle$, $a^\dagger(r_n)$ being the photon creation operator at point $r_n$ and $|\text{vac}\rangle$ is the vacuum state of the field.

Summing up the above relations, we write the conditional probability of the measurement outcome $\dirprod{r}$ under the condition that the parameter is $\theta$ as
\begin{equation}\label{pr}
    \dirprod{p}(\dirprod{r}|\theta) = \Tr\left\{\dirprod{\Lambda}(\dirprod{r}) \dirprod{U}\dirprod{\rho}_O(\theta)\otimes \dirprod{\rho}_P\otimes \dirprod{\rho}_E\dirprod{U}^\dagger\right\},
\end{equation}
where the trace is taken over all $N$ samples of the three systems: object, probe, and environment. The value of the parameter $\theta$ is inferred from the random outcome $\dirprod{r}$ by a function $\hat\theta(\dirprod{r})$ known as the estimator, which is again a random variable. The estimator is called unbiased if its mean coincides with the true value of the parameter,  $\langle\hat\theta(\dirprod{r})\rangle=\theta$, where the angular brackets denote the averaging with the conditional probability, Eq. (\ref{pr}). For any unbiased estimator, its covariance matrix $\Covel(\hat\theta)_{jk}=\langle(\hat\theta_j-\theta_j) (\hat\theta_k-\theta_k)\rangle$ is lower bounded by the Cram\'er-Rao bound discussed in the Introduction for a single parameter and having a matrix form in the multiparameter case \cite{Cox&Hinkley,Lehmann-Casella}:
\begin{equation}\label{CR}
\Cov(\hat\theta) \ge  \dirprod{\mathbf{F}}^{-1}(\theta),
\end{equation}
where the matrix inequality $\mathbf{A}\ge \mathbf{B}$ for two Hermitian matrices $\mathbf{A}$ and $\mathbf{B}$ means that the matrix $\mathbf{A}-\mathbf{B}$ is positive semidefinite, i.e., all its eigenvalues are nonnegative \cite{HornJohnson}, and $\dirprod{\mathbf{F}}(\theta)$ is the Fisher information matrix, whose elements are defined as 
\begin{equation}\label{Fij}
\dirprod{F}_{ij}(\theta) = \int \frac1{\dirprod{p}(\dirprod{r}|\theta)}\frac{\partial \dirprod{p}(\dirprod{r}|\theta)} {\partial\theta_i}\frac{\partial\dirprod{p}(\dirprod{r}|\theta)} {\partial\theta_j} d\dirprod{r}.
\end{equation}
Here and in the following, we assume that $\dirprod{p}(\dirprod{r}|\theta)$ is nonzero everywhere, which is true for an imaging system whose point-spread function is modeled by a Gaussian. Note that we use bold font for matrices, italic for their elements, Roman for the elements of function matrices, and underline, as above, for the quantities related to the entire set of $N$ samples. 

In optical microscopy, the typical estimator for the location of a single emitter is the result of fitting the distribution of the field intensity on its image spot by a two-dimensional Gaussian function \cite{Rust06,Pfender14}. In the case of multiple emitters, the Cram\'er-Rao bound is saturated by finding the parameter $\hat\theta(\dirprod{r})$ which maximizes $\dirprod{p}(\dirprod{r}|\theta)$ considered as a function of $\theta$ and known in this aspect as likelihood, which gives the name of ``maximal likelihood'' to the estimator \cite{Cox&Hinkley,Lehmann-Casella}.

\subsection{Localization efficiency \label{sec:Efficiency}}
The diagonal elements of the matrix $\Cov(\hat\theta)$ give variances of the components of $\theta$, $\Covel(\hat\theta)_{ii}=\Var(\hat\theta_i)$ and traditionally the entire diagonal of the covariance matrix is considered as the indicator of success in determining the unknown parameters. However, when tens or hundreds of sources are available, one might be interested in one number characterizing the efficiency of mass localization. We prefer to express this ``localization efficiency'' through the inverse of the parameter variance known as precision \cite{Bernardo-Smith,Rehacek17}, so that a high efficiency corresponds to a low position variance. This localization efficiency can be introduced in several alternative ways, depending on the order in which the averaging and inversion are performed.

\newtheorem{definition}{Definition}
\begin{definition}\label{definition:Vbar}
For any unbiased estimator $\hat\theta$ of an $M$-dimensional parameter $\theta$, the average total precision is 
\begin{equation}\label{Htot}
H_\textnormal{tot}(\hat\theta) = \frac{M}{\Tr\{\Cov(\hat\theta)\}}.
\end{equation}
\end{definition}
This quantity is the inverse of the average variance and, according to Eq. (\ref{CR}), satisfies the inequality
\begin{equation}
H_\textnormal{tot}(\hat\theta) \le \frac{M}{\Tr\{\dirprod{\mathbf{F}}^{-1}(\theta)\}},   
\end{equation}
which is asymptotically saturated by the maximum-likelihood estimator. 

However, the average variance can be odd suited to the description of the case where a localization of multiple fluorophores is required. Imagine a situation where one of the 100 fluorophores does not emit light for some reason. In this case, the probability distribution does not depend on the position of this fluorophore, and therefore all rows and columns of the Fisher information matrix corresponding to its coordinates are zeros. The variance of these coordinates is infinite and the average variance is infinite as well, giving $H_\textnormal{tot}(\hat\theta)=0$, which means failure of the estimation procedure. On the other hand, absence of one of 100 fluorophores is not important for the task of visualization of a biological object, for which the fluorophores are intended. Thus, we arrive at another measure of the localization efficiency where the individual precisions are averaged rather than the individual variances. 
\begin{definition}\label{definition:Hind}
For any unbiased estimator $\hat\theta$ of an $M$-dimensional parameter $\theta$, the average individual precision is 
\begin{equation}\label{Hind}
H_\textnormal{ind}(\hat\theta) = \frac1M\sum_{i=1}^M\frac1{\Covel(\hat\theta)_{ii}}.
\end{equation}
\end{definition}

In the considered case of one dark fluorophore, its individual precision $1/\Covel(\hat\theta)_{ii}$ is zero and does not contribute to the average individual precision defined by Eq. (\ref{Hind}). The upper bound for the average individual precision is set according to Eq. (\ref{CR}), by the inequality $H_\text{ind}(\hat\theta) \le \bar F_\text{ind}(\theta)$, where 
\begin{equation}\label{Fbarind}
\bar F_\text{ind}(\theta) = \frac1M\sum_{i=1}^M\frac1{\left[\dirprod{\mathbf{F}}^{-1} (\theta)\right]_{ii}}
\end{equation}
is the average individual Fisher information. This bound is asymptotically saturated by the maximum-likelihood estimator. 

Both $H_\text{ind}(\hat\theta)$ and $\bar F_\text{ind}(\theta)$ share the drawback of being non-invariant with respect to rotations in the parameter space. Aiming at an invariant measure, we introduce one more measure of localization efficiency. First, we define the precision matrix \cite{Bernardo-Smith} $\Prec(\hat\theta) = \Cov(\hat\theta)^{-1}$, whose diagonal entry $\Precel(\hat\theta)_{ii}$ is larger than or equal to the individual precision $1/\Covel(\hat\theta)_{ii}$. If we consider an orthogonal rotation in the parameter space, $\varphi = \mathbf{O}^T\theta$, where $\varphi=\{\varphi_1,...,\varphi_M\}$, both parameter vectors are understood as column vectors, $T$ denotes a transposition, and $\mathbf{O}$ is an orthogonal matrix such that $\Cov(\hat\varphi)=\mathbf{O}^T\Cov(\hat\theta)\mathbf{O}$ is diagonal with $\hat\varphi=\mathbf{O}^T\hat\theta$ an unbiased estimator for $\varphi$, then $\Precel(\hat\varphi)_{ii}=1/\Covel(\hat\varphi)_{ii}$. We will call $\varphi$ the eigenparameter of the localization task. Now, we define the third measure of localization efficiency as follows.
\begin{definition}\label{definition:Heig}
For any unbiased estimator $\hat\theta$ of an $M$-dimensional parameter $\theta$, the average eigenparameter precision is 
\begin{equation}\label{Heig}
H_\textnormal{eig}(\hat\theta) = \frac1M\Tr\{\Prec(\hat\theta)\}.
\end{equation}
\end{definition}

The upper bound for this quantity can be found from Eq. (\ref{CR}). For any two invertible Hermitian matrices $\mathbf{A}$ and $\mathbf{B}$, $\mathbf{A}\ge \mathbf{B}$ if and only if $\mathbf{A}^{-1}\le \mathbf{B}^{-1}$ \cite{HornJohnson}. Note that the covariance, precision and Fisher information matrices are real symmetric by definition, and as such are Hermitian. Therefore, Eq. (\ref{CR}) can be rewritten as $\Prec(\hat\theta) \le \dirprod{\mathbf{F}}(\theta)$. Taking a trace of both sides of this inequality, we obtain $H_\text{eig}(\hat\theta) \le \bar F(\theta)$, where 
\begin{equation}
\bar F(\theta) = \frac1M\Tr\left\{\dirprod{\mathbf{F}}(\theta)\right\}   
\end{equation}
is the average Fisher information. This bound is asymptotically saturated by the maximum-likelihood estimator. The matrix $\mathbf{O}$ diagonalizing the covariance matrix coincides asymptotically with that diagonalizing the Fisher information matrix. However, the matrix $\mathbf{O}$ is itself dependent on the unknown parameters and is typically unknown when the localization procedure starts. 

Two commentaries on the introduced measures are necessary. It is typical for multiparameter quantum metrology to introduce a weight matrix assigning different statistical weights to different parameters and to consider the trace of the product of the covariance matrix with the weight matrix as a measure of the multiparameter estimation success \cite{Albarelli20,Barbieri22}. We do not see why, in the task of multiple source localization, one source might have preference before another and therefore assign equal weights to the coordinates of all sources, which leads us to a simple summation in all three definitions.

It should also be understood that the measures given by Definitions \ref{definition:Hind} and \ref{definition:Heig} are not intended for use in the fitting algorithms. The quantities $H_\text{ind}(\hat\theta)$ and $H_\text{eig}(\hat\theta)$ may be very large in the case where only one source is localized with very high precision but the positions of all other sources are totally unknown, which is not the best localization outcome. We assume that the fitting is realized by the maximum likelihood estimation and the measure of localization efficiency is applied afterwards to the results of such estimation. The evaluation of efficiency here is similar to that in quantum teleportation \cite{Braunstein98,Horoshko00}: the verifier knowing the exact positions of the sources gives a sample to the microscopist, who returns the estimated source positions, for which the verifier calculates the measure of efficiency, similar in this respect to the teleportation fidelity. 

The three introduced measures are ordered as follows:
\begin{equation}\label{ordering}
H_\text{tot}(\hat\theta)\le H_\text{ind}(\hat\theta)\le H_\text{eig}(\hat\theta).
\end{equation}
The first inequality follows from the convexity of the function $f(x)=1/x$ on the interval $x\in(0,+\infty)$, resulting in 
\begin{equation}
f\left(\frac1M\sum\limits_{i=1}^Mx_i\right)\le \frac1M\sum\limits_{i=1}^M f(x_i).
\end{equation}
Substituting $x_i=\Cov(\hat\theta)_{ii}$, we obtain the first inequality. 
To prove the second inequality, consider an invertible real symmetric $M\times M$ matrix $\mathbf{A}$ and a unit $M\times 1$ vector $v$. Let $\mathbf{O}$ be an orthogonal matrix that brings $\mathbf{A}$ to the diagonal form, $\mathbf{A}=\mathbf{O}\mathbf{D}\mathbf{O}^T$, where $\mathbf{D}$ is diagonal. Then $v^T\mathbf{A}v=\tilde{v}^T\mathbf{D}\tilde{v}$, where $\tilde{v}=\mathbf{O}^Tv$. Since $\tilde{v}$ is also a unit vector, the numbers $\tilde{v}_i^2$ sum up to unity and can be considered as statistical weights. From the convexity of the function $f(x)=1/x$ on the interval $x\in(0,+\infty)$, as above, we obtain 
\begin{equation}
f\left(\sum\limits_{i=1}^M\tilde{v}_i^2D_{ii}\right)\le \sum\limits_{i=1}^M \tilde{v}_i^2 f(D_{ii}),
\end{equation}
or $(\tilde{v}^T\mathbf{D}\tilde{v})^{-1}\le \tilde{v}^T\mathbf{D}^{-1}\tilde{v}$ and, as a consequence, $(v^T\mathbf{A}v)^{-1}\le v^T\mathbf{A}^{-1}v$. Choosing a vector $v_i=\delta_{im}$ for some $1\le m\le M$, we obtain the inequality $1/A_{mm}\le(\mathbf{A}^{-1})_{mm}$. Substituting $A_{mm}=\Covel(\hat\theta)_{mm}$ and summing up over $m$, we obtain the second inequality of Eq.~(\ref{ordering}).

\subsection{Quantum Cram\'er-Rao bound \label{sec:QFI}}
For a given state of the object, the conditional probabilities, Eq. (\ref{pr}), and therefore the parameter variances calculated from them, depend on four quantities: $\dirprod{\rho}_P$, $\dirprod{\rho}_E$, $\dirprod{U}_\theta$, and $\dirprod{\Lambda}(\dirprod{r})$. Each of these quantities can be manipulated to optimize the estimation. In particular, optimization of $\dirprod{\Lambda}(\dirprod{r})$ corresponds to the choice of the best measurement for the given state of the probe $\tilde{\dirprod{\rho}}_P(\theta)$. For this purpose, quantum Fisher information matrix $\dirprod{\mathbf{Q}}(\theta)$ is defined as \cite{Helstrom-book,HolevoBook82,Paris09}
\begin{equation}\label{Qij}
\dirprod{Q}_{ij}(\theta) = \frac12\Tr\left\{
\left(\mathcal{L}_i\mathcal{L}_j+\mathcal{L}_j\mathcal{L}_i\right) \tilde{\dirprod{\rho}}_P(\theta)
\right\},   
\end{equation}
where $\mathcal{L}_i$ is the symmetric logarithmic derivative operator defined by the equation
\begin{equation}\label{SLD}
\frac{\partial}{\partial\theta_i} \tilde{\dirprod{\rho}}_P(\theta) = \frac12\left[\mathcal{L}_i\tilde{\dirprod{\rho}}_P(\theta) + \tilde{\dirprod{\rho}}_P(\theta)\mathcal{L}_i\right]. 
\end{equation}

Quantum Fisher information matrix satisfies the condition $\dirprod{\mathbf{F}}^{-1}(\theta)\ge \dirprod{\mathbf{Q}}^{-1}(\theta)$, where $\dirprod{\mathbf{F}}^{-1}(\theta)$ is calculated for the same state of the probe with an arbitrary POVM $\dirprod{\Lambda}(\dirprod{r})$. Moreover, there is often an optimal POVM $\dirprod{\Lambda}_\text{opt}(\dirprod{r})$ (but not always \cite{Albarelli20}) for which $\dirprod{\mathbf{F}}_\text{opt}^{-1}(\theta) = \dirprod{\mathbf{Q}}^{-1}(\theta)$. This means that the quantity $\Tr\{\dirprod{\mathbf{Q}}^{-1}(\theta)\}/M$ gives the lower bound of the average variance for the optimized measurement. As mentioned above, manipulation of $\dirprod{\rho}_P$ by entangling the probes is known to further minimize the estimation error in some tasks of quantum metrology \cite{Giovannetti06}. In this paper, we show that the superresolution obtained by making the sources blink corresponds to manipulation of  $\dirprod{\rho}_O(\theta)$ without affecting the value of the parameter $\theta$.

\section{Additivity and weak convexity of Fisher information matrix \label{sec:Additivity}}

Fisher information matrix possesses some remarkable properties, which are summarized below. First, we reformulate its standard definition \cite{Cox&Hinkley,Lehmann-Casella} in terms of a functional of the probability density.

\begin{definition}\label{definition:F}
Given two positive integers $\ell$ and $M$ and a random variable $X\in\mathbb{R}^\ell$ depending on parameter $\theta\in\mathbb{R}^M$, so that the probability density for $X$ under condition that the parameter is $\theta$ is $q(X|\theta)$, the Fisher information matrix for the information contained in $X$ about $\theta$ is the $M\times M$-matrix-valued functional $\bm{\mathcal{F}}_M^{(\ell)}\left[q(X|\theta)\right]$, whose matrix elements are
\begin{equation}\label{Flm}
\mathcal{F}_M^{(\ell)}\left[q(X|\theta)\right]_{ij} = \int_{\mathbb{R}^\ell} \frac1{q(X|\theta)}\frac{\partial q(X|\theta)}{\partial\theta_i}\frac{\partial q(X|\theta)}{\partial\theta_j} d^\ell X.
\end{equation}
\end{definition}

Now, we formulate the property of additivity for this functional.

\newtheorem{lemma}{Lemma}
\begin{lemma}\label{lemma:additivity}
The Fisher information matrix possesses the property of
additivity: 
    \begin{eqnarray}
&&\bm{\mathcal{F}}_M^{(\ell_1+\ell_2)}\left[q_1(X_1|\theta)q_2(X_2|\theta)\right]\\\nonumber
&&=\bm{\mathcal{F}}_M^{(\ell_1)}\left[q_1(X_1|\theta)\right] +\bm{\mathcal{F}}_M^{(\ell_2)}\left[q_2(X_2|\theta)\right],
\end{eqnarray}
where $X_1\in\mathbb{R}^{\ell_1}$ and $X_2\in\mathbb{R}^{\ell_2}$.
\end{lemma}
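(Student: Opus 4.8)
The plan is to pass to logarithmic derivatives (score functions) and exploit the fact that the expected score of a normalized density vanishes. First I would rewrite the integrand of Eq.~(\ref{Flm}) using the identity $\frac1{q}\frac{\partial q}{\partial\theta_i}=\frac{\partial\log q}{\partial\theta_i}$, so that each matrix element $\mathcal{F}_M^{(\ell)}[q]_{ij}$ becomes the expectation under $q$ of the product of two score components. For the product density $q(X_1,X_2|\theta)=q_1(X_1|\theta)\,q_2(X_2|\theta)$ the logarithm is additive, $\log q=\log q_1+\log q_2$, and hence so is the score, $\frac{\partial\log q}{\partial\theta_i}=\frac{\partial\log q_1}{\partial\theta_i}+\frac{\partial\log q_2}{\partial\theta_i}$.

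Next I would insert this splitting into the integrand and expand the product of the two scores into four terms: two \emph{diagonal} terms, each quadratic in the score of a single factor, and two \emph{cross} terms mixing the scores of $q_1$ and $q_2$. Because $q=q_1 q_2$, every resulting integral factorizes into an $X_1$-integral times an $X_2$-integral. For the diagonal term built from $q_1$, the $X_2$-integral is $\int_{\mathbb{R}^{\ell_2}} q_2\,d^{\ell_2}X_2=1$ by normalization, and what remains is exactly $\mathcal{F}_M^{(\ell_1)}[q_1]_{ij}$; symmetrically, the other diagonal term yields $\mathcal{F}_M^{(\ell_2)}[q_2]_{ij}$.

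The crux is the vanishing of the two cross terms, and this is the step I expect to be the main (though mild) obstacle. Each cross term factorizes into a product containing a factor of the form $\int_{\mathbb{R}^{\ell_1}} q_1\,\frac{\partial\log q_1}{\partial\theta_i}\,d^{\ell_1}X_1=\int_{\mathbb{R}^{\ell_1}}\frac{\partial q_1}{\partial\theta_i}\,d^{\ell_1}X_1$. Under the regularity assumption invoked after Eq.~(\ref{Fij})---that each density is smooth and strictly positive---I may differentiate the normalization condition under the integral sign, obtaining $\int_{\mathbb{R}^{\ell_1}}\frac{\partial q_1}{\partial\theta_i}\,d^{\ell_1}X_1=\frac{\partial}{\partial\theta_i}\int_{\mathbb{R}^{\ell_1}} q_1\,d^{\ell_1}X_1=\frac{\partial}{\partial\theta_i}(1)=0$. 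Thus each score has zero mean, both cross terms drop out, and the integrand contributes only the two diagonal pieces. Since the identity $\mathcal{F}_M^{(\ell_1+\ell_2)}[q_1q_2]_{ij}=\mathcal{F}_M^{(\ell_1)}[q_1]_{ij}+\mathcal{F}_M^{(\ell_2)}[q_2]_{ij}$ then holds for every pair of indices $(i,j)$, the claimed matrix additivity follows.
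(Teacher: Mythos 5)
Your proof is correct and is precisely the standard score-function argument: the paper does not prove this lemma itself but defers to the literature (Lehmann--Casella), where the same decomposition into diagonal and cross terms, with the cross terms killed by the zero-mean property of the score, is used. No gaps; the only hypothesis you rely on beyond positivity is the interchange of $\partial/\partial\theta_i$ with the integral, which is the same regularity assumption already implicit in the paper's Definition~\ref{definition:F}.
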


The proof can be found in the literature \cite{Lehmann-Casella}. Next, we formulate the property of convexity in the single-parameter case proven by Cohen~\cite{Cohen68}.

\begin{lemma}[Cohen]\label{lemma:convexity}
In the single-parameter case, $M=1$, the Fisher information matrix (reducing to a scalar in this case) possesses the property of strict convexity:
\begin{eqnarray}
     &&\mathcal{F}_1^{(\ell)}\left[\gamma q_1(X|\theta)+(1-\gamma)q_2(X|\theta)\right] \\\nonumber
     &&< \gamma\mathcal{F}_1^{(\ell)}\left[q_1(X|\theta)\right] +(1-\gamma)\mathcal{F}_1^{(\ell)}\left[q_2(X|\theta)\right], 
\end{eqnarray}
for any $0<\gamma<1$ and $q_1(X|\theta)\not\equiv q_2(X|\theta)$.
\end{lemma}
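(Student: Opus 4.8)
The plan is to reduce the functional inequality to an elementary pointwise convexity and then integrate. First I would rewrite the scalar Fisher information of Definition~\ref{definition:F} in its score form, $\mathcal{F}_1^{(\ell)}[q] = \int_{\mathbb{R}^\ell} s_q^2\, q\, d^\ell X$ with the score $s_q = (\partial_\theta q)/q = \partial_\theta \ln q$, so that the integrand is manifestly the quantity whose convexity must be controlled. The core observation is that the map $(a,b)\mapsto b^2/a$, defined for $a>0$, is jointly convex; equivalently, writing $q=\gamma q_1+(1-\gamma)q_2$ and $\partial_\theta q = \gamma\,\partial_\theta q_1+(1-\gamma)\,\partial_\theta q_2$, the Engel (Cauchy--Schwarz) inequality gives pointwise
\begin{equation}
\frac{(\partial_\theta q)^2}{q} \le \gamma\frac{(\partial_\theta q_1)^2}{q_1} + (1-\gamma)\frac{(\partial_\theta q_2)^2}{q_2},
\end{equation}
with equality at a point $X$ if and only if $s_1(X)=s_2(X)$, where $s_i=\partial_\theta\ln q_i$.

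A more transparent route to the same pointwise bound, which also exposes the equality case cleanly, is to note that the score of the mixture is a convex combination of the component scores. Indeed $s_q = w_1 s_1 + w_2 s_2$ with nonnegative weights $w_1 = \gamma q_1/q$ and $w_2 = (1-\gamma)q_2/q$ summing to unity. Applying the strict convexity of $t\mapsto t^2$ pointwise (Jensen's inequality) yields $s_q^2 \le w_1 s_1^2 + w_2 s_2^2$, and integrating this against the measure $q\, d^\ell X$ collapses the weights, since $w_i\,q$ equals $\gamma q_1$ or $(1-\gamma)q_2$, directly producing the right-hand side $\gamma\mathcal{F}_1^{(\ell)}[q_1] + (1-\gamma)\mathcal{F}_1^{(\ell)}[q_2]$. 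This establishes the non-strict inequality and identifies the equality condition as $s_1(X)=s_2(X)$ for almost every $X$ in the common support.

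The main obstacle is upgrading this to the \emph{strict} inequality, i.e. showing that the hypothesis $q_1\not\equiv q_2$ forbids the equality case. The integrated bound is saturated only if the two logarithmic derivatives coincide almost everywhere, so the crux is an identifiability statement: equal scores must force equal densities. This is exactly where the structure of the estimation model enters, and where I expect the real work to lie. In the setting underlying Cohen's theorem the parameter acts by translation, so $s_i$ is, up to sign, the spatial logarithmic derivative of the density; equality $\partial_\theta\ln q_1=\partial_\theta\ln q_2$ then forces $\ln q_1-\ln q_2$ to be constant in $X$, whence $q_1=c\,q_2$, and normalization fixes $c=1$, contradicting $q_1\not\equiv q_2$. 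I would therefore finish by first isolating this logarithmic-derivative identity from the saturated Jensen step, and then invoking normalization to eliminate the multiplicative constant; the delicate part is verifying that the equality set cannot have full measure without collapsing the two families onto one another, which is precisely the content that makes the convexity strict rather than merely weak.
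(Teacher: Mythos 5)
The paper offers no proof of this lemma at all---it is stated with a citation to Cohen's 1968 paper---so your argument must be judged on its own terms. The reduction you make is the standard one, and the non-strict part is sound: the score of the mixture is the convex combination $s_q=w_1s_1+w_2s_2$ with weights $w_1=\gamma q_1/q$ and $w_2=(1-\gamma)q_2/q$, Jensen's inequality for $t\mapsto t^2$ gives the pointwise bound, and integrating against $q\,d^\ell X$ collapses the weights to $\gamma\mathcal{F}_1^{(\ell)}[q_1]+(1-\gamma)\mathcal{F}_1^{(\ell)}[q_2]$. You also correctly locate the equality case at $s_1=s_2$ almost everywhere; since the paper assumes the densities are everywhere positive, both weights are positive and this is indeed the full equality condition.

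The gap is in the strictness step, and you have only half-closed it. Equality of the scores $\partial_\theta\ln q_1=\partial_\theta\ln q_2$ forces $\ln q_1-\ln q_2$ to be constant in $\theta$, \emph{not} in $X$: it gives $q_1(X|\theta)=h(X)\,q_2(X|\theta)$ with $h$ independent of $\theta$ and $\int h\,q_2\,d^\ell X=1$ for all $\theta$, and normalization does not force $h\equiv1$ for a general parametric family. Concretely, take $X=(X_1,X_2)$, $q_2(X|\theta)=g(X_1-\theta)u(X_2)$ and $q_1(X|\theta)=g(X_1-\theta)v(X_2)$ with $u\neq v$ two normalized densities: then $q_1\not\equiv q_2$ yet all three Fisher informations equal $\int (g')^2/g\,dX_1$, so the claimed strict inequality fails. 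Your patch---invoking a translation structure so that the score is, up to sign, a spatial logarithmic derivative, whence $h$ is constant and normalization gives $h=1$---is exactly the hypothesis of Cohen's original location-family setting and is satisfied in the paper's actual application (Definition~\ref{definition:micro} imposes translation invariance), but it is not among the hypotheses of the lemma as stated. As written, your proof establishes strictness only under that extra assumption, or equivalently under the weaker condition that the scores of $q_1$ and $q_2$ differ on a set of positive measure; you should state one of these explicitly rather than leaving the identifiability step to ``the setting underlying Cohen's theorem.''
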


This property can be extended to the multiple-parameter case by the following theorem.

\newtheorem{theorem}{Theorem}
\begin{theorem}\label{theorem:convexity}
Given two positive integers $K\ge2$ and $d$, in the general case of $M$ parameters split into $K$ groups of $d$ parameters each, $\theta=\{\bar\theta_1,...,\bar\theta_K\}$, where $\bar\theta_k\in\mathbb{R}^d$, $M=Kd$, and given real numbers $\mu_1$,...,$\mu_K$ belonging to $[0,1)$ and summing up to unity, the Fisher information matrix possesses the following properties. 
\begin{enumerate}
\item[A.] Strict weak matrix convexity in the case of equal weights $\mu_k=1/K$:
\begin{equation}\label{convexity}    \bm{\mathcal{F}}_M^{(\ell)}\left[\frac1K\sum\limits_{k=1}^{K} q_k(X|\theta)\right] < \frac1K\sum\limits_{k=1}^{K} \bm{\mathcal{F}}_M^{(\ell)}\left[q_k(X|\theta)\right],
\end{equation}
under condition that the probability density $q_k(X|\theta)$ depends on $\bar\theta_k$ only and 
\begin{equation}\label{condition}
    \mathcal{F}_M^{(\ell)} \left[q_k(X|\theta)\right]_{ij} = \left\{ \begin{array}{ll}
       f_0,  &  i=j\in[(k-1)d+1,kd],\\
       0,  & \mathrm{otherwise},
    \end{array}\right.
\end{equation}
where $f_0$ is a positive number independent of $k$.
\item[B.] Strict trace convexity in the general case of arbitrary weights $\mu_k$:
\begin{equation}\label{trace-convexity}    
\Tr\left\{\bm{\mathcal{F}}_M^{(\ell)} \left[\sum\limits_{k=1}^{K} \mu_k q_k(X|\theta)\right]\right\} < \sum\limits_{k=1}^{K}\mu_k \Tr\left\{\bm{\mathcal{F}}_M^{(\ell)}\left[q_k(X|\theta) \right]\right\}.
\end{equation}
\end{enumerate}
\end{theorem}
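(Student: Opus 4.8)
The plan is to reduce both parts to a single pointwise observation: at each sample value $X$ the integrand of $\mathcal{F}_M^{(\ell)}[q]_{ij}$ is the form $(\partial_i q)(\partial_j q)/q$, which is quadratic-over-linear in the pair $(\nabla_\theta q,q)$ and hence convex in that pair for $q>0$. Let $G_k=\{(k-1)d+1,\dots,kd\}$ denote the block of indices belonging to source $k$ and $g(i)$ the source containing index $i$. The structural hypothesis that $q_k$ depends on $\bar\theta_k$ only is decisive: it forces $\partial_i q_m=0$ for $m\neq g(i)$, so that for the mixture $q=\sum_k\mu_k q_k$ one has $\partial_i q=\mu_{g(i)}\,\partial_i q_{g(i)}$ and, in particular, $\mathcal{F}_M^{(\ell)}[q_m]_{ii}=0$ unless $m=g(i)$. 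I would record this reduction at the outset, since it drives everything that follows.

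I would establish Part B first: it is the cleaner statement and it supplies the strictness needed in Part A. Fix a diagonal index $i$ and put $k_0=g(i)$. The disjoint dependence gives $\sum_k\mu_k\,\mathcal{F}_M^{(\ell)}[q_k]_{ii}=\mu_{k_0}\,\mathcal{F}_M^{(\ell)}[q_{k_0}]_{ii}$, whereas
\[
  \mathcal{F}_M^{(\ell)}[q]_{ii}=\mu_{k_0}^{2}\int_{\mathbb{R}^\ell}\frac{(\partial_i q_{k_0})^{2}}{\sum_k\mu_k q_k}\,d^\ell X .
\]
Since every $q_k$ is strictly positive by the standing nonvanishing assumption and $K\ge2$, the denominator satisfies $\sum_k\mu_k q_k>\mu_{k_0}q_{k_0}$ pointwise; enlarging the denominator strictly decreases the integral, so $\mathcal{F}_M^{(\ell)}[q]_{ii}<\mu_{k_0}\mathcal{F}_M^{(\ell)}[q_{k_0}]_{ii}$ whenever $\partial_i q_{k_0}\not\equiv0$. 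Summing over $i$ gives (\ref{trace-convexity}), the strictness being inherited from any single coordinate that carries nonzero information.

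For Part A I would work with the quadratic form $v^{T}\bm{\mathcal{F}}_M^{(\ell)}[q]\,v$ for an arbitrary $v\in\mathbb{R}^{M}$ and group the gradient by source, writing $w_k=\sum_{i\in G_k}v_i\,\partial_i q_k$. With $\mu_k=1/K$ the reduction above yields
\[
  v^{T}\bm{\mathcal{F}}_M^{(\ell)}[q]\,v=\frac1K\int_{\mathbb{R}^\ell}\frac{\bigl(\sum_k w_k\bigr)^{2}}{\sum_k q_k}\,d^\ell X .
\]
The Cauchy--Schwarz inequality $\bigl(\sum_k w_k\bigr)^{2}\le\bigl(\sum_k q_k\bigr)\sum_k w_k^{2}/q_k$ then bounds this by $\tfrac1K\sum_k\int w_k^{2}/q_k$; expanding $w_k^{2}=\sum_{i,j\in G_k}v_i v_j\,\partial_i q_k\,\partial_j q_k$ and invoking hypothesis (\ref{condition}), which makes each single-source block equal to $f_0$ times the identity, collapses the bound to $\tfrac{f_0}{K}\sum_i v_i^{2}$. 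As (\ref{condition}) also gives $\tfrac1K\sum_k\bm{\mathcal{F}}_M^{(\ell)}[q_k]=\tfrac{f_0}{K}\,\mathbf{I}_M$, this is precisely (\ref{convexity}) in its non-strict (Loewner) form $\bm{\mathcal{F}}_M^{(\ell)}[q]\le\tfrac1K\sum_k\bm{\mathcal{F}}_M^{(\ell)}[q_k]$.

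The strictness is the subtle point, and it is what makes the statement only \emph{weak} matrix convexity. By the Lagrange identity the Cauchy--Schwarz step is an equality exactly where all group scores $w_k/q_k$ coincide, and such directions genuinely occur: for $K=2$ a common rigid displacement of both sources orthogonal to their separation saturates the bound, so the difference $\tfrac1K\sum_k\bm{\mathcal{F}}_M^{(\ell)}[q_k]-\bm{\mathcal{F}}_M^{(\ell)}[q]$ cannot be positive definite in general. What survives is that this difference is positive semidefinite, as just shown, and nonzero, because its trace is strictly positive by Part B; I would read the symbol ``$<$'' in (\ref{convexity}) as this combination. The main obstacle is thus conceptual rather than computational --- recognizing that the correct assertion is the nonzero positive-semidefinite one and that full definiteness is false --- after which the only routine items left are the justification of differentiating under the integral sign and the evaluation of the single-source Fisher blocks used in (\ref{condition}).
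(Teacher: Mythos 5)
Your route is genuinely different from the paper's. The paper diagonalizes the mixture Fisher matrix $\mathbf{F}=\bm{\mathcal{F}}_M^{(\ell)}\left[\sum_k\mu_k q_k\right]$ by an orthogonal rotation, applies the single-parameter strict convexity of Lemma~\ref{lemma:convexity} to each eigenparameter to get $\{\mathbf{O}^T(\mathring{\mathbf{F}}-\mathbf{F})\mathbf{O}\}_{ii}>0$ for every $i$, and then uses the fact that under condition~(\ref{condition}) $\mathring{\mathbf{F}}=f_0\mathbf{I}_M/K$ is rotation invariant, so the difference is diagonal in that basis and hence positive definite; part B is obtained by summing the same diagonal inequalities and using trace invariance. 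You instead work directly with quadratic forms and Cauchy--Schwarz. One consequence is a loss of generality in part B: your argument for Eq.~(\ref{trace-convexity}) leans on the hypothesis that $q_k$ depends on $\bar\theta_k$ only, whereas the paper's trace inequality is proved (and is meant to hold) for arbitrary mixtures without that block structure. As stated, your part B covers only the case needed for the application, not the full statement.

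On part A you and the paper genuinely disagree about what ``$<$'' delivers, and your counterexample direction appears to be correct. For $K=2$, $d=2$, an isotropic Gaussian PSF satisfying Eq.~(\ref{condition}), and $v$ a common displacement of both sources orthogonal to their separation, the component scores $w_k/q_k$ coincide identically, Cauchy--Schwarz saturates, and a direct computation gives $v^T(\mathring{\mathbf{F}}-\mathbf{F})v=0$, so the difference is singular and cannot be positive definite. This also pinpoints where the paper's own argument is fragile: Lemma~\ref{lemma:convexity} asserts strictness whenever $q_1\not\equiv q_2$, which is Cohen's condition for \emph{location} families, but for a general one-parameter family equality holds whenever the scores $\partial_\theta\log q_1$ and $\partial_\theta\log q_2$ agree almost everywhere, which can happen with $q_1\neq q_2$ --- exactly your common-displacement direction, which is moreover an eigendirection of $\mathbf{F}$ by symmetry. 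So your reading of Eq.~(\ref{convexity}) as ``positive semidefinite difference with strictly positive trace'' is the defensible one, but be aware that it is weaker than what the paper's proof claims to establish; a fully strict Loewner inequality would require excluding directions along which all component scores coincide. Finally, make the implicit nondegeneracy explicit (some $\partial_i q_{g(i)}\not\equiv 0$, which under Eq.~(\ref{condition}) follows from $f_0>0$), since both your strictness claims rest on it.
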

\begin{proof}
Let us denote the matrix in the left-hand side of Eq.~(\ref{trace-convexity}) by $\mathbf{F}$ and that in the right-hand side by $\mathring{\mathbf{F}} = \sum_k\mu_k \bm{\mathcal{F}}_M^{(\ell)}\left[q_k(X|\theta)\right]$. $\mathbf{F}$ is a real symmetric matrix and as such can be brought to a diagonal form by an orthogonal rotation in the parameter space: $\mathbf{F} =\mathbf{O}\mathbf{D}_F\mathbf{O}^T$, where $\mathbf{D}_F$ is diagonal and $\mathbf{O}$ is orthogonal. Consider a linear transformation from the (column) vector of parameters $\theta=\{\theta_1,...,\theta_M\}$ to the new vector of parameters $\varphi=\{\varphi_1,...,\varphi_M\}$, $\varphi=\mathbf{O}^T\theta$. The conditional probability density of $X$ for given $\varphi$ is $\tilde q(X|\varphi)=q(X|\mathbf{O}\varphi)$. Let us denote the Fisher information matrix for the information contained in $X$ about $\varphi$ by $\tilde{\mathbf{F}}$. According to Definition~\ref{definition:F}, its elements are 
\begin{eqnarray}\nonumber
\mathcal{F}_M^{(\ell)}\left[\tilde q(X|\varphi)\right]_{ij} &=& \int_{\mathbb{R}^\ell} \frac1{\tilde q(X|\varphi)}\frac{\partial \tilde q(X|\varphi)}{\partial\varphi_i}\frac{\partial\tilde q(X|\varphi)}{\partial\varphi_j} d^\ell X \\
&=& \sum_{kl}  \mathcal{F}_M^{(\ell)}\left[q(X|\theta)\right]_{kl} \frac{\partial\theta_k}{\partial\varphi_i} \frac{\partial\theta_l}{\partial\varphi_j}\\\nonumber
&=& \sum_{kl}  \mathcal{F}_M^{(\ell)}\left[q(X|\theta)\right]_{kl} O_{ki} O_{lj},
\end{eqnarray}
or $\tilde{\mathbf{F}} = \mathbf{O}^T\mathbf{F}\mathbf{O} = \mathbf{D}_F$. This means that $\varphi$ can be considered as a vector of ``eigenparameters'' of the matrix $\mathbf{F}$. Let us fix all eigenparameters except for one, $\varphi_i$, and write the conditional probability density of $X$ for given $\varphi_i$ as $\tilde q'(X|\varphi_i)=\tilde q(X|\varphi)$. The Fisher information contained in $X$ about $\varphi_i$ is
\begin{eqnarray}\label{rotation}
\mathcal{F}_1^{(\ell)}\left[\tilde q'(X|\varphi_i)\right] &=& \mathcal{F}_M^{(\ell)}\left[\tilde q(X|\varphi)\right]_{ii} \\\nonumber
&=& \sum_{mn}\mathcal{F}_M^{(\ell)} \left[q(X|\theta)\right]_{mn} O_{mi} O_{ni}
\end{eqnarray}
According to Lemma~\ref{lemma:convexity}, it satisfies the strict convexity inequality 
\begin{equation}\label{ineq1}
\mathcal{F}_1^{(\ell)}\left[
\sum\limits_{k=1}^{K} \mu_k\tilde q_k'(X|\varphi_i)\right]
< 
\sum\limits_{k=1}^{K}\mu_k\mathcal{F}_1^{(\ell)}\left[\tilde q_k'(X|\varphi_i)\right]. 
\end{equation} 
Applying Eq.~(\ref{rotation}) to each side of this inequality, we obtain 
\begin{eqnarray}\label{ineq2}
&&\sum_{mn}\mathcal{F}_M^{(\ell)} \left[
\sum\limits_{k=1}^{K} \mu_k q_k(X|\theta)
\right]_{mn} O_{mi} O_{ni}\\\nonumber
&&< 
\sum_{mn}O_{mi} O_{ni}\sum\limits_{k=1}^{K}
\mu_k\mathcal{F}_M^{(\ell)}\left[ q_k(X|\theta)\right]_{mn}, 
\end{eqnarray} 
or, in matrix form, as
\begin{equation}\label{ineq3}
\left\{\mathbf{O}^T\left(\mathring{\mathbf{F}}-\mathbf{F}\right) \mathbf{O}\right\}_{ii}>0. 
\end{equation} 

In the case of part A, we have $\mathring{\mathbf{F}} = f_0\mathbf{I}_M/K$, where $\mathbf{I}_M$ is the unit $M\times M$ matrix, which can be seen applying the condition of equal weights, $\mu_k=1/K$, and then the identity $\sum_{k}\mathcal{F}_M^{(\ell)}\left[ q_k(X|\theta)\right]_{mn} = f_0\delta_{mn}$ following from Eq.~(\ref{condition}). Since inequality (\ref{ineq3}) holds for any $i$ and both matrices $\mathbf{O}^T\mathring{\mathbf{F}}\mathbf{O}=\mathring{\mathbf{F}}$ and $\mathbf{O}^T\mathbf{F}\mathbf{O}=\mathbf{D}_F$ are diagonal, this inequality can be rewritten as $\mathbf{O}^T\left(\mathring{\mathbf{F}}-\mathbf{F}\right) \mathbf{O}>0$ or $\mathbf{F}<\mathring{\mathbf{F}}$, which gives Eq.~(\ref{convexity}). Part A of the theorem is proved.

In the general case, we sum Eq. (\ref{ineq3}) over $i$ and use the invariance of the trace with respect to orthogonal rotations to obtain $\Tr \mathbf{F} < \Tr \mathring{\mathbf{F}}$, which coincides with Eq. (\ref{trace-convexity}) and proves part B.
\end{proof}

Note that the area of applicability of the matrix convexity is slightly wider than that stated in part A of Theorem \ref{theorem:convexity}. Indeed, what we need for the proof, is that $\mathring{\mathbf{F}}$ is proportional to the unit matrix, which can be reached by replacing $f_0$ by $f_k$ in Eq. (\ref{condition}) and demanding, instead of equal weights, that $\mu_kf_k$ are independent of $k$.

\section{Localization of blinking sources \label{sec:Localization}}

Now, we consider $K$ point-like mutually incoherent sources of light located at unknown locations $\{\bar{x}_k,\bar{y}_k\}$ on the object plane of the microscope. The sources have brightnesses $B_k$, summing up to the full brightness $B=\sum_kB_k$. We define relative brightnesses $\mu_k=B_k/B$ determining the part of a given source in the total photon flux. We consider two scenarios for estimating the $K$ unknown locations, see Fig.~\ref{fig:counts}.

\begin{figure}[!ht]
\centering
\includegraphics[width=\linewidth]{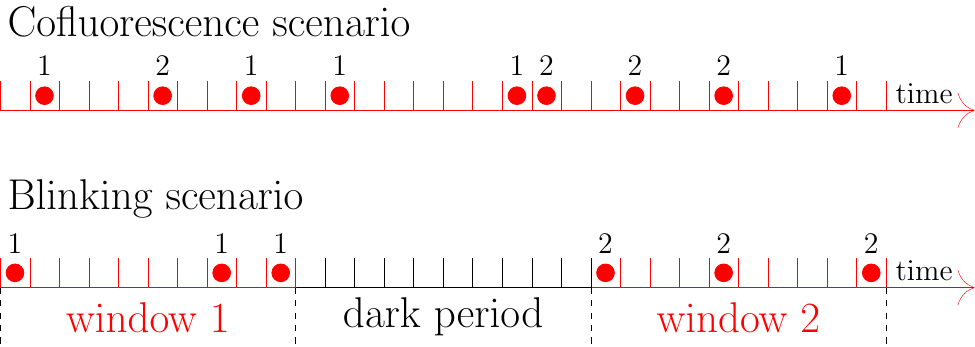}
\caption{Photocounts (red balls) appearing in certain time bins in two scenarios of fluorescence. The number above a ball shows from what source the photon comes. \label{fig:counts}}
\end{figure}

In the first scenario, which we call ''cofluorescence'', all sources can emit light simultaneously. Every detected photon can come from the $k$th source with the probability $\mu_k$, so that the state of the source in each time bin is $\rho_O^\text{coflu}=\sum_k\mu_k\rho_{Ok}$, where $\rho_{Ok}$ is the state of the object with just the $k$th source active, and the state of the object samples in all time bins where the photons are detected is 
\begin{equation}\label{rhomix}
\dirprod{\rho}_O^\text{coflu}=\left(\sum_k\mu_k\rho_{Ok}\right)^{\otimes N}. 
\end{equation}
All other quantities in the right-hand side of Eq. (\ref{pr}) correspond to independent identical generalized measurements, that is they are simple $N$th powers of the single-sample quantities: 
\begin{equation}\label{iigm}
\dirprod{\rho}_P=\rho_P^{\otimes N},\,\, \dirprod{\rho}_E=\rho_E^{\otimes N},\,\, \dirprod{U}=U^{\otimes N},\,\, \dirprod{\Lambda}(\dirprod{r})=\prod\limits_n\Lambda(r_n).
\end{equation}
Substituting these expressions into Eq. (\ref{pr}), we obtain $\dirprod{p}(\dirprod{r}|\theta)=\prod_np(r_n|\theta)$, where 
\begin{equation}\label{prsample}
    p(r|\theta) = \Tr\left\{\Lambda(r)U\rho_O^\text{coflu}(\theta)\otimes\rho_P\otimes\rho_EU^\dagger\right\}
\end{equation}
is the single-sample conditional probability of measurement outcome $r=\{x,y\}$, which is the location of the detected photon. Substituting this probability into Eq. (\ref{Fij}), we find the Fisher information matrix in the form $\dirprod{\mathbf{F}}_\text{coflu}(\theta)=N\mathbf{F}_\text{coflu}(\theta)$, where $\mathbf{F}_\text{coflu}(\theta)$ is the single-sample Fisher information matrix with the elements
\begin{equation}\label{Fijsample}
F_{\text{coflu},ij}(\theta) = \int \frac1{p(r|\theta)}\frac{\partial p(r|\theta)}{\partial\theta_i}\frac{\partial p(r|\theta)}{\partial\theta_j} dr.
\end{equation}

In the second scenario, which we call ''blinking'', the object is affected so that just one source is active in a given time window separated from the neighboring windows by dark periods. Let us denote by $T_k^{(i)}$ the duration of the window of the $i$th activity of the $k$th source. This duration is assumed to be long enough for a successful identification of the source number by the localization algorithm. We unite all windows of the same source into the $k$th super-window of duration $T_k=\sum_iT_k^{(i)}$. For a sufficiently long experiment, the durations of activity of the sources are approximately equal, $T_k=T$. Therefore, each source emits the total number of $N_k=\mu_kN$ photons (we assume that $\mu_kN$ is an integer, which is a good approximation for a high $N$). The state of all $N_k$ samples in the $k$th super-window is $\rho_{Ok}^{\otimes N_k}$. The joint state of all $N$ samples of the source is
\begin{equation}\label{rhoblink}
\dirprod{\rho}_O^\text{blink} = \rho_{O1}^{\otimes N_1}\otimes...\otimes\rho_{OK}^{\otimes N_K}.
\end{equation}
All other quantities in the right-hand side of Eq. (\ref{pr}) are simple $N$th powers of the single-sample quantities, as in the previous scenario. Substituting these expressions into Eq. (\ref{pr}), we obtain $\dirprod{p}(\dirprod{r}|\theta)=\prod_{k=1}^K \dirprod{p}_k(\dirprod{r}_k|\theta)$, where $\dirprod{r}_k$ is a part of $\dirprod{r}$ with the outcomes in the $k$th super-window and $\dirprod{p}_k(\dirprod{r}_k|\theta) =\prod_{l=1}^{N_k}p_k(r_{kl}|\theta)$ with
\begin{equation}\label{prksample}
p_k(r|\theta) = \Tr\left\{\Lambda(r)U\rho_{Ok}(\theta)\otimes\rho_P\otimes\rho_EU^\dagger\right\}
\end{equation}
the single-sample conditional probability of measurement outcome when the $k$th source is only active. Here, the index $kl$ means the $l$th element in the $k$th super-window and equals $kl=l+\sum_{m=1}^{k-1}N_m$. Substituting this probability into Eq. (\ref{Fij}), we find the Fisher information matrix in the form $\dirprod{\mathbf{F}}_\text{blink}(\theta)=\sum_kN_k\mathbf{F}^{[k]}(\theta)$, where $\mathbf{F}^{[k]}(\theta)$ is the single-sample Fisher information matrix in the $k$th window, where just the $k$th source is active:
\begin{equation}\label{Fijksample}
F_{ij}^{[k]}(\theta) = \int \frac1{p_k(r|\theta)}\frac{\partial p_k(r|\theta)}{\partial\theta_i}\frac{\partial p_k(r|\theta)}{\partial\theta_j} dr.
\end{equation}
Actually, $\mathbf{F}^{[k]}(\theta)$ depends on the location $\bar\theta_k=\bar r_k$ of the $k$th source only, because other sources are inactive in this super-window.

In order to satisfy the condition of Theorem \ref{theorem:convexity}, we need to restrict our consideration to microscopy with certain properties, typical of ordinary microscopes withing their field of view \cite{Goodman-book}.

\begin{definition}\label{definition:micro}
Microscopy with magnification $g\in\mathbb{R}$ is translation and rotation invariant if the conditional distribution of the coordinates $r=\{x,y\}$ of a detected photon on the image plane under condition of a single source at point $\bar{r}=\{\bar{x},\bar{y}\}$ on the object plane, $p(r|\bar{r})$, satisfies the conditions
\begin{eqnarray}\label{translation}
p(r+g\bar{r}_0|\bar{r}+\bar{r}_0) &=& p(r|\bar{r}),\\\label{rotation2}
p(\mathbf{R}r|\mathbf{R}\bar{r}) &=& p(r|\bar{r}).
\end{eqnarray}
where $\bar{r}_0=\{\bar{x}_0,\bar{y}_0\}$ is an arbitrary real vector and $\mathbf{R}$ is a rotation matrix on the object plane.
\end{definition}

Such a microscopy is characterized by a special form of the Fisher information matrix.

\begin{lemma}\label{lemma:F2}
In translation and rotation invariant microscopy, the Fisher information matrix on the position of a single source  $\bar{r}=\{\bar{x},\bar{y}\}$ is proportional to the unit $2\times2$ matrix $\mathbf{I}$,
\begin{equation}\label{invariance} \bm{\mathcal{F}}_2^{(2)}\left[p(r|\bar{r})\right] = f_0 \mathbf{I},
\end{equation}
where $f_0$ is a positive number independent of $\bar{r}$.
\end{lemma}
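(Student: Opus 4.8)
The plan is to exploit the two invariance conditions one at a time: translation invariance to establish that the Fisher information matrix is a \emph{constant} (independent of $\bar{r}$), and rotation invariance to force that constant matrix to be isotropic. First I would use Eq.~(\ref{translation}) with $\bar{r}_0=-\bar{r}$ to write the conditional density as a shifted point-spread function, $p(r|\bar{r})=\psi(r-g\bar{r})$ where $\psi(s):=p(s|0)$. Then $\partial p(r|\bar{r})/\partial\bar{r}_i=-g\,(\partial\psi/\partial s_i)|_{s=r-g\bar{r}}$, and substituting into the defining integral (\ref{Flm}) for $\bm{\mathcal{F}}_2^{(2)}$, followed by the change of variables $s=r-g\bar{r}$ (whose Jacobian is unity since $\bar{r}$ is held fixed), gives
\[
\mathcal{F}_2^{(2)}[p(r|\bar{r})]_{ij}=g^2\int_{\mathbb{R}^2}\frac{1}{\psi(s)}\frac{\partial\psi}{\partial s_i}\frac{\partial\psi}{\partial s_j}\,d^2s.
\]
The right-hand side no longer depends on $\bar{r}$, so the matrix is constant; this is precisely the assertion that $f_0$ is independent of $\bar{r}$.

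Next I would bring in rotation invariance. Setting $\bar{r}=0$ in Eq.~(\ref{rotation2}) gives $\psi(\mathbf{R}s)=\psi(s)$ for every rotation $\mathbf{R}$, so $\psi$ is radially symmetric, $\psi(s)=\phi(|s|)$. I see two equivalent routes to the conclusion. The direct route writes $\partial\psi/\partial s_i=\phi'(|s|)\,s_i/|s|$, so the integrand above is proportional to $s_is_j$ times a radial factor; the off-diagonal entries ($i\neq j$) then vanish by parity in $s_1$, while $\mathcal{F}_{11}=\mathcal{F}_{22}$ by the symmetry of the integral under interchange of $s_1$ and $s_2$. The covariance route, which parallels the reparametrization argument already used in the proof of Theorem~\ref{theorem:convexity}, instead derives the transformation law $\bm{\mathcal{F}}_2^{(2)}[p(\cdot|\mathbf{R}\bar{r})]=\mathbf{R}\,\bm{\mathcal{F}}_2^{(2)}[p(\cdot|\bar{r})]\,\mathbf{R}^T$ by combining $p(r|\mathbf{R}\bar{r})=p(\mathbf{R}^T r|\bar{r})$ with the change of data variable $r\mapsto\mathbf{R}^T r$ (again area-preserving, $\det\mathbf{R}=1$). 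Together with the constancy established in the first step this yields $\bm{\mathcal{F}}=\mathbf{R}\,\bm{\mathcal{F}}\,\mathbf{R}^T$ for all $\mathbf{R}$.

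To finish I would invoke the fact that a real symmetric $2\times2$ matrix that commutes with every planar rotation must be a scalar multiple of $\mathbf{I}$ (the commutant of $SO(2)$ restricted to symmetric matrices is $\{a\mathbf{I}\}$), giving $\bm{\mathcal{F}}_2^{(2)}[p(r|\bar{r})]=f_0\mathbf{I}$. Positivity of $f_0$ follows because the Fisher information matrix is positive semidefinite and $f_0=g^2\int\psi^{-1}(\partial_{s_1}\psi)^2\,d^2s>0$ whenever the point-spread function is non-constant, which holds for the Gaussian model assumed throughout. I expect the only delicate point to be justifying differentiation under the integral sign and the interchange of differentiation with the change of variables; these are routine under the standing assumption that $p(r|\bar{r})$ is everywhere positive and smooth, so the substantive content is entirely the symmetry bookkeeping of the two preceding paragraphs.
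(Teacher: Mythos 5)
Your proof is correct and follows essentially the same route as the paper: the paper likewise combines the reparametrization law $\bm{\mathcal{F}}\mapsto\mathbf{R}\bm{\mathcal{F}}\mathbf{R}^T$ with the unit-Jacobian change of integration variable and Eq.~(\ref{rotation2}) to get $\bm{\mathcal{F}}=\mathbf{R}\bm{\mathcal{F}}\mathbf{R}^T$, and uses Eq.~(\ref{translation}) for independence of $\bar{r}$. The only differences are cosmetic: you establish constancy first (arguably a cleaner logical order) and offer an explicit radially symmetric computation as an alternative to the covariance argument.
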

\begin{proof}
Rotating the parameters as $\bar{r}\to \mathbf{R}\bar{r}$, we find from Eq.~(\ref{Flm}) the corresponding rotation of the Fisher information matrix, 
$\bm{\mathcal{F}}_2^{(2)}\left[p(r|\mathbf{R}\bar{r})\right] =\mathbf{R}\bm{\mathcal{F}}_2^{(2)}\left[p(r|\bar{r})\right]\mathbf{R}^T$. Since the Jacobian corresponding to a rotation of integration coordinates is $|\det \mathbf{R}|=1$, we also have $\bm{\mathcal{F}}_2^{(2)}\left[p(\mathbf{R}r|\mathbf{R}\bar{r})\right] =\bm{\mathcal{F}}_2^{(2)}\left[p(r|\mathbf{R}\bar{r})\right]$. Combining these two equalities with Eq.~(\ref{rotation2}), we obtain invariance of $\bm{\mathcal{F}}_2^{(2)}\left[p(r|\bar{r})\right]$ to rotations, which means that it is proportional to the unit matrix $\mathbf{I}$. Substituting Eq.~(\ref{translation}) into Eq.~(\ref{Flm}), we find $\bm{\mathcal{F}}_2^{(2)}\left[p(r|\bar{r}+\bar{r}_0)\right] =\bm{\mathcal{F}}_2^{(2)}\left[p(r|\bar{r})\right]$, which means that $f_0$ is independent of $\bar{r}$.
\end{proof}

Now, we formulate the following theorem.

\begin{theorem}\label{theorem:Main}
For $\dirprod{\mathbf{F}}_\text{coflu}(\theta)$ defined with the object state (\ref{rhomix}), $\dirprod{\mathbf{F}}_\text{blink}(\theta)$ defined with the object state (\ref{rhoblink}), and all generalized measurements being independent and identical, the following inequalities hold for any $\rho_{Ok}$, $\rho_P$, $\rho_E$, $U$, and $\Lambda(r)$.
\begin{enumerate}
    \item[A.] In the case of translation and rotation invariant microscopy with $N_k=N/K$,
    \begin{equation}\label{Main}
    \dirprod{\mathbf{F}}_\text{blink}(\theta) > \dirprod{\mathbf{F}}_\text{coflu}(\theta).
    \end{equation}
    \item[B.] In the general case of arbitrary $N_k$, at least two of which are non-zero,
    \begin{equation}\label{MainTrace}
    \Tr\{\dirprod{\mathbf{F}}_\text{blink}(\theta)\} > \Tr\{\dirprod{\mathbf{F}}_\text{coflu}(\theta)\}.
    \end{equation}
\end{enumerate}

\end{theorem}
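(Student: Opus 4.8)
The plan is to reduce each part of Theorem~\ref{theorem:Main} to the corresponding part of the abstract convexity result, Theorem~\ref{theorem:convexity}, by taking the single-sample densities $p_k(r|\theta)$ of Eq.~(\ref{prksample}) as the functions $q_k$. The first step is to record the mixture identity that links the two scenarios: because the Born rule is linear in the state and $\rho_O^\text{coflu}=\sum_k\mu_k\rho_{Ok}$, comparison of Eqs.~(\ref{prsample}) and~(\ref{prksample}) gives $p(r|\theta)=\sum_k\mu_k p_k(r|\theta)$ with $\mu_k=N_k/N$. Hence, with $d=2$ and $M=2K$, the cofluorescent single-sample matrix is $\mathbf{F}_\text{coflu}=\bm{\mathcal{F}}_M^{(2)}[\sum_k\mu_k p_k]$, while the blinking total is $\dirprod{\mathbf{F}}_\text{blink}=\sum_k N_k\mathbf{F}^{[k]}=N\sum_k\mu_k\bm{\mathcal{F}}_M^{(2)}[p_k]$. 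Up to the common factor $N$, the left- and right-hand sides of the convexity inequalities~(\ref{convexity}) and~(\ref{trace-convexity}) are then exactly $\dirprod{\mathbf{F}}_\text{coflu}$ and $\dirprod{\mathbf{F}}_\text{blink}$, so each claim should follow by invoking the matching part of Theorem~\ref{theorem:convexity} and multiplying through by $N$.

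For part~A I would set $\mu_k=1/K$, which is forced by $N_k=N/K$, and then check that the structural hypothesis~(\ref{condition}) holds for $q_k=p_k$. Since $\rho_{Ok}$ carries only the $k$th source, $p_k$ depends on $\bar\theta_k=\bar r_k$ alone; therefore every derivative $\partial p_k/\partial\theta_i$ with index $i$ outside the $k$th coordinate block vanishes, and $\bm{\mathcal{F}}_M^{(2)}[p_k]$ is supported on the $2\times2$ diagonal block of source $k$. Lemma~\ref{lemma:F2} identifies that block, in translation and rotation invariant microscopy, with $f_0\mathbf{I}$, where translation invariance renders $f_0$ independent of position and the common single-source response renders it independent of $k$. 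This is precisely Eq.~(\ref{condition}), so Theorem~\ref{theorem:convexity}A applies and yields $\mathbf{F}_\text{coflu}<\frac1K\sum_k\bm{\mathcal{F}}_M^{(2)}[p_k]$; multiplying by $N$ and using $N/K=N_k$ reproduces Eq.~(\ref{Main}).

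Part~B needs no structural hypothesis. I would apply the trace-convexity inequality~(\ref{trace-convexity}) directly to $q_k=p_k$ with weights $\mu_k=N_k/N$, giving $\Tr\{\mathbf{F}_\text{coflu}\}<\sum_k\mu_k\Tr\{\bm{\mathcal{F}}_M^{(2)}[p_k]\}$, and multiply by $N$ to obtain Eq.~(\ref{MainTrace}). Here the hypothesis that at least two of the $N_k$ are non-zero plays a double role: it guarantees $\mu_k\in[0,1)$ for every $k$, matching the premise of Theorem~\ref{theorem:convexity}, and it makes the mixture genuinely non-degenerate, so that the strictness inherited from Cohen's Lemma~\ref{lemma:convexity}---which rests on the $p_k$ being distinct for sources at distinct positions---is not vacuous.

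The step I expect to be most delicate is the verification of Eq.~(\ref{condition}) in part~A, since this is where both invariance assumptions are indispensable. Rotation invariance, entering through Lemma~\ref{lemma:F2}, is what forces the within-block off-diagonal entry of $\mathbf{F}^{[k]}$ to vanish, so that the block is proportional to $\mathbf{I}$ rather than merely symmetric; and translation invariance, together with equal photon numbers and identical sources, is what produces the single constant $f_0$ shared by all $k$. Without these, $\mathring{\mathbf{F}}=\frac1K\sum_k\mathbf{F}^{[k]}$ would not be proportional to $\mathbf{I}_M$, the diagonalizing rotation $\mathbf{O}$ would no longer leave $\mathring{\mathbf{F}}$ invariant, and the passage from the diagonal inequality~(\ref{ineq3}) to the full matrix ordering would fail---which is exactly why part~A requires invariant microscopy whereas the trace statement of part~B does not.
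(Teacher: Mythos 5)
Your proposal is correct and follows essentially the same route as the paper: it derives the mixture identity $p(r|\theta)=\sum_k\mu_k p_k(r|\theta)$, uses additivity to reduce the session matrices to $N\bm{\mathcal{F}}_M^{(2)}[\sum_k\mu_k p_k]$ and $N\sum_k\mu_k\bm{\mathcal{F}}_M^{(2)}[p_k]$, verifies the block condition~(\ref{condition}) via Lemma~\ref{lemma:F2}, and invokes parts A and B of Theorem~\ref{theorem:convexity}. The added remarks on why the non-degeneracy hypothesis and the invariance assumptions are needed are consistent with the paper's discussion and do not alter the argument.
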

\begin{proof}
Substituting Eqs. (\ref{rhomix}) and (\ref{iigm}) into Eq. (\ref{pr}) and using the property $\Tr\{\mathbf{A}\otimes \mathbf{B}\}=\Tr \mathbf{A}\Tr \mathbf{B}$, we obtain $\dirprod{p}(\dirprod{r}|\theta)=\prod_np(r_n|\theta)$ with $p(r|\theta)$ defined by Eq. (\ref{prsample}), which can be rewritten as 
\begin{equation}\label{prcoflu}
    p(r|\theta) = \sum_k \mu_k p_k(r|\theta),
\end{equation}
where $p_k(r|\theta)$ is defined by Eq. (\ref{prksample}). Substituting this form of $\dirprod{p}(\dirprod{r}|\theta)$ into Eq. (\ref{Fij}) and using the additivity of Fisher information by Lemma~\ref{lemma:additivity}, we obtain 
\begin{equation}\label{Fcoflu}
\dirprod{\mathbf{F}}_\text{coflu}(\theta) = N \mathbf{F}_\text{coflu}(\theta),
\end{equation}
where $\mathbf{F}_\text{coflu}(\theta)$ is a single-sample Fisher information matrix calculated as a functional of $p(r|\theta)$, $\mathbf{F}_\text{coflu}(\theta)=\bm{\mathcal{F}}_M^{(2)}[p(r|\theta)]$, where the upper index $\ell=2$ is the dimension of the vector $r=\{x,y\}$.

On the other hand, substituting Eqs. (\ref{rhoblink}) and (\ref{iigm}) into Eq. (\ref{pr}), we obtain $\dirprod{p}(\dirprod{r}|\theta) =\prod_{k=1}^K\dirprod{p}_k(\dirprod{r}_k|\theta)$, where $\dirprod{p}_k(\dirprod{r}_k|\theta) =\prod_{l=1}^{N_k}p_k(r_{kl}|\theta)$ and $p_k(r_{kl}|\theta)$ is given by Eq. (\ref{prksample}). Substituting this form of $\dirprod{p}(\dirprod{r}|\theta)$ into Eq. (\ref{Fij}) and using the additivity of Fisher information by Lemma~\ref{lemma:additivity}, we obtain 
\begin{equation}\label{Fblink}
\dirprod{\mathbf{F}}_\text{blink}(\theta) = \sum_k N_k \mathbf{F}^{[k]}(\theta),
\end{equation}
where $\mathbf{F}^{[k]}(\theta)$ is given by Eq. (\ref{Fijksample}) and calculated as a functional of $p_k(r|\theta)$, $\mathbf{F}^{[k]}(\theta)=\bm{\mathcal{F}}_M^{(2)}[p_k(r|\theta)]$. Finally, we apply the equality $N_k=N\mu_k$. In the case of $N_k=N/K$, using the convexity of the Fisher information matrix by part A of Theorem \ref{theorem:convexity}, we obtain Eq. (\ref{Main}).  The condition (\ref{condition}) is satisfied according to Lemma~\ref{lemma:F2}. In the case of arbitrary $N_k$, using the convexity of the trace of the Fisher information matrix by part B of Theorem \ref{theorem:convexity}, we obtain Eq. (\ref{MainTrace}).
\end{proof}

If the matrices $\mathbf{A}$ and $\mathbf{B}$ are invertible and positive, $\mathbf{A} > \mathbf{B}$ is equivalent to $\mathbf{A}^{-1} < \mathbf{B}^{-1}$ \cite{HornJohnson}. As a corollary to Theorem~\ref{theorem:Main}, we have $\dirprod{\mathbf{F}}_\text{blink}^{-1}(\theta) < \dirprod{\mathbf{F}}_\text{coflu}^{-1}(\theta)$, and therefore, for a saturated Cram\'er-Rao bound, in the case of equal brightnesses,
\begin{equation}\label{advantage}
H_\text{blink}(\hat\theta) > H_\text{coflu}(\hat\theta),
\end{equation}
where $H(\hat\theta)$ may be any of the three measures of localization efficiency introduced in Sec.~\ref{sec:Efficiency}, $H_\text{tot}(\hat\theta)$, $H_\text{ind}(\hat\theta)$, or $H_\text{eig}(\hat\theta)$. Moreover, Eq.~(\ref{advantage}) holds for $H_\text{eig}(\hat\theta)$ also in the case of arbitrary brightnesses.

\section{Localization of two incoherent sources \label{sec:Example}}

In this section, we consider a simple example illustrating the theory developed above and consisting of a one-dimensional localization of two incoherent sources lying at unknown locations $\bar{x}_1$ and $\bar{x}_2$ on the object plane of the microscope and having known relative brightnesses $\mu_1$ and $\mu_2$. On its image plane, they are represented by two spots centered at $x_1=g\bar x_1$ and $x_2=g\bar x_2$, where $g$ is the microscope magnification. 
We consider the task of estimating the parameter vector $\theta=\{x_1,x_2\}$, from which $\bar{x}_1$ and $\bar{x}_2$ can easily be found, see Fig.~\ref{fig:micro}. 

The localization efficiency is characterized by the tree measures introduced in Sec.~\ref{sec:Efficiency}. Operationally, this can be realized by the following protocol: (i) the verifier prepares two sources and passes them to the microscopist communicating the brightnesses $\mu=\{\mu_1,\mu_2\}$ but not the locations $\theta=\{x_1,x_2\}$; (ii) the microscopist returns the estimates of the locations $\hat\theta=\{\hat x_1,\hat x_2\}$ to the verifier; (iii) such an exchange is repeated many times generating a set of data; (iv) the verifier calculates the covariance matrix $\Covel(\hat\theta)_{jk}=\langle(\hat x_j-x_j)(\hat x_k-x_k)\rangle$ for the subset of data with given $\theta$ and $\mu$; (v)  from the covariance matrix, the verifier calculates the localization efficiency $H(\theta)$ as a function of $\theta$ and $\mu$.

We consider an ordinary wide-field microscope with a circular aperture having the properties of translation and rotation invariance within its field of view. When the number of photons on the image plane within the coherence time is much less than unity (which means that the source is ``dim''), the wave function of the photon coming from the $k$th source is \cite{Tsang16,Tsang17} $\psi(x-x_k)$, where $\psi(x)$ is the impulse-response function of the optical system of the microscope \cite{Goodman-book}. For simplicity, we assume that this function is real and model it with a Gaussian function so that its square, the PSF of the microscope, is normalized to unity:
\begin{equation}
\psi(x) = \frac1{(2\pi\sigma^2)^{1/4}} e^{-x^2/4\sigma^2},
\end{equation}
where the standard deviation $\sigma$ determines the size of each spot.

When just the $k$th source is active (blinking scenario), the state of the probe (single-photon field) on the image plane is $\rho_{Pk}=|\Psi(x_k)\rangle\langle \Psi(x_k)|$, where
\begin{equation}\label{Psix}
|\Psi(x_k)\rangle = \int\limits_{-\infty}^{+\infty}dx \psi(x-x_k)a^\dagger(x)|\text{vac}\rangle,
\end{equation}
$a^\dagger(x)$ being the photon creation operator at point $x$. In direct detection, the POVM for detecting a photon at position $x$ is the projector on the one-photon state at position $x$, i.e., $\Lambda(x)=a^\dagger(x)|\text{vac}\rangle\langle\text{vac}|a(x)$. Then, the probability of measurement outcome $x$ is $p_k(x|x_1,x_2)=\Tr\left[\rho_{Pk}\Lambda(x)\right]=\psi^2(x-x_k)$, where we have used the commutation relation $[a(x),a^\dagger(x')]=\delta(x-x')$. The corresponding Fisher information matrix is $F^{[k]}_{ij}=\sigma^{-2}\delta_{ij}\delta_{ik}$. The  Fisher information matrix of the entire estimation session with $N$ detected photons, $\mu_1N$ of which coming from the first source and $\mu_2N$ coming from the second one, is thus  
\begin{equation}\label{Fblink1D}
\dirprod{\mathbf{F}}_\text{blink} = \frac{N}{2\sigma^2}\left(\begin{array}{cc}
    1+\delta & 0 \\
    0 & 1-\delta
\end{array}\right),    
\end{equation}
where we express the relative brightnesses, satisfying the relation $\mu_1+\mu_2=1$, through one parameter $0<\delta<1$, which we assume to be known, as  $\mu_1=\frac12(1+\delta)$ and $\mu_2=\frac12(1-\delta)$. The average Fisher information in this scenario is $\bar{F}_\text{blink} = N/2\sigma^2$. 

When the two sources are active simultaneously (cofluorescence scenario), the state of the probe on the image plane is $\tilde\rho_{P}=\mu_1\tilde\rho_{P1}+\mu_2\tilde\rho_{P2}$ and the probability of the measurement outcome $x$ is
\begin{equation}\label{p1D}
p(x|x_1,x_2)=\mu_1\psi^2(x-x_1) + \mu_2\psi^2(x-x_2).
\end{equation}
This function is shown in Fig. \ref{fig:1D}. 

\begin{figure}[!ht]
\centering
\includegraphics[width=\linewidth]{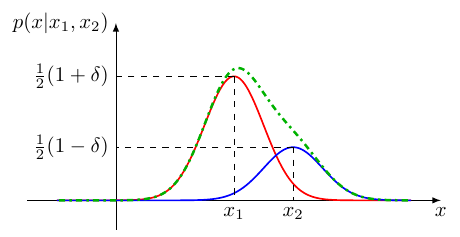}
\caption{Statistical model for two cofluorescent incoherent sources. The solid lines show the point-spread functions centered at positions $x_1$ and $x_2$ on the image plane of the microscope. The dash-dotted line shows their sum corresponding to one observed spot when the sources are active simultaneously.\label{fig:1D}}
\end{figure}

The Fisher information matrix of the entire estimation session with $N$ detected photons distributed according to Eq. (\ref{p1D}) is 
\begin{equation}\label{Fcoflu1D}
\dirprod{F}^\text{coflu}_{ij} = \frac{N\mu_i\mu_j}{\sigma^4} \int\limits_{-\infty}^{+\infty} (x-x_i)(x-x_j)\frac{p_i(x|x_i)p_j(x|x_j)}{p(x|x_1,x_2)}dx.
\end{equation}
\begin{figure*}[!ht]
\centering
\includegraphics[width=0.49\linewidth]{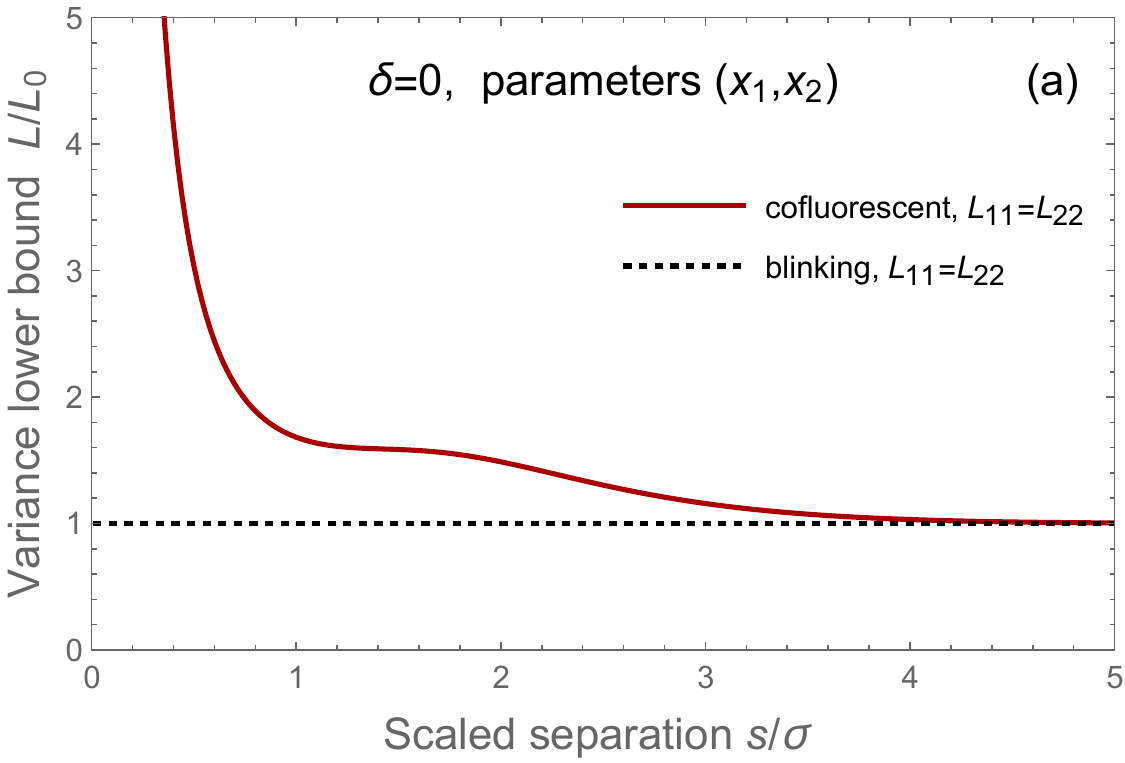}
\includegraphics[width=0.49\linewidth]{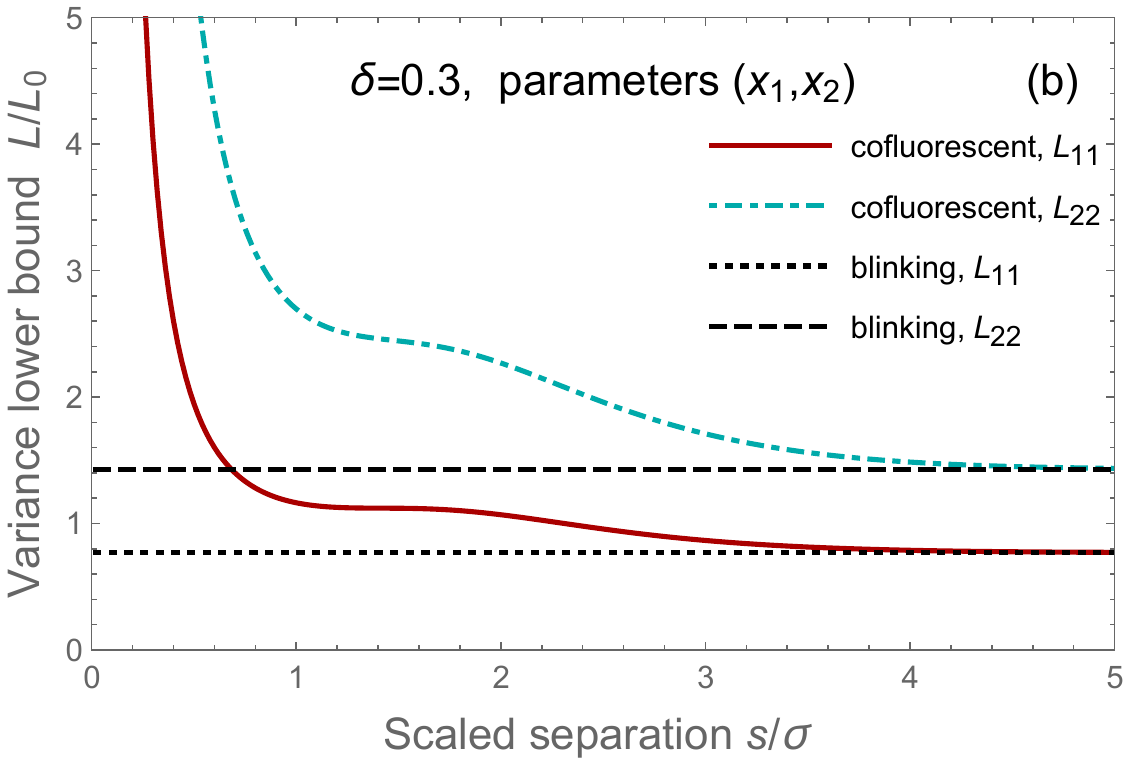}
\includegraphics[width=0.49\linewidth]{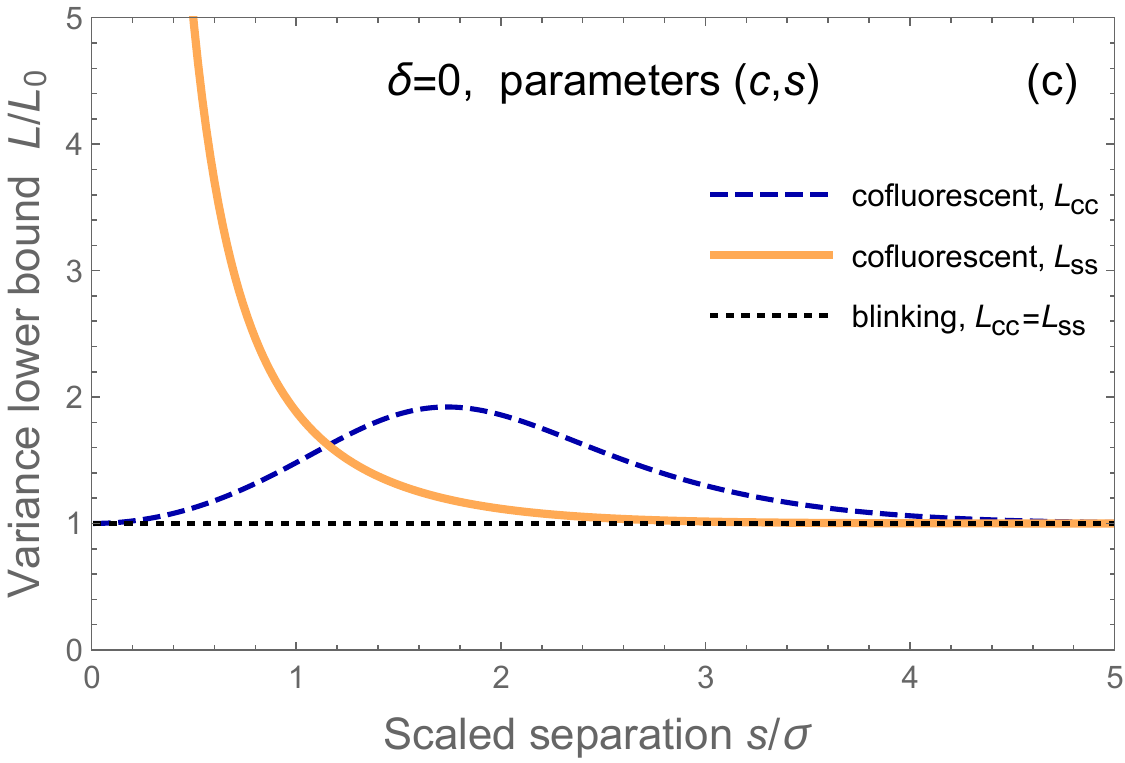}
\includegraphics[width=0.49\linewidth]{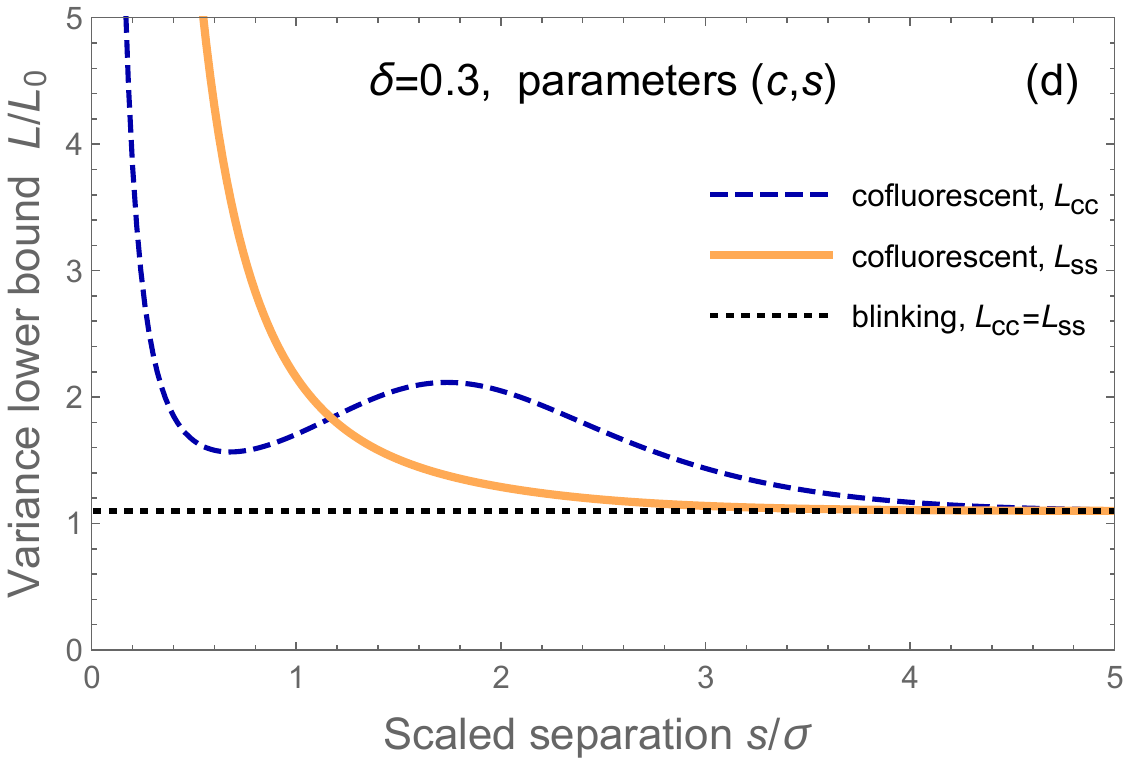}
\includegraphics[width=0.49\linewidth]{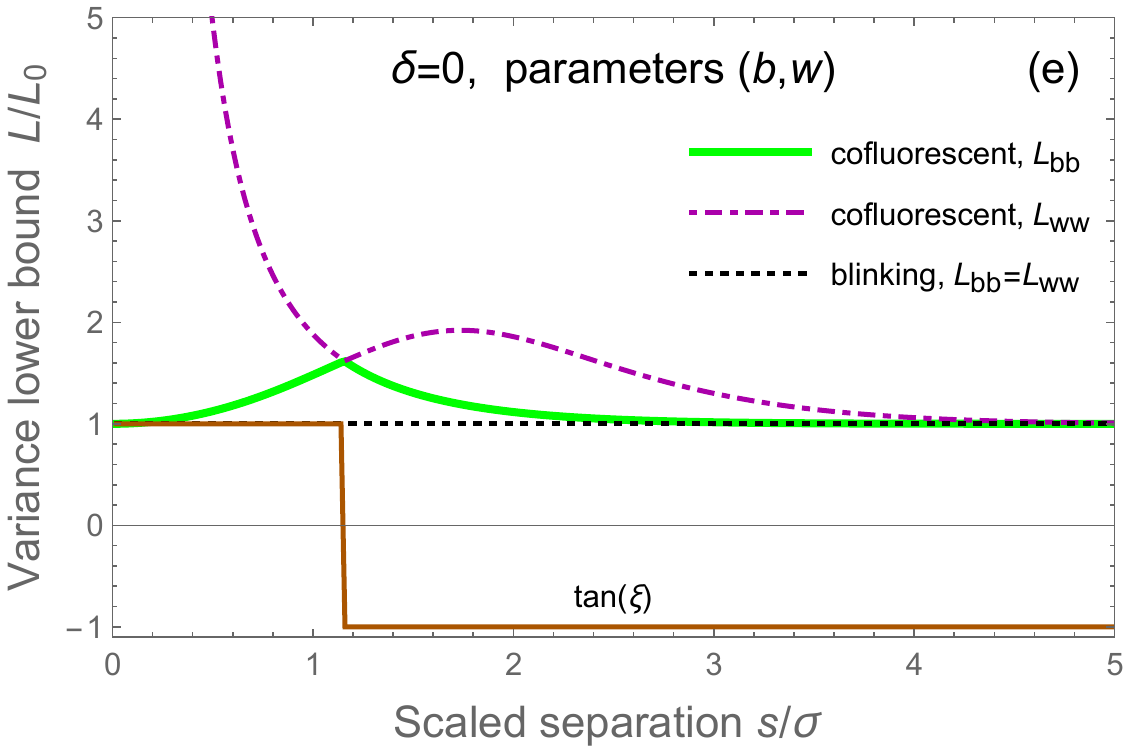}
\includegraphics[width=0.49\linewidth]{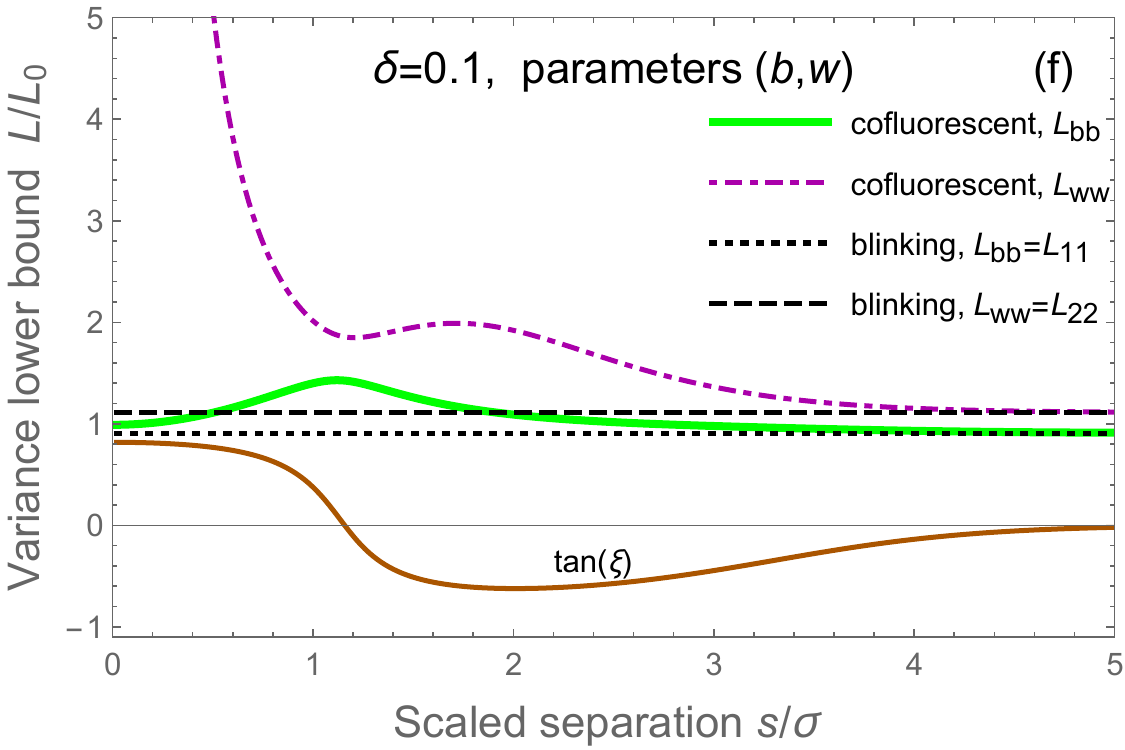}
\caption{Variance lower bounds in the units of $L_0=2\sigma^2/N$ for estimating the unknown positions $(x_1,x_2)$ of two incoherent sources by detecting $N$ photons in a microscope with the point-spread function of standard deviation $\sigma$. $\delta$ is the relative brightness difference of the sources. Parameters $(c,s)$ are the scaled centroid and separation, respectively. Parameters $(b,w)$ are the eigenparameters of the Fisher information matrix obtained from $(x_1,x_2)$ via a rotation by angle $\xi$: $b=x_1\cos\xi + x_2\sin\xi $ and $w= x_2\cos\xi - x_1\sin\xi$. \label{fig:L}}
\end{figure*}

The corresponding variance lower bounds $L_{ii}=(\dirprod{\mathbf{F}}^{-1})_{ii}$ are calculated numerically and shown in Fig. \ref{fig:L}(a) for the case of equal brightnesses $\mu_1=\mu_2=\frac12$. In this case, due to the symmetry with respect to the exchange of parameters $x_1$ and $x_2$, we have $L_{11}=L_{22}$ for both scenarios, and the variance of the cofluorescence scenario is always above that of the blinking scenario, in correspondence with Theorem \ref{theorem:Main}.

In Fig. \ref{fig:L}(b), we see a similar situation for the case of unequal brightnesses. The variance lower bounds are different in this case, with the brighter source having a lower position variance. In the blinking scenario, the variance lower bounds for the positions of the first and second sources are $(1+\delta)^{-1}L_0$ and $(1-\delta)^{-1}L_0$ respectively, where $L_0=1/\bar{F}_\text{blink}$. In the cofluorescence scenario, the variances of both positions lie above those of the blinking scenario, though the case of unequal brightnesses is not covered by Theorem \ref{theorem:Main} for individual bounds.

We also see in Fig. \ref{fig:L}(a) that the variances of both positions tend to infinity at a small separation, which means that these positions are difficult to estimate. However, intuitively we understand that the centroid $(x_1+x_2)/2$ of the overlapping spots can still be estimated with high precision. To show it explicitly, we pass from our parameters $\theta=\{x_1,x_2\}$ to a rotated vector of parameters $\varphi=\{c,s\}$, where $c=(x_1+x_2)/\sqrt{2}$ is the scaled centroid and $s=(x_2-x_1)/\sqrt{2}$ is the scaled separation. This rotation in the parameter space can be written as $\varphi=\mathbf{O}_{\pi/4}^T\theta$, where
\begin{equation}
\mathbf{O}_\xi = \left(\begin{array}{cc}
    \cos\xi & -\sin\xi \\
    \sin\xi & \cos\xi
\end{array}\right)
\end{equation}
is the matrix of orthogonal rotation by angle $\xi$. The Fisher information matrix in the new basis is simply  $\tilde{\dirprod{\mathbf{F}}} =\mathbf{O}_{\pi/4}^T\dirprod{\mathbf{F}}\mathbf{O}_{\pi/4}$. Applying this transformation to Eq. (\ref{Fblink1D}), we obtain
\begin{equation}\label{tildeFblink1D}
\tilde{\dirprod{\mathbf{F}}}_\text{blink} = \frac{N}{2\sigma^2}\left(\begin{array}{cc}
    1 & -\delta \\
    -\delta & 1
\end{array}\right),    
\end{equation}
which gives the variance lower bounds $L_{cc}=L_{ss}=L_0/(1-\delta^2)$. In the cofluorescence scenario, the variance lower bounds for $c$ and $s$ are calculated numerically and shown in Fig. \ref{fig:L}(c) for the case of equal brightnesses. We see that, indeed, the variance of the scaled centroid at zero separation approaches that of blinking sources, but the variance of the scaled separation goes to infinity. Now, we can explain why the positions $x_1=(c-s)/\sqrt{2}$ and $x_2=(c+s)/\sqrt{2}$ become completely indeterminate in the $s\to0$ limit: they include the scaled separation $s$ whose variance tends to infinity. 

\begin{figure*}[!ht]
\centering
\includegraphics[width=0.49\linewidth]
{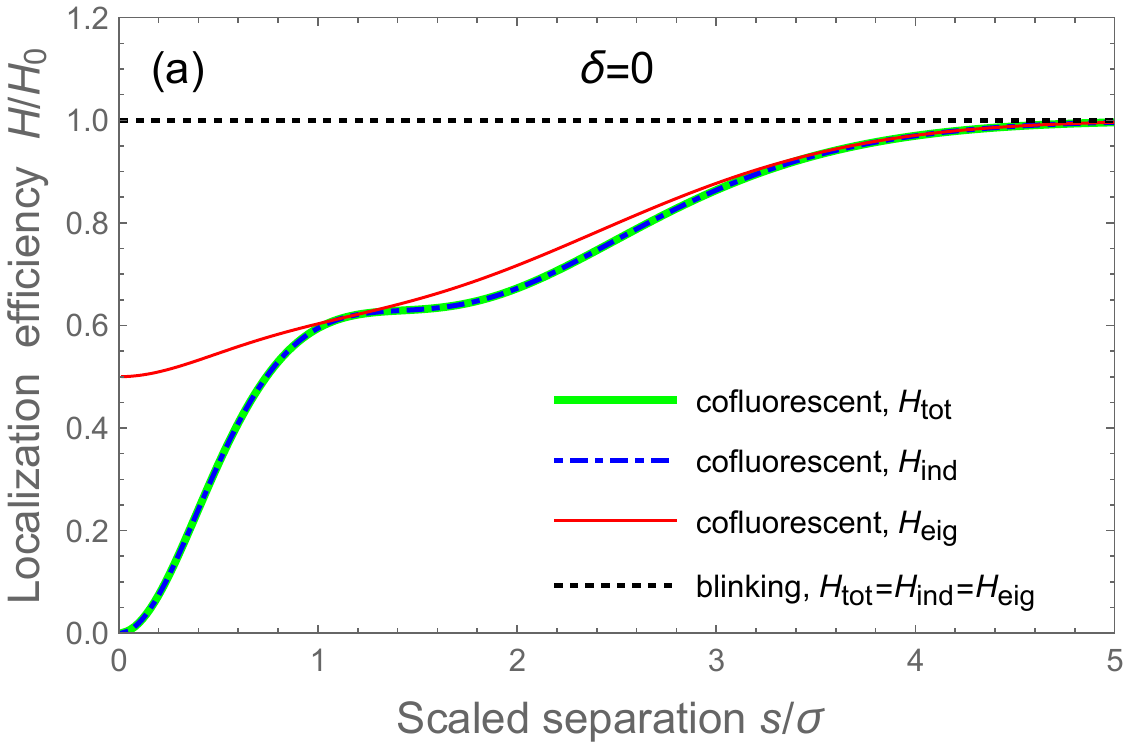}
\includegraphics[width=0.49\linewidth]
{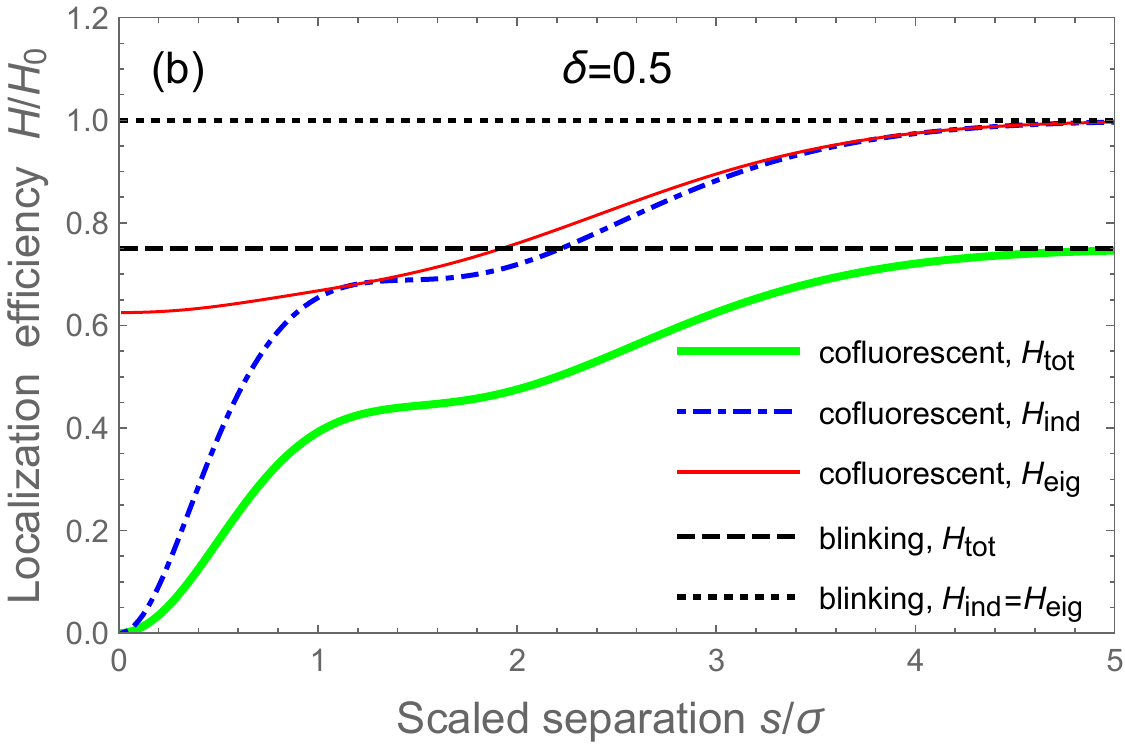}
\caption{Localization efficiency in the units of $H_0=N/2\sigma^2$ for estimating the unknown positions $(x_1,x_2)$ of two incoherent sources by detecting $N$ photons in a microscope with the point-spread function of standard deviation $\sigma$. The blinking scenario demonstrates a higher localization efficiency, whatever measure is used for the latter. \label{fig:EfficiencyF}}
\end{figure*}

The case of unequal brightnesses is illustrated in Fig. \ref{fig:L}(d). We see that both $c$ and $s$ become indeterminate at $s\to0$ in the co-fluorescence scenario. This may be caused by the presence of one variable with a divergent variance in both these parameters, as is the case for $x_1$ and $x_2$ analyzed above. Indeed, the limiting value of the Fisher information matrix is 
\begin{equation}\label{Fblink1Dlim}
\lim_{x_2\to x_1}\dirprod{\mathbf{F}}_\text{coflu} = \frac{N}{\sigma^2}\left(\begin{array}{cc}
    \mu_1^2 & \mu_1\mu_2 \\
    \mu_1\mu_2 & \mu_2^2
\end{array}\right).    
\end{equation}
This matrix has a zero determinant, but its trace is non-zero. It means that one of its eigenvalues is zero, but another is not: $\lambda_1=N(\mu_1^2+\mu_2^2)/\sigma^2$, $\lambda_2=0$. As a consequence, at any $s$, we can introduce two  eigenparameters, as discussed in Sec. \ref{sec:Efficiency}: the best combination of the two positions $b=x_1\cos\xi + x_2\sin\xi $ corresponding to the higher eigenvalue of $\dirprod{\mathbf{F}}_\text{coflu}(s)$ and the weighted separation $w= x_2\cos\xi - x_1\sin\xi$ corresponding to its lower eigenvalue. The latter parameter has a divergent variance at $s\to0$, while the former is well determined.

In Fig.~\ref{fig:L}(e), we illustrate the behavior of eigenparameter variances and the angle $\xi$ in the case of equal brightnesses. We see that $\xi=\pi/4$ at low separation, which corresponds to $b=c$ and $w=s$, exactly as in Fig.~\ref{fig:L}(c). At higher separation, the angle becomes $\xi=-\pi/4$, which means that the parameters are relabeled: $b=-s$ and $w=c$. This happens because $s$ has a higher eigenvalue in this region. We also see that both eigenvalues of $\dirprod{\mathbf{F}}_\text{coflu}^{-1}(\theta)$ are larger than the degenerate eigenvalue of $\dirprod{\mathbf{F}}_\text{blink}^{-1}(\theta)$, as prescribed by Eq.~(\ref{Main}).

A more interesting case of unequal brightnesses is illustrated in Fig.~\ref{fig:L}(f). We see that the angle $\xi$ changes continuously, tending to zero at high separation, where the matrix $\dirprod{\mathbf{F}}_\text{coflu}(s)$ becomes diagonal in the $(x_1,x_2)$ basis. The parameter $b$ has always a finite variance, while that of $w$ is divergent at a small separation. Both variances in the cofluorescent scenario lie above their counterparts in the blinking scenario, which means that $\lambda_i(\dirprod{\mathbf{F}}_\text{blink}) \ge\lambda_i(\dirprod{\mathbf{F}}_\text{coflu})$, where $\lambda_i(\mathbf{A})$ is the $i$th eigenvalue of a Hermitian matrix $\mathbf{A}$, when the eigenvalues are sorted in a nonincreasing order. It is known that for any two Hermitian matrices $\mathbf{A}$ and $\mathbf{B}$, $\mathbf{A}\ge \mathbf{B}$ is a sufficient but not necessary condition for $\lambda_i(\mathbf{A})\ge \lambda_i(\mathbf{B})$ (see Corollary 7.7.4. in Ref.~\cite{HornJohnson}). This means that we cannot conclude from Fig.~\ref{fig:L}(f) that inequality (\ref{Main}) holds also in the case of unequal brightnesses, not comprised by Theorem~\ref{theorem:Main}. However, an additional numerical study shows that the minimal eigenvalue of the matrix $\dirprod{\mathbf{F}}_\text{blink}-\dirprod{\mathbf{F}}_\text{coflu}$ is positive in the limits of computational error, which is a numerical evidence in favor of validity of inequality (\ref{Main}) in a more general case of unequal brightnesses.

Thus, the variance lower bounds for eigenparameters $L_{bb}$ and $L_{ww}$ appear to be the best characteristics for the overall localization efficiency. If we wish to combine them in one scalar figure-of-merit, we should avoid the divergence of $L_{ww}$ at zero separation. The best way to do it is to pass from variances to precisions, as discussed in Sec. \ref{sec:Efficiency}. The localization efficiency defined in three possible ways is shown in Fig.~\ref{fig:EfficiencyF}(a) for the case of equal brightnesses and saturated Cram\'er-Rao bound. This figure confirms what we have already seen in Fig.~\ref{fig:L}(e): the Fisher information matrix of blinking sources is larger than that of cofluorescent ones and, therefore, all three measures of localization efficiency are higher for blinking sources.

The case of unequal brightnesses is shown in Fig.~\ref{fig:EfficiencyF}(b). We see that $H_\text{eig}$ is higher for blinking sources than for cofluorescent ones, which is a direct consequence of the trace convexity of the Fisher information matrix, proven in Theorem \ref{theorem:convexity}. The other two measures, $H_\text{tot}$ and $H_\text{ind}$, are also higher for blinking sources, though this fact lies outside the scope of the validity of Theorem \ref{theorem:Main}. We also clearly see the ordering of the measures prescribed by Eq.~(\ref{ordering}).

\section{Extension to quantum Fisher information \label{sec:Extension}}

The results of the previous sections can be rather straightforwardly extended to the case where the upper bound for localization precision is quantified by quantum Fisher information, which may correspond to a better measurement than the direct detection, as discussed in Sec. \ref{sec:QFI}.

\subsection{Additivity and convexity of quantum Fisher information matrix}

We begin by reformulating the definition of quantum Fisher information matrix \cite{Helstrom-book,HolevoBook82,Paris09} as a map from a family of density operators in the state space of a quantum system to the space of real symmetric matrices.

\begin{definition}\label{definition:Q}
Given a positive integer $M$, a state space of a quantum system $\mathcal{S}$, and a family of density operators $\rho(\theta)\in\mathcal{S}$ depending on parameter $\theta\in\mathbb{R}^M$, the quantum Fisher information matrix for the information contained in $\rho(\theta)$ about $\theta$ is the $M\times M$-matrix-valued map $\bm{\mathcal{Q}}_M^{(\mathcal{S})}\left[\rho(\theta)\right]$, whose matrix elements are
\begin{equation}
\mathcal{Q}_M^{(\mathcal{S})}\left[\rho(\theta)\right]_{ij} = \frac12\Tr\left\{\left(\mathcal{L}_i\mathcal{L}_j+\mathcal{L}_j\mathcal{L}_i\right)\rho(\theta)\right\},
\end{equation}
where the symmetric logarithmic derivative $\mathcal{L}_i$ is defined by the equation
\begin{equation}\label{SLD2}
\frac{\partial}{\partial\theta_i} \rho(\theta) = \frac12\left[\mathcal{L}_i\rho(\theta) + \rho(\theta)\mathcal{L}_i\right]. 
\end{equation}
\end{definition}

Now, we formulate the property of additivity for this map, which is well known from the literature \cite{Helstrom-book,HolevoBook82,Paris09}.

\begin{lemma}\label{lemma:additivityQ}
The quantum Fisher information matrix possesses the property of additivity: 
\begin{eqnarray}
&&\bm{\mathcal{Q}}_M^{(\mathcal{S}_1\otimes\mathcal{S}_2)} \left[\rho_1(\theta)\otimes\rho_2(\theta)\right]\\\nonumber
&&=\bm{\mathcal{Q}}_M^{(\mathcal{S}_1)}\left[\rho_1(\theta)\right] +\bm{\mathcal{Q}}_M^{(\mathcal{S}_2)}\left[\rho_2(\theta)\right],
\end{eqnarray}
where $\rho_1(\theta)\in\mathcal{S}_1$ and $\rho_2(\theta)\in\mathcal{S}_2$.
\end{lemma}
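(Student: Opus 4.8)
The plan is to construct the symmetric logarithmic derivative (SLD) of the product state explicitly from those of the two factors and then feed it into the definition of the quantum Fisher information matrix given in Definition~\ref{definition:Q}. Write $\rho=\rho_1(\theta)\otimes\rho_2(\theta)$ and let $\mathcal{L}_i^{(1)}$ and $\mathcal{L}_i^{(2)}$ be the SLDs of $\rho_1$ and $\rho_2$ acting on their respective state spaces $\mathcal{S}_1$ and $\mathcal{S}_2$. The natural ansatz is the additive form
\[
\mathcal{L}_i=\mathcal{L}_i^{(1)}\otimes I_2+I_1\otimes\mathcal{L}_i^{(2)},
\]
where $I_1$ and $I_2$ are the identity operators on $\mathcal{S}_1$ and $\mathcal{S}_2$.

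First I would verify this ansatz against the defining equation~(\ref{SLD2}). The Leibniz rule gives $\partial_i(\rho_1\otimes\rho_2)=(\partial_i\rho_1)\otimes\rho_2+\rho_1\otimes(\partial_i\rho_2)$; replacing each $\partial_i\rho_a$ by $\frac12(\mathcal{L}_i^{(a)}\rho_a+\rho_a\mathcal{L}_i^{(a)})$ and regrouping the tensor factors reproduces precisely $\frac12(\mathcal{L}_i\rho+\rho\mathcal{L}_i)$ for the $\mathcal{L}_i$ above. Hence the additive ansatz is a valid SLD for the product state.

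Next I would expand $\mathcal{L}_i\mathcal{L}_j$ using the ansatz into four contributions: the diagonal pieces $\mathcal{L}_i^{(1)}\mathcal{L}_j^{(1)}\otimes I_2$ and $I_1\otimes\mathcal{L}_i^{(2)}\mathcal{L}_j^{(2)}$, together with two cross pieces $\mathcal{L}_i^{(1)}\otimes\mathcal{L}_j^{(2)}$ and $\mathcal{L}_j^{(1)}\otimes\mathcal{L}_i^{(2)}$. Taking the trace against $\rho_1\otimes\rho_2$ factorizes each contribution into single-system traces. The key simplification is that every SLD has vanishing expectation, $\Tr\{\mathcal{L}_i^{(a)}\rho_a\}=0$: indeed $\Tr\{\partial_i\rho_a\}=\partial_i\Tr\{\rho_a\}=\partial_i 1=0$, while cyclicity of the trace applied to~(\ref{SLD2}) gives $\Tr\{\partial_i\rho_a\}=\Tr\{\mathcal{L}_i^{(a)}\rho_a\}$. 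The two cross pieces therefore drop out, and what remains is $\Tr\{\mathcal{L}_i^{(1)}\mathcal{L}_j^{(1)}\rho_1\}+\Tr\{\mathcal{L}_i^{(2)}\mathcal{L}_j^{(2)}\rho_2\}$. Symmetrizing in $i,j$ and restoring the prefactor $\frac12$ reproduces exactly $\mathcal{Q}_M^{(\mathcal{S}_1)}[\rho_1]_{ij}+\mathcal{Q}_M^{(\mathcal{S}_2)}[\rho_2]_{ij}$, which is the claimed identity.

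The main obstacle is the non-uniqueness of the SLD when $\rho_1$ or $\rho_2$ fails to be full rank, since~(\ref{SLD2}) determines $\mathcal{L}_i$ only on the support of the state. I would dispose of this by recalling that the quantum Fisher information matrix depends on the SLD only through its action on that support, so the additive ansatz---one admissible solution among possibly many---still returns the correct value; equivalently, one restricts all operators to the supports of $\rho_1$ and $\rho_2$, where the SLDs are uniquely determined, and the computation above carries over unchanged.
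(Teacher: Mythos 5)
Your proof is correct. Note, however, that the paper offers no proof of this lemma at all: it simply states that the additivity of the quantum Fisher information matrix ``is well known from the literature'' and cites Helstrom, Holevo, and Paris. What you have written is precisely the standard textbook argument those references contain: the additive ansatz $\mathcal{L}_i=\mathcal{L}_i^{(1)}\otimes I_2+I_1\otimes\mathcal{L}_i^{(2)}$ is verified against the defining equation via the Leibniz rule, the cross terms in $\Tr\{\mathcal{L}_i\mathcal{L}_j\,\rho_1\otimes\rho_2\}$ factorize and vanish because $\Tr\{\mathcal{L}_i^{(a)}\rho_a\}=\Tr\{\partial_i\rho_a\}=0$, and the two diagonal pieces reproduce the single-system quantum Fisher information matrices. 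Your closing remark on rank-deficient states is a genuine point of care that the paper does not address: the quantum Fisher information matrix is insensitive to the ambiguity of the SLD off the support of the state, so exhibiting one admissible solution suffices. The only detail worth making explicit is that the symmetrized cross terms $\Tr\{\mathcal{L}_i^{(1)}\rho_1\}\Tr\{\mathcal{L}_j^{(2)}\rho_2\}$ vanish for each of the two orderings separately, so the symmetrization in $i,j$ introduces no complication; as written, your argument already covers this.
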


Next, we formulate the property of convexity in the single-parameter case. The property of non-strict convexity is known from the literature~\cite{Toth13,Toth22}, but we need the strict convexity property, which is proven below. In the proof, we use the well-known fact that, in the one-parameter case,  quantum Fisher information represents the result of maximizing classical Fisher information over all quantum measurements \cite{HolevoBook82,Paris09}. 

\begin{lemma}\label{lemma:convexityQ}
In the single-parameter case, $M=1$, the quantum Fisher information matrix (reducing to a scalar in this case) possesses the property of strict convexity:
\begin{eqnarray}
     &&\mathcal{Q}_1^{(\mathcal{S})}\left[\gamma \rho_1(\theta)+(1-\gamma)\rho_2(\theta)\right] \\\nonumber
     &&< \gamma\mathcal{Q}_1^{(\mathcal{S})}\left[\rho_1(\theta)\right]+(1-\gamma)\mathcal{Q}_1^{(\mathcal{S})}\left[\rho_2(\theta)\right], 
\end{eqnarray}
for any $0<\gamma<1$ and $\rho_1(\theta)$ and $\rho_2(\theta)$ satisfying the requirement
\begin{equation}\label{requirement}
\Tr\{\Lambda_\gamma(X)\rho_1(\theta)\} \not\equiv \Tr\{\Lambda_\gamma(X)\rho_2(\theta)\},  
\end{equation}
where $\Lambda_\gamma(X)$ is the POVM maximizing the information retrieved from the state $\gamma \rho_1(\theta) + (1 - \gamma) \rho_2(\theta)$ by measuring an $\ell$-dimensional variable $X$.
\end{lemma}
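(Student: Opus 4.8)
The plan is to reduce the quantum statement to the already-established classical one (Lemma~\ref{lemma:convexity}) by exploiting the variational characterization invoked just above the lemma: in the one-parameter case the quantum Fisher information equals the classical Fisher information of the measurement distribution \emph{maximized} over all POVMs. The strategy is to evaluate everything through the single POVM $\Lambda_\gamma(X)$ that is optimal for the mixed state, turn the density operator convex combination into a convex combination of classical probability densities, apply Cohen's strict convexity there, and then bound the two resulting classical terms from above by the quantum quantities.

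Concretely, I would set $\rho_\gamma(\theta) = \gamma\rho_1(\theta) + (1-\gamma)\rho_2(\theta)$ and use the stated fact to write $\mathcal{Q}_1^{(\mathcal{S})}[\rho_\gamma(\theta)] = \mathcal{F}_1^{(\ell)}[\Tr\{\Lambda_\gamma(X)\rho_\gamma(\theta)\}]$, where $\Lambda_\gamma$ attains the maximum. Linearity of the trace splits the measurement density as $\Tr\{\Lambda_\gamma(X)\rho_\gamma(\theta)\} = \gamma q_1(X|\theta) + (1-\gamma)q_2(X|\theta)$ with $q_j(X|\theta) = \Tr\{\Lambda_\gamma(X)\rho_j(\theta)\}$. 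The hypothesis (\ref{requirement}) is exactly $q_1 \not\equiv q_2$, so Cohen's strict convexity (Lemma~\ref{lemma:convexity}) gives $\mathcal{F}_1^{(\ell)}[\gamma q_1 + (1-\gamma)q_2] < \gamma\mathcal{F}_1^{(\ell)}[q_1] + (1-\gamma)\mathcal{F}_1^{(\ell)}[q_2]$. I would then close the argument by bounding each classical term, $\mathcal{F}_1^{(\ell)}[q_j] \le \mathcal{Q}_1^{(\mathcal{S})}[\rho_j(\theta)]$, since the information extracted from $\rho_j$ by the particular POVM $\Lambda_\gamma$ cannot exceed that extracted by the POVM optimal for $\rho_j$ itself. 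Chaining the opening equality, the strict inequality, and these two bounds yields the claim.

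The subtle structural point, which I would expect to be the part most in need of care rather than the part that is genuinely hard, is that $\Lambda_\gamma$ is tailored to $\rho_\gamma$ and is in general \emph{not} optimal for $\rho_1$ or $\rho_2$ individually. This is precisely why the final step must be the non-strict bound $\mathcal{F}_1^{(\ell)}[q_j] \le \mathcal{Q}_1^{(\mathcal{S})}[\rho_j(\theta)]$ and why the entire strictness of the lemma is inherited from Cohen's classical result via the distinctness requirement (\ref{requirement}). The only existential ingredient is the attainment of the optimizing POVM, which in the single-parameter case is guaranteed because the projective measurement in the eigenbasis of the symmetric logarithmic derivative $\mathcal{L}$ of $\rho_\gamma$ is optimal; I would note this explicitly so that $\Lambda_\gamma$ in (\ref{requirement}) is well defined and the opening equality is a genuine maximum rather than an unattained supremum.
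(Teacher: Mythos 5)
Your proposal is correct and follows essentially the same route as the paper's proof: evaluate the mixture's quantum Fisher information through the optimal POVM $\Lambda_\gamma$, apply Cohen's strict classical convexity to the resulting mixture of measurement distributions (using the requirement (\ref{requirement}) for strictness), and then bound each classical term by the corresponding quantum Fisher information since $\Lambda_\gamma$ is generally suboptimal for $\rho_1$ and $\rho_2$ individually. Your added remark on the attainment of the optimum via the eigenbasis of the symmetric logarithmic derivative is a sensible clarification but does not change the argument.
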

\begin{proof}
The left-hand side of Eq.~(47) can be represented as
\begin{multline}
    \label{eq:QFI to FI}
    \mathcal{Q}_1^{(\mathcal{S})}\left[ \gamma \rho_1(\theta) + (1 - \gamma) \rho_2(\theta) \right] \\ = \mathcal{F}_1^{(\ell)}\left[ \gamma q_1(X|\theta) + (1 - \gamma) q_2(X|\theta) \right],
\end{multline}
where $\mathcal{F}_1^{(\ell)}$ is the classical Fisher information defined in Definition~\ref{definition:F}, and the probability distributions $q_i(X|\theta) = \Tr\left\{\Lambda_\gamma(X) \rho_i(\theta)\right\}$ correspond to the optimal measurement of $\gamma \rho_1(\theta) + (1 - \gamma) \rho_2(\theta)$.

The strict convexity property of the classical Fisher information (Lemma~\ref{lemma:convexity}) and Eq.~(\ref{requirement}) imply that
\begin{multline}
    \label{eq:FI inequality for QFI}
    \mathcal{F}_1^{(\ell)}\left[ \gamma q_1(X|\theta) + (1 - \gamma) q_2(X|\theta) \right] \\ < \gamma \mathcal{F}_1^{(\ell)}\left[ q_1(X|\theta) \right] + (1 - \gamma) \mathcal{F}_1^{(\ell)}\left[q_2(X|\theta) \right].
\end{multline}
For each term in the right-hand side of inequality (\ref{eq:FI inequality for QFI}), the following relation holds:
\begin{equation}
    \label{eq:FI to QFI}
    \mathcal{F}_1^{(\ell)}\left[ q_i(X|\theta) \right] \le \mathcal{F}_1^{(\ell)}\left[ q_i'(X|\theta) \right] = \mathcal{Q}_1^{(\mathcal{S})}\left[ \rho_i(\theta) \right],
\end{equation}
where the probability distribution $q_i'(X|\theta) = \Tr\left\{\Lambda_i(X) \rho_i(\theta)\right\}$, $i = 1,2$, corresponds to the measurement $\Lambda_i(X)$ maximizing the information retrieved from the state $\rho_i(\theta)$ and, therefore, defines the quantum Fisher information for that state.

Combining Eqs.~(\ref{eq:QFI to FI})--(\ref{eq:FI to QFI}), we prove the lemma.
\end{proof}

This property can be extended to the multiple-parameter case by the following theorem.

\begin{theorem}\label{theorem:convexityQ}
Given two positive integers $K\ge2$ and $d$, in the general case of $M$ parameters split into $K$ groups of $d$ parameters each, $\theta=\{\bar\theta_1,...,\bar\theta_K\}$, where $\bar\theta_k\in\mathbb{R}^d$, $M=Kd$, and given real numbers $\mu_1$,...,$\mu_K$ belonging to $[0,1)$ and summing up to unity, the quantum Fisher information matrix possesses the following properties. 
\begin{enumerate}
\item[A.] Strict weak matrix convexity in the case of equal weights $\mu_k=1/K$:
\begin{equation}\label{convexityQ}    \bm{\mathcal{Q}}_M^{(\mathcal{S})}\left[\frac1K\sum\limits_{k=1}^{K} \rho_k(\theta)\right] < \frac1K\sum\limits_{k=1}^{K} \bm{\mathcal{Q}}_M^{(\mathcal{S})}\left[\rho_k(\theta)\right],
\end{equation}
under condition that the density operator $\rho_k(\theta)$ depends on $\bar\theta_k$ only and 
\begin{equation}\label{conditionQ}
    \mathcal{Q}_M^{(\mathcal{S})} \left[\rho_k(\theta)\right]_{ij} = \left\{ \begin{array}{ll}
       q_0,  &  i=j\in[(k-1)d+1,kd],\\
       0,  & \mathrm{otherwise},
    \end{array}\right.
\end{equation}
where $q_0$ is a positive number independent of $k$.
\item[B.] Strict trace convexity in the general case of arbitrary weights $\mu_k$:
\begin{equation}\label{trace-convexityQ}    
\Tr\left\{\bm{\mathcal{Q}}_M^{(\mathcal{S})} \left[\sum\limits_{k=1}^{K} \mu_k \rho_k(\theta)\right]\right\} < \sum\limits_{k=1}^{K}\mu_k \Tr\left\{\bm{\mathcal{Q}}_M^{(\mathcal{S})} \left[\rho_k(\theta)\right]\right\}.
\end{equation}
\end{enumerate}
\end{theorem}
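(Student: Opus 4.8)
The plan is to transcribe, almost verbatim, the proof of Theorem~\ref{theorem:convexity} into the quantum setting, with the classical Fisher information functional replaced by the quantum Fisher information map of Definition~\ref{definition:Q} and the scalar convexity Lemma~\ref{lemma:convexity} replaced by its quantum counterpart Lemma~\ref{lemma:convexityQ}. First I would denote the left-hand side of Eq.~(\ref{trace-convexityQ}) by $\mathbf{Q}$ and the right-hand side by $\mathring{\mathbf{Q}}=\sum_k\mu_k\bm{\mathcal{Q}}_M^{(\mathcal{S})}[\rho_k(\theta)]$. Since $\mathbf{Q}$ is real symmetric, diagonalize it by an orthogonal rotation, $\mathbf{Q}=\mathbf{O}\mathbf{D}_Q\mathbf{O}^T$, and pass to the eigenparameters $\varphi=\mathbf{O}^T\theta$ with the rotated family $\tilde\rho(\varphi)=\rho(\mathbf{O}\varphi)$.

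The essential new ingredient, absent from the classical argument where the chain rule acts directly on probability-density derivatives, is the covariance of the quantum Fisher information matrix under this reparametrization. Using $\partial\tilde\rho/\partial\varphi_i=\sum_k O_{ki}\,\partial\rho/\partial\theta_k$ together with the defining SLD equation~(\ref{SLD2}), I would verify that $\tilde{\mathcal{L}}_i=\sum_k O_{ki}\mathcal{L}_k$ is Hermitian and satisfies the SLD equation for $\tilde\rho$, hence is its symmetric logarithmic derivative. Substituting into Definition~\ref{definition:Q} then yields the transformation law $\tilde{\mathbf{Q}}=\mathbf{O}^T\mathbf{Q}\mathbf{O}=\mathbf{D}_Q$, the quantum analog of Eq.~(\ref{rotation}); in particular its diagonal entry $\tilde{\mathcal{Q}}_{ii}$ equals the single-parameter quantum Fisher information of the one-parameter family obtained by freezing every eigenparameter except $\varphi_i$.

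With this law in hand, apply Lemma~\ref{lemma:convexityQ} to the single-parameter mixture $\varphi_i\mapsto\sum_k\mu_k\tilde\rho_k'(\varphi_i)$, where $\tilde\rho_k(\varphi)=\rho_k(\mathbf{O}\varphi)$. To avoid the nested-iteration subtlety of building a $K$-term mixture from binary ones, I would re-run the proof of Lemma~\ref{lemma:convexityQ} directly for $K$ components, using a single POVM that is optimal for the mixture: this reduces the inequality to the classical $K$-term strict convexity of Lemma~\ref{lemma:convexity}, followed by the measurement bound $\mathcal{F}_1^{(\ell)}[q_k]\le\mathcal{Q}_1^{(\mathcal{S})}[\rho_k]$. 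Rewriting both sides of the resulting scalar inequality through the transformation law gives $\{\mathbf{O}^T(\mathring{\mathbf{Q}}-\mathbf{Q})\mathbf{O}\}_{ii}>0$, the exact analog of Eq.~(\ref{ineq3}). For part~A, condition~(\ref{conditionQ}) with equal weights forces $\mathring{\mathbf{Q}}=(q_0/K)\mathbf{I}_M$, which is rotation invariant; since $\mathbf{O}^T\mathring{\mathbf{Q}}\mathbf{O}$ and $\mathbf{D}_Q$ are both diagonal, positivity of every diagonal entry upgrades to $\mathbf{Q}<\mathring{\mathbf{Q}}$, i.e. Eq.~(\ref{convexityQ}). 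For part~B, summing over $i$ and invoking invariance of the trace under orthogonal conjugation yields $\Tr\mathbf{Q}<\Tr\mathring{\mathbf{Q}}$, i.e. Eq.~(\ref{trace-convexityQ}).

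I expect the main obstacle to be twofold. First, the identification $\tilde{\mathcal{L}}_i=\sum_k O_{ki}\mathcal{L}_k$ must account for the non-uniqueness of the SLD on the kernel of $\rho$, so I would restrict the argument to the support of $\tilde\rho$, where the SLD is unique and the quadratic form defining $\bm{\mathcal{Q}}$ is unaffected by the kernel ambiguity. Second, the application of Lemma~\ref{lemma:convexityQ} hinges on its hypothesis~(\ref{requirement}): the mixture-optimal POVM must produce genuinely $\varphi_i$-dependent statistics that differ across the components $\rho_k$. This holds in the intended application, where the $\rho_k$ describe physically distinct sources, but in the absence of an explicit non-degeneracy assumption the strict inequality can collapse to an equality (for instance if all $\rho_k$ coincide), so I would state this distinctness condition explicitly, as the quantum analog of the role played by condition~(\ref{conditionQ}) in part~A.
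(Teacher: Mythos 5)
Your proposal is correct and follows essentially the same route as the paper, which proves Theorem~\ref{theorem:convexityQ} by transcribing the proof of Theorem~\ref{theorem:convexity} --- rotating to the eigenparameters of the quantum Fisher information matrix and applying the single-parameter convexity of Lemma~\ref{lemma:convexityQ} to each diagonal entry, with the same caveat that the strict inequality relaxes to ``$\le$'' if condition~(\ref{requirement}) fails. You usefully make explicit two steps the paper leaves implicit (the SLD transformation law $\tilde{\mathcal{L}}_i=\sum_k O_{ki}\mathcal{L}_k$ underlying the covariance of $\bm{\mathcal{Q}}$ under orthogonal reparametrization, and the direct $K$-term version of the convexity lemma), but these are refinements of the same argument rather than a different one.
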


This theorem can be proven exactly the same as Theorem~\ref{theorem:convexity}, by rotating the parameter vector to the eigenparameters of the quantum Fisher information matrix and applying the single-parameter convexity, established by Lemma~\ref{lemma:convexityQ} to each diagonal entry. In doing this, we assume that the condition (\ref{requirement}) is satisfied every time convexity is used for a mixture of two density operators. If, by chance, this condition is not satisfied, then the sign ``less'' should be replaced by ``less or equal.'' Note that in the task of localization of sources of light, considered here, density operators $\rho_1(\theta)$ and $\rho_2(\theta)$ typically depend on different sets of parameters -- the coordinates of the corresponding sources -- and the full equivalence of the conditional probability distributions resulting from them is highly unlikely.

\subsection{Quantum Fisher information for localization of blinking sources \label{sec:LocalizationQ}}

The consideration of Sec.~\ref{sec:Localization} can be extended to quantum Fisher information. In the cofluorescence scenario, the state of the probe, defined in Sec.~\ref{sec:CRB}, is $\tilde{\dirprod{\rho}}_P^\text{coflu}=\left(\tilde\rho_P^\text{coflu}\right)^{\otimes N}$, where
\begin{equation}\label{rhomixprobe}
\tilde\rho_P^\text{coflu}=\sum_k\mu_k\tilde\rho_{Pk}, 
\end{equation}
and the quantum Fisher information matrix, after Lemma~\ref{lemma:additivityQ}, has the form $\dirprod{\mathbf{Q}}_\text{coflu}(\theta)=N\mathbf{Q}_\text{coflu}(\theta)$, where $\mathbf{Q}_\text{coflu}(\theta)$ is the single-sample quantum Fisher information matrix with the elements
\begin{equation}\label{Qijsample}
Q_{\text{coflu},ij}(\theta) = \frac12\Tr\left\{\left(\mathcal{L}_i\mathcal{L}_j+\mathcal{L}_j\mathcal{L}_i\right)\tilde\rho_P^\text{coflu}\right\}.
\end{equation}

In the blinking scenario, the state of the probe is 
\begin{equation}\label{rhoblinkprobe}
\tilde{\dirprod{\rho}}_P^\text{blink} = \tilde\rho_{P1}^{\otimes N_1}\otimes...\otimes\tilde\rho_{PK}^{\otimes N_K}
\end{equation}
and the quantum Fisher information matrix is 
$\dirprod{\mathbf{Q}}_\text{blink}(\theta) =\sum_kN_k\mathbf{Q}^{[k]}(\theta)$, where $\mathbf{Q}^{[k]}(\theta)$ is the single-sample quantum Fisher information matrix in the $k$th window, where just the $k$th source is active:
\begin{equation}\label{Qijksample2}
Q_{ij}^{[k]}(\theta) = \frac12\Tr\left\{\left(\mathcal{L}_i\mathcal{L}_j+\mathcal{L}_j\mathcal{L}_i\right)\tilde\rho_{Pk}\right\}.
\end{equation}
In both Eqs.~(\ref{Qijsample}) and (\ref{Qijksample2}) the symmetric logarithmic derivative is defined for the respective density operator according to Eq.~(\ref{SLD}).

The advantage of a blinking scenario is formulated by the following theorem. For its formulation, we need one more definition.

\begin{definition}\label{definition:Qinv}
Microscopy is quantum translation and rotation invariant if the quantum Fisher information matrix of the state $\tilde\rho_P(r)$ of a single-photon field on the image plane about the coordinates $r=\{x,y\}$ of the image of a single source having emitted this photon satisfies condition 
$\mathcal{Q}_2^{(\mathcal{S})}\left[\tilde\rho_P(r)\right]_{ij}=q_0\delta_{ij}$, where $q_0$ is independent of $r$. 
\end{definition}
\begin{theorem}\label{theorem:MainQ}
For $\dirprod{\mathbf{Q}}_\text{coflu}(\theta)$ defined with the object state (\ref{rhomixprobe}) and $\dirprod{\mathbf{Q}}_\text{blink}(\theta)$ defined with the object state (\ref{rhoblinkprobe}), the following inequalities hold:
\begin{enumerate}
    \item[A.] In the case of quantum translation and rotation invariant microscopy with $N_k=N/K$,
    \begin{equation}\label{MainQ}
    \dirprod{\mathbf{Q}}_\text{blink}(\theta) > \dirprod{\mathbf{Q}}_\text{coflu}(\theta).
    \end{equation}
    \item[B.] In the general case of arbitrary $N_k$, at least two of which are non-zero,
    \begin{equation}\label{MainTraceQ}
    \Tr\{\dirprod{\mathbf{Q}}_\text{blink}(\theta)\} > \Tr\{\dirprod{\mathbf{Q}}_\text{coflu}(\theta)\}.
    \end{equation}
\end{enumerate}
\end{theorem}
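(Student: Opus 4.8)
The plan is to mirror the structure of the proof of Theorem~\ref{theorem:Main} almost verbatim, replacing the classical Fisher information functional $\bm{\mathcal{F}}$ with the quantum Fisher information map $\bm{\mathcal{Q}}$ and invoking the quantum analogues of the lemmas established above. First I would observe that, by Lemma~\ref{lemma:additivityQ}, the additivity of the quantum Fisher information under tensor products yields $\dirprod{\mathbf{Q}}_\text{coflu}(\theta)=N\mathbf{Q}_\text{coflu}(\theta)$ for the state in Eq.~(\ref{rhomixprobe}) and $\dirprod{\mathbf{Q}}_\text{blink}(\theta)=\sum_k N_k\mathbf{Q}^{[k]}(\theta)$ for the state in Eq.~(\ref{rhoblinkprobe}), exactly as in the classical case. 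The single-sample cofluorescence probe state is the convex mixture $\tilde\rho_P^\text{coflu}=\sum_k\mu_k\tilde\rho_{Pk}$, so the inequality we must prove reduces to a convexity statement for the single-sample matrices, namely $\sum_k N_k\mathbf{Q}^{[k]}(\theta) > N\,\bm{\mathcal{Q}}_M^{(\mathcal{S})}[\sum_k\mu_k\tilde\rho_{Pk}]$ (in the matrix sense for part A, in the trace sense for part B), after using $N_k=\mu_k N$.

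Next I would apply Theorem~\ref{theorem:convexityQ}, the multiparameter quantum convexity result, with the identification $\rho_k(\theta)=\tilde\rho_{Pk}$ and the same grouping $\theta=\{\bar\theta_1,\dots,\bar\theta_K\}$ into blocks of $d=2$ coordinates per source. For part~A, the hypothesis of quantum translation and rotation invariance (Definition~\ref{definition:Qinv}) guarantees that each block $\mathbf{Q}^{[k]}$ equals $q_0\mathbf{I}$ on the $k$th coordinate pair and vanishes elsewhere, which is precisely condition~(\ref{conditionQ}) with block size $d=2$; this is the quantum counterpart of the role played by Lemma~\ref{lemma:F2} in the classical proof. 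With equal weights $\mu_k=1/K$, part~A of Theorem~\ref{theorem:convexityQ} then gives the strict matrix inequality~(\ref{MainQ}). For part~B, with arbitrary weights and at least two nonzero $N_k$, part~B of Theorem~\ref{theorem:convexityQ} delivers the strict trace inequality, which upon multiplying through by $N$ is exactly Eq.~(\ref{MainTraceQ}).

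The main subtlety—and the only place where the quantum proof genuinely departs from the classical one—is the strictness of the convexity, which ultimately rests on the requirement~(\ref{requirement}) in Lemma~\ref{lemma:convexityQ}. Unlike the classical Lemma~\ref{lemma:convexity}, whose strictness holds whenever $q_1\not\equiv q_2$, the quantum version demands that the \emph{optimal} POVM $\Lambda_\gamma$ for the mixed state induce genuinely distinct outcome distributions for the two components being mixed. I would therefore argue, as already noted in the remark following Theorem~\ref{theorem:convexityQ}, that in the localization setting the states $\tilde\rho_{Pk}$ depend on disjoint coordinate blocks (the positions of distinct sources), so the coincidence of measurement statistics under $\Lambda_\gamma$ is non-generic and the hypothesis~(\ref{requirement}) is satisfied; in the exceptional degenerate case the strict inequalities weaken to $\ge$. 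I expect this verification of~(\ref{requirement}) to be the one place requiring care, whereas the additivity and the reduction to the single-sample convexity are entirely routine transcriptions of the classical argument.
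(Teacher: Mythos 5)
Your proposal is correct and follows essentially the same route as the paper, which proves Theorem~\ref{theorem:MainQ} by transcribing the proof of Theorem~\ref{theorem:Main} with Lemma~\ref{lemma:additivityQ} supplying additivity and Theorem~\ref{theorem:convexityQ} supplying the (matrix and trace) convexity, with Definition~\ref{definition:Qinv} playing the role of Lemma~\ref{lemma:F2}. Your added discussion of the strictness caveat tied to requirement~(\ref{requirement}) matches the paper's own remark following Theorem~\ref{theorem:convexityQ}.
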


The proof of this theorem follows the lines of that of Theorem~\ref{theorem:Main}: Applying the additivity and convexity properties of Lemma~\ref{lemma:additivityQ} and Theorem~\ref{theorem:convexityQ}.

As a corollary to Theorem~\ref{theorem:MainQ}, we have $\dirprod{\mathbf{Q}}_\text{blink}^{-1}(\theta) < \dirprod{\mathbf{Q}}_\text{coflu}^{-1}(\theta)$ in case A. Does it mean that the localization efficiency is always higher in the blinking scenario for any choice of measurement? In the case of Fisher information discussed in Sec.~\ref{sec:Localization}, the answer was positive, because the Cram\'er-Rao bound is always asymptotically saturable. The quantum Cram\'er-Rao bound, however, is asymptotically saturable only in the case where the equality
\begin{equation}\label{commutativity}
\Tr\left\{
\left(\mathcal{L}_i\mathcal{L}_j-\mathcal{L}_j \mathcal{L}_i\right)\tilde{\dirprod{\rho}}_P(\theta)
\right\}=0   
\end{equation}
holds for any $i$ and $j$ meaning that the parameters $\theta_i$ and $\theta_j$ are compatible for a simultaneous quantum estimation \cite{Albarelli20}. Let us show that this equality holds for the blinking scenario in the case of a real impulse-response function $\psi(x,y)$. 

When just the $k$th source is active, the state of the single-photon field on the image plane is $\tilde\rho_{Pk}=|\Psi_k\rangle\langle \Psi_k|$, where, similar to Eq.~(\ref{Psix}), we write
\begin{equation}\label{Psixy}
|\Psi_k\rangle = \iint\limits_{-\infty}^{+\infty}dxdy \psi(x-x_k,y-y_k)a^\dagger(x,y)|\text{vac}\rangle.
\end{equation}
with $a^\dagger(x,y)$ being the photon creation operator at point $(x,y)$. Substituting this density operator into Eq.~(\ref{SLD}), we find the symmetric logarithmic derivatives in the form
\begin{equation}\label{Li}
\mathcal{L}_i = |\partial_i\Psi_k\rangle\langle \Psi_k| + |\Psi_k\rangle\langle\partial_i \Psi_k|,
\end{equation}
where $i$ takes the values $x_k,y_k$ and the states $|\partial_x\Psi_k\rangle$ and $|\partial_y\Psi_k\rangle$ are defined similarly to Eq.~(\ref{Psixy}) but with replacements $\psi(x,y)\to\partial_x\psi(x,y)$ and $\psi(x,y)\to\partial_y\psi(x,y)$, respectively. Note, that the square of the function $\psi(x,y)$ is normalized to unity, but the square of its derivative is not necessarily that, i.e., the state $|\partial_i\Psi_k\rangle$ is not normalized in general. Differentiating the norm $\langle\Psi_k|\Psi_k\rangle=1$, we obtain $\Real\left\{\langle\partial_i\Psi_k|\Psi_k\rangle\right\}=0$, and since the impulse-response function is assumed to be real, $|\partial_i\Psi_k\rangle$ is orthogonal to $|\Psi_k\rangle$. Substituting Eq.~(\ref{Li}) into Eq.~(\ref{commutativity}), we ascertain that the commutator average is zero. It means that $x_k$ and $y_k$ are compatible for a simultaneous quantum estimation in the window where their source is active. The same argument applies to the locations of all other sources. 

In summary, the quantum Cram\'er-Rao bound is asymptotically saturable in the blinking scenario. We do not discuss here whether it is saturable or not in the cofluorescence scenario, because, in any case, the advantage of the blinking scenario, expressed by inequality (\ref{advantage}), holds for the optimal quantum measurement. As in Sec.~\ref{sec:Localization}, this inequality holds for any measure of localization efficiency in the case of equal brightnesses and a quantum translation and rotation invariant microscopy, and for $H_\text{eig}(\hat\theta)$ in the general case.

It is easy to see that the property of quantum translation and rotation invariance, formulated in Definition~\ref{definition:Qinv}, is reached for a rotationally invariant impulse-response function, $\psi(\mathbf{R}r)=\psi(r)$, where $\mathbf{R}$ is a matrix of rotation on the $(x,y)$ plane. Indeed, from Eqs. (\ref{Psixy}) and (\ref{Li}), we obtain $\mathcal{Q}_2^{(\mathcal{S})} \left[\tilde\rho_{Pk}\right]_{ij}=J_{ij}$, where
\begin{equation}
J_{ij} = \langle\partial_i\Psi_k|\partial_j\Psi_k\rangle = \iint\limits_{-\infty}^{+\infty} \partial_i\psi(x,y)\partial_j\psi(x,y)dxdy, 
\end{equation}
which reduces to $J_{ij} = q_0\delta_{ij}$ for a rotationally invariant impulse-response function.

\subsection{Localization in one dimension}
We illustrate the developed theory by the same example of two incoherent sources in one spatial dimension as in Sec.~\ref{sec:Example}. We find easily from the formalism of the preceding section that
\begin{equation}\label{Qblink1D}
\dirprod{\mathbf{Q}}_\text{blink} =\dirprod{\mathbf{F}}_\text{blink} = \frac{N}{2\sigma^2}\left(\begin{array}{cc}
    1+\delta & 0 \\
    0 & 1-\delta
\end{array}\right).    
\end{equation}

The quantum Fisher information matrix in the case of two cofluorescent incoherent sources with unknown relative brightness was found by {\v{R}}eha{\v{c}}ek and coworkers in Ref.~\cite{Rehacek17} for the same Gaussian shape of the impulse-response function. We assume that the relative brightness is known and use from this reference only the $2\times2$ submatrix related to the centroid $S_0=(x_1+x_2)/2$ and the separation $S=x_1-x_2$, which we denote by $\dirprod{\mathbf{Q}}_\text{coflu}^R$. We transform this matrix into our parameters $x_1$ and $x_2$ as $\dirprod{\mathbf{Q}}_\text{coflu} =\mathbf{A}^T\dirprod{\mathbf{Q}}_\text{coflu}^R\mathbf{A}$, where $\mathbf{A}$ is the transformation matrix for the (column) vectors of the parameters, $\{S_0,S\}=\mathbf{A}\{x_1,x_2\}$, and obtain
\begin{equation}\label{Qcoflu1D}
\dirprod{\mathbf{Q}}_\text{coflu} = \frac{N}{2\sigma^2}\left(\begin{array}{cc}
    1+\delta-\beta & -\beta \\
    -\beta & 1-\delta-\beta
\end{array}\right),    
\end{equation}
where
\begin{equation}
\beta=\frac1{8\sigma^2}(1-\delta^2)(x_1-x_2)^2e^{-(x_1-x_2)^2/4\sigma^2}.
\end{equation}
\begin{figure*}[!ht]
\centering
\includegraphics[width=0.49\linewidth]{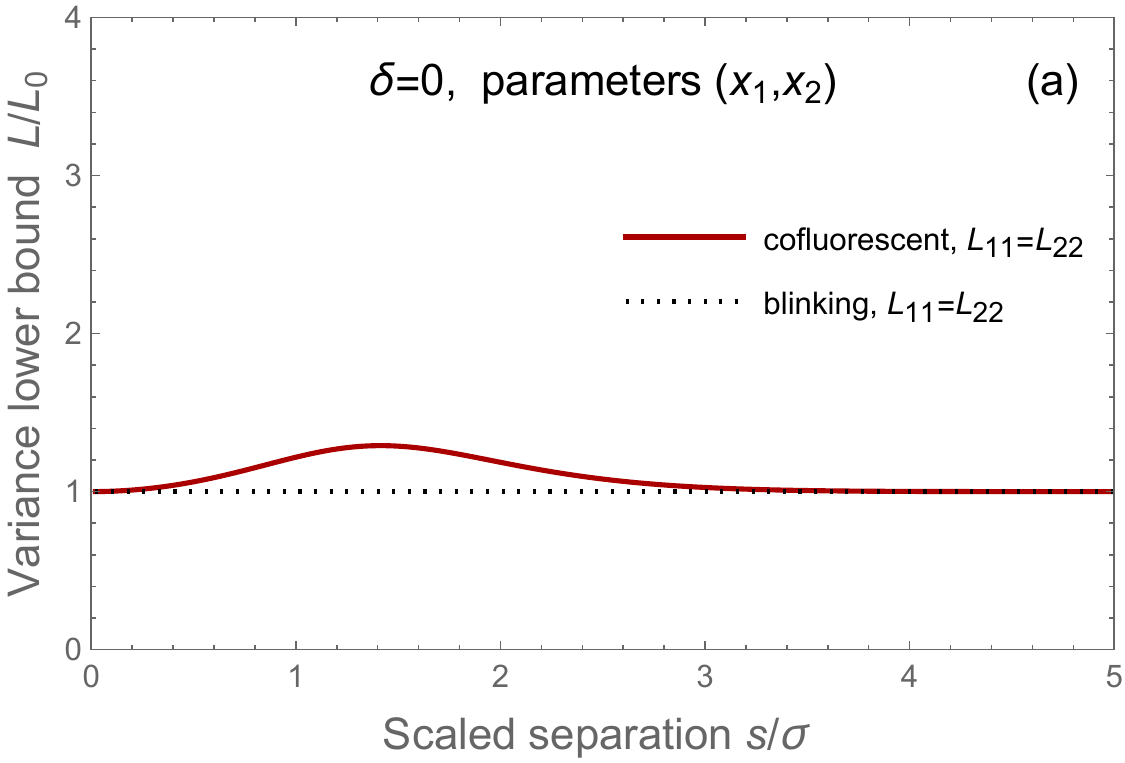}
\includegraphics[width=0.49\linewidth]{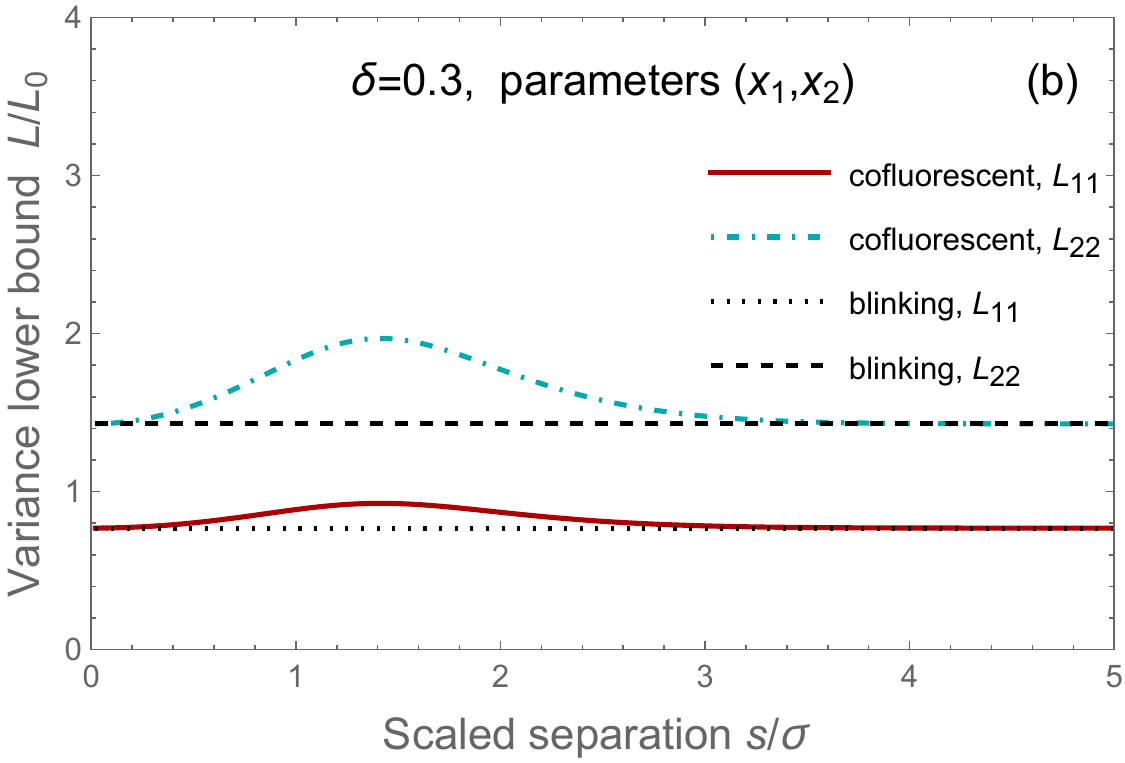}
\includegraphics[width=0.49\linewidth]{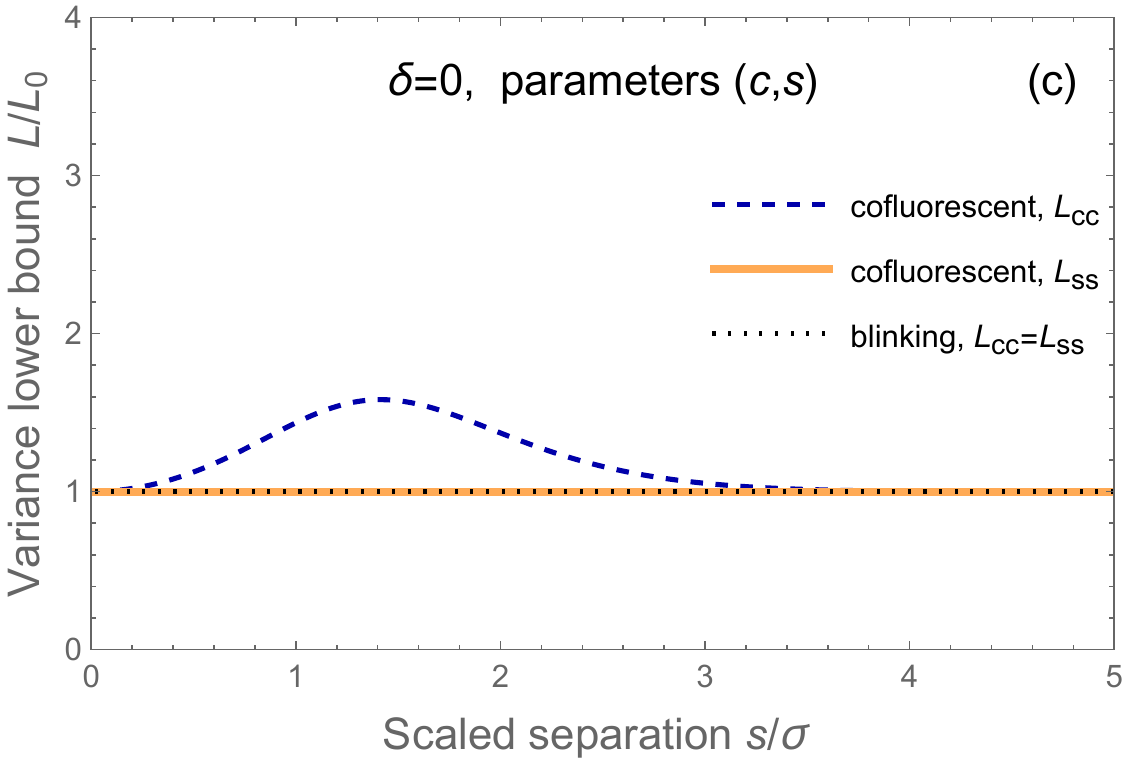}
\includegraphics[width=0.49\linewidth]{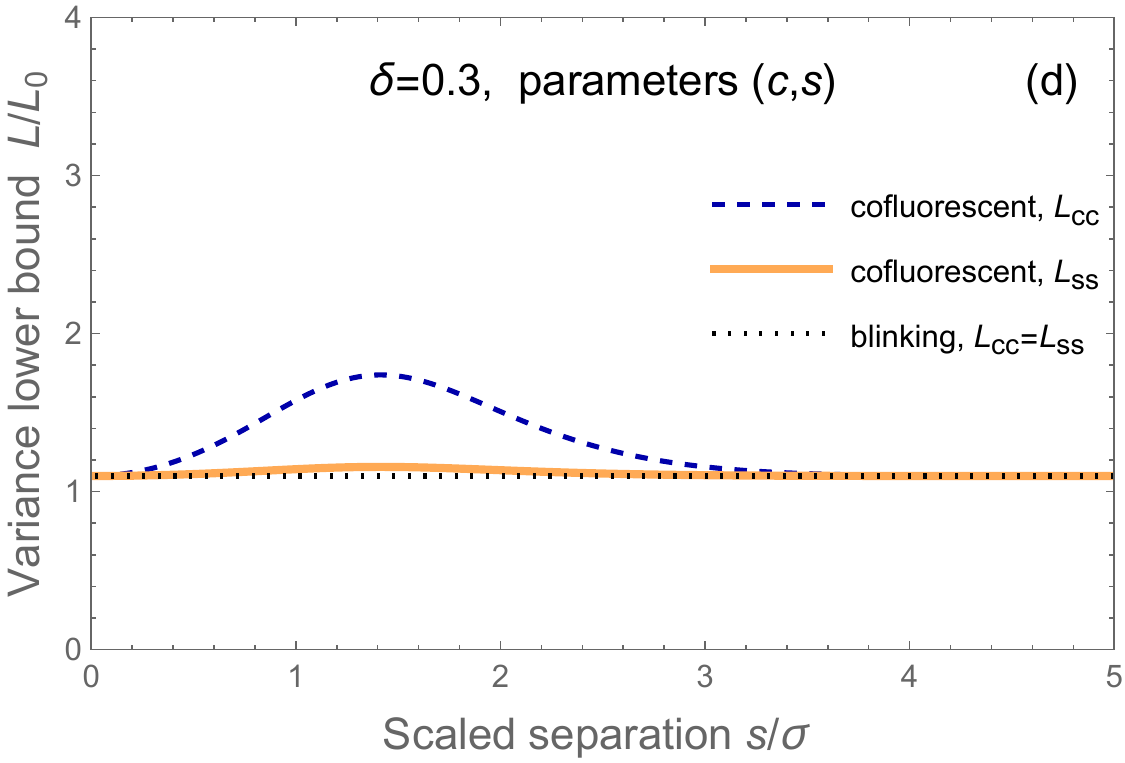}
\includegraphics[width=0.49\linewidth]{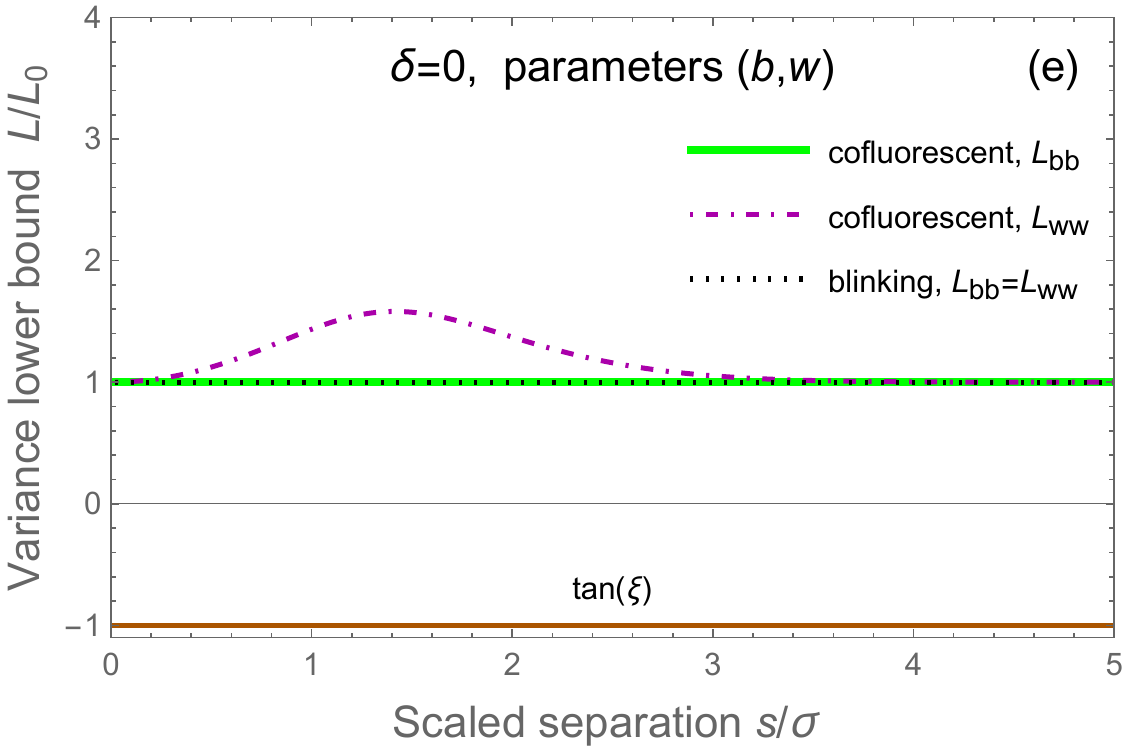}
\includegraphics[width=0.49\linewidth]{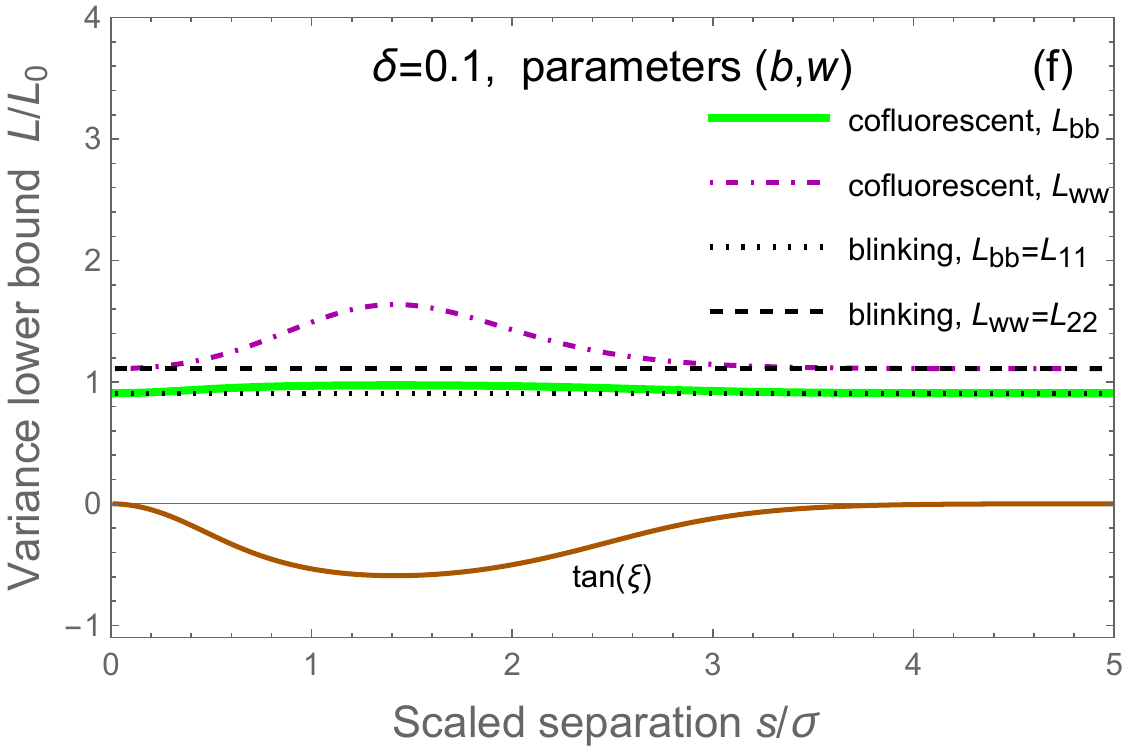}
\caption{Variance lower bounds in the units of $L_0=2\sigma^2/N$ for estimating the unknown positions $(x_1,x_2)$ of two incoherent sources by measuring optimally $N$ photons in a microscope with the point-spread function of standard deviation $\sigma$. $\delta$ is the relative brightness difference of the sources. Parameters $(c,s)$ are the scaled centroid and separation, respectively. Parameters $(b,w)$ are the eigenparameters of the Fisher information matrix obtained from $(x_1,x_2)$ via a rotation by angle $\xi$: $b=x_1\cos\xi + x_2\sin\xi $ and $w= x_2\cos\xi - x_1\sin\xi$. \label{fig:LQ}}
\end{figure*}

The variance lower bounds $L_{ii}=(\dirprod{\mathbf{Q}}^{-1})_{ii}$ are shown in Fig. \ref{fig:LQ}(a) for the case of equal brightnesses $\mu_1=\mu_2=\frac12$ and in Fig. \ref{fig:LQ}(b) for the case of unequal brightnesses $\mu_1\ne\mu_2$. We see that, in contrast to Figs. \ref{fig:L}(a) and (b), the variance lower bound in the cofluorescence scenario does not go to infinity at zero separation. This fact indicates the possibility of an efficient localization of two incoherent sources by means of optimal measurements, such as spatial mode demultiplexing \cite{Tsang16,Paur16,Tsang17,Boucher20,Rouviere24}. In Fig.~\ref{fig:LQ}(a), the lower bound for the cofluorescent scenario lies above that for the blinking scenario, in correspondence with Eq.~(\ref{MainQ}). 

\begin{figure*}[!ht]
\centering
\includegraphics[width=0.49\linewidth]
{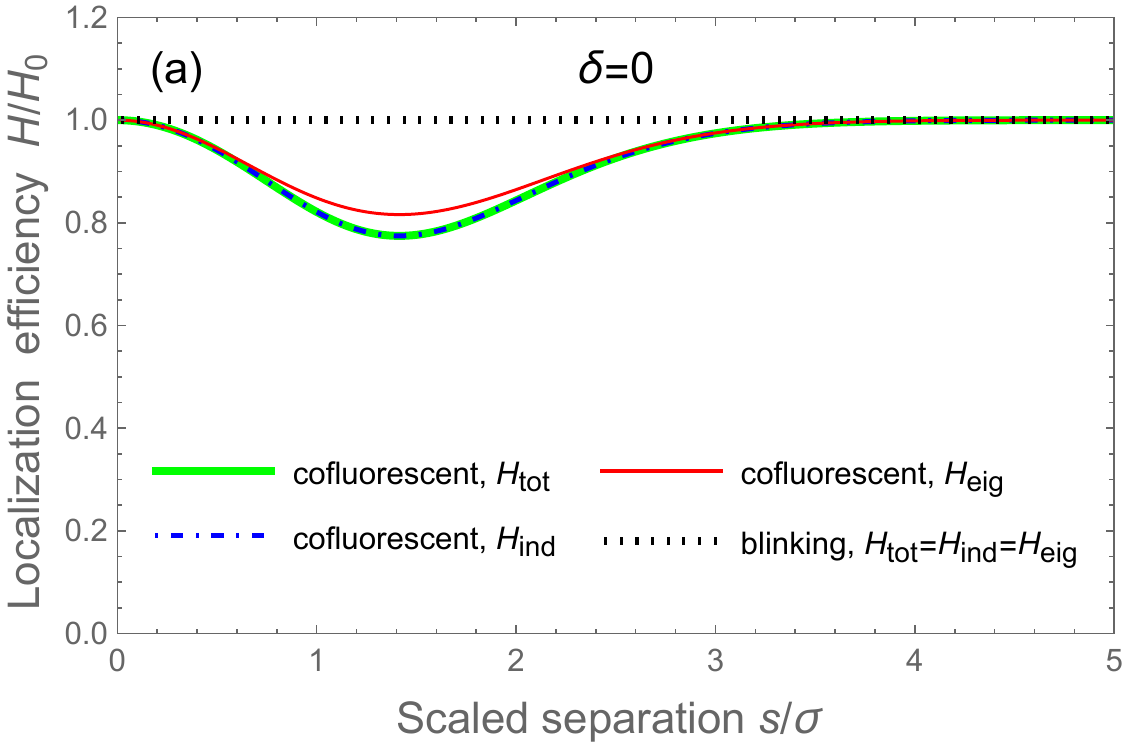}
\includegraphics[width=0.49\linewidth]
{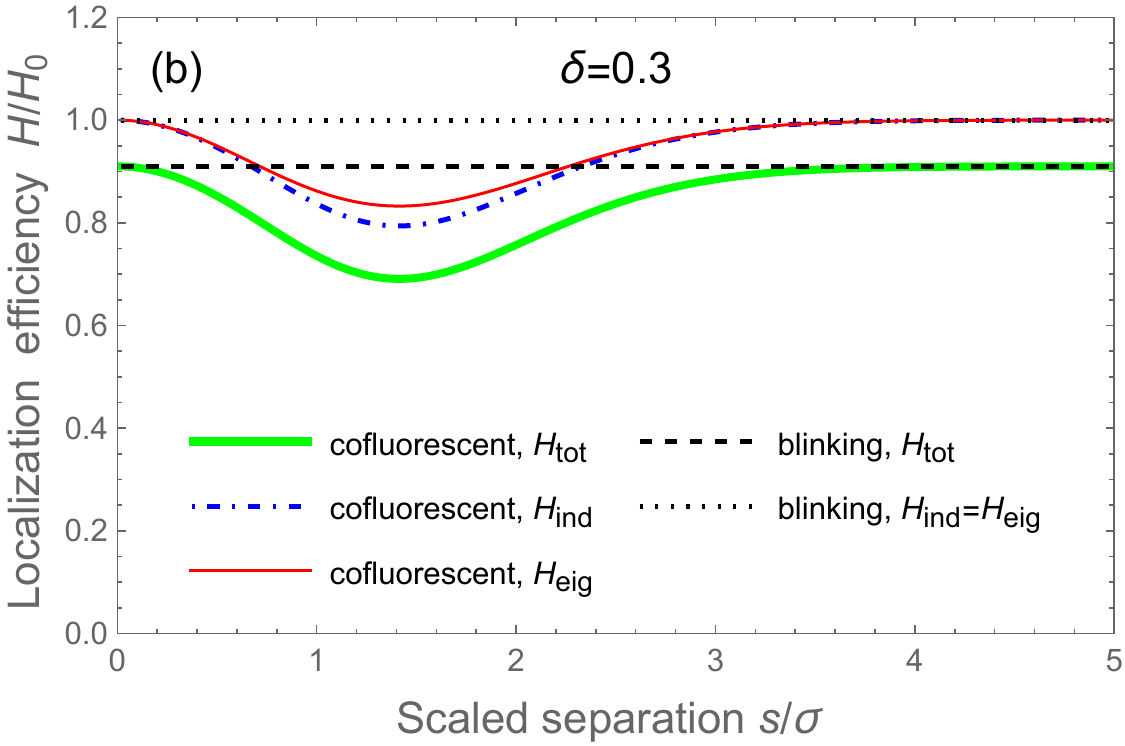}
\caption{Localization efficiency in the units of $H_0=N/2\sigma^2$ for estimating the unknown positions $(x_1,x_2)$ of two incoherent sources by optimally measuring $N$ photons in a microscope with the point-spread function of standard deviation $\sigma$ under condition of saturated quantum Cram\'er-Rao bound. The blinking scenario demonstrates a higher localization efficiency, whatever measure is used for the latter. \label{fig:EfficiencyQ}}
\end{figure*}

A similar behavior is shown in Figs.~\ref{fig:LQ}(c) and (d) for the scaled centroid and separation. The quantum Fisher information matrix is obtained in this basis as $\tilde{\dirprod{\mathbf{Q}}} =\mathbf{O}_{\pi/4}^T\dirprod{\mathbf{Q}}\mathbf{O}_{\pi/4}$, which gives the matrix $\tilde{\dirprod{\mathbf{Q}}}_\text{blink}$ in the form of Eq.~(\ref{tildeFblink1D}) and 
\begin{equation}\label{tildeQcoflu1D}
\tilde{\dirprod{\mathbf{Q}}}_\text{coflu} = \frac{N}{2\sigma^2}\left(\begin{array}{cc}
    1-2\beta & -\delta \\
    -\delta & 1
\end{array}\right).   
\end{equation}

The variance lower bounds for the eigenparameters $b$ and $w$ are shown in Figs.~\ref{fig:LQ}(e) and (f) together with the rotation angle $\xi$. We see that the latter tends to zero at zero separation in the case of unequal brightnesses, meaning that the best parameter in this limit is the position of the brighter source rather than the centroid as in Fig.~\ref{fig:L}(f).

The localization efficiency defined in three possible ways for a saturated quantum Cram\'er-Rao bound is shown in Fig.~\ref{fig:EfficiencyQ}(a) for the case of equal brightnesses. As in Fig.~\ref{fig:EfficiencyF}(a), we see that all three measures of localization efficiency are higher for blinking sources, which is a consequence of Eq.~(\ref{MainQ}). The case of unequal brightnesses is shown in Fig.~\ref{fig:EfficiencyQ}(b). We see that $H_\text{eig}$ is higher for blinking sources than for cofluorescent ones, which is a direct consequence of the trace convexity of the quantum Fisher information matrix, proven in Theorem \ref{theorem:convexityQ}.


\section{Conclusion}
Our comparison of cofluorescent and blinking scenarios for the task of localizing dim incoherent sources of light showed that the blinking scenario always has an informational advantage when the localization efficiency is measured by the average eigenparameter precision, which is a consequence of a fundamental property of the Fisher information matrix, its trace convexity proven in Theorem \ref{theorem:convexity}. This advantage is also proven for a more popular measure of efficiency, average position variance, however, only in the case of translation and rotation invariant microscopy and equal numbers of photons emitted by each source, which is a consequence of another fundamental property of Fisher information matrix, its weak convexity proven in the same theorem. Moreover, we have shown that this advantage is a consequence of the corresponding properties -- trace and weak matrix convexity -- of the quantum Fisher information matrix calculated for light on the image plane of a microscope. The informational advantage of the blinking scenario can be ascribed to a higher \emph{a priori information}, which one has about the number of simultaneously active emitters: 1 in the blinking scenario and $K>1$ in the cofluorescent one. The advantage in a priori information can be quantified by the advantage in the Fisher information calculated in this work.

We believe that these results are important for both the fields of superresolving optical microscopy and quantum metrology, at the junction of which they are located. On the one hand, they show that the information gain produced by blinking, well-known for decades in the superresolving optical microscopy, is a mathematical consequence of such properties of the Fisher information matrix as additivity and convexity. On the other hand, it shows that blinking of the sources can increase the Fisher information on their location even to a higher extent than the optimization of the quantum measurement, much studied in the last years in the framework of quantum metrology. 

The obtained results are important for a deeper understanding of techniques based on blinking fluorophores \cite{Vangindertael18,Schermelleh19,Defienne24}, which are instrumental in advancing superresolution imaging techniques and improving the sensitivity of fluorescence-based biosensors. Both fields are essential for modern biomedical research.

The area of applicability of these results is not reduced to spatial superresolution but also comprises superresolving spectroscopy that allows one to distinguish overlapping spectral lines of optical \cite{Motka16} and magnetic \cite{Rotem19,Oviedo20} fields. Another interesting extension of the developed theory is the analysis of super-resolution optical fluctuation imaging (SOFI) \cite{Dertinger09}, where the brightnesses of the sources do not fall to zero, as in SMLM, but fluctuate in time. A future study will show how the information gain provided by this technique is related to the convexity of the Fisher information.

\section*{Acknowledgments}
This work was supported by QuantERA ERA-NET Cofund programme of the European Union under project EXTRASENS (by the Research Council of Finland, decision 361115, and the Federal Ministry of Education and Research of Germany, grant 13N16935). It was funded by the Research Council of Finland (Flagship Programme PREIN, decision 346518; CHARACTER project, decision 357033), and the Horizon Europe MSCA FLORIN Project 101086142.

\bibliography{BIB-Superres2024}

\begin{thebibliography}{52}%
\makeatletter
\providecommand \@ifxundefined [1]{%
 \@ifx{#1\undefined}
}%
\providecommand \@ifnum [1]{%
 \ifnum #1\expandafter \@firstoftwo
 \else \expandafter \@secondoftwo
 \fi
}%
\providecommand \@ifx [1]{%
 \ifx #1\expandafter \@firstoftwo
 \else \expandafter \@secondoftwo
 \fi
}%
\providecommand \natexlab [1]{#1}%
\providecommand \enquote  [1]{``#1''}%
\providecommand \bibnamefont  [1]{#1}%
\providecommand \bibfnamefont [1]{#1}%
\providecommand \citenamefont [1]{#1}%
\providecommand \href@noop [0]{\@secondoftwo}%
\providecommand \href [0]{\begingroup \@sanitize@url \@href}%
\providecommand \@href[1]{\@@startlink{#1}\@@href}%
\providecommand \@@href[1]{\endgroup#1\@@endlink}%
\providecommand \@sanitize@url [0]{\catcode `\\12\catcode `\$12\catcode
  `\&12\catcode `\#12\catcode `\^12\catcode `\_12\catcode `\%12\relax}%
\providecommand \@@startlink[1]{}%
\providecommand \@@endlink[0]{}%
\providecommand \url  [0]{\begingroup\@sanitize@url \@url }%
\providecommand \@url [1]{\endgroup\@href {#1}{\urlprefix }}%
\providecommand \urlprefix  [0]{URL }%
\providecommand \Eprint [0]{\href }%
\providecommand \doibase [0]{https://doi.org/}%
\providecommand \selectlanguage [0]{\@gobble}%
\providecommand \bibinfo  [0]{\@secondoftwo}%
\providecommand \bibfield  [0]{\@secondoftwo}%
\providecommand \translation [1]{[#1]}%
\providecommand \BibitemOpen [0]{}%
\providecommand \bibitemStop [0]{}%
\providecommand \bibitemNoStop [0]{.\EOS\space}%
\providecommand \EOS [0]{\spacefactor3000\relax}%
\providecommand \BibitemShut  [1]{\csname bibitem#1\endcsname}%
\let\auto@bib@innerbib\@empty
\bibitem [{\citenamefont {Born}\ and\ \citenamefont {Wolf}(1999)}]{Born&Wolf}%
  \BibitemOpen
  \bibfield  {author} {\bibinfo {author} {\bibfnamefont {M.}~\bibnamefont
  {Born}}\ and\ \bibinfo {author} {\bibfnamefont {E.}~\bibnamefont {Wolf}},\
  }\href@noop {} {\emph {\bibinfo {title} {Principles of optics}}},\ \bibinfo
  {edition} {7th}\ ed.\ (\bibinfo  {publisher} {Cambridge University},\
  \bibinfo {year} {1999})\BibitemShut {NoStop}%
\bibitem [{\citenamefont {Vangindertael}\ \emph {et~al.}(2018)\citenamefont
  {Vangindertael}, \citenamefont {Camacho}, \citenamefont {Sempels},
  \citenamefont {Mizuno}, \citenamefont {Dedecker},\ and\ \citenamefont
  {Janssen}}]{Vangindertael18}%
  \BibitemOpen
  \bibfield  {author} {\bibinfo {author} {\bibfnamefont {J.}~\bibnamefont
  {Vangindertael}}, \bibinfo {author} {\bibfnamefont {R.}~\bibnamefont
  {Camacho}}, \bibinfo {author} {\bibfnamefont {W.}~\bibnamefont {Sempels}},
  \bibinfo {author} {\bibfnamefont {H.}~\bibnamefont {Mizuno}}, \bibinfo
  {author} {\bibfnamefont {P.}~\bibnamefont {Dedecker}},\ and\ \bibinfo
  {author} {\bibfnamefont {K.}~\bibnamefont {Janssen}},\ }\bibfield  {title}
  {\bibinfo {title} {An introduction to optical super-resolution microscopy for
  the adventurous biologist},\ }\href@noop {} {\bibfield  {journal} {\bibinfo
  {journal} {Methods Appl. Fluoresc.}\ }\textbf {\bibinfo {volume} {6}},\
  \bibinfo {pages} {022003} (\bibinfo {year} {2018})}\BibitemShut {NoStop}%
\bibitem [{\citenamefont {Schermelleh}\ \emph {et~al.}(2019)\citenamefont
  {Schermelleh}, \citenamefont {Ferrand}, \citenamefont {Huser}, \citenamefont
  {Eggeling}, \citenamefont {Sauer}, \citenamefont {Biehlmaier},\ and\
  \citenamefont {Drummen}}]{Schermelleh19}%
  \BibitemOpen
  \bibfield  {author} {\bibinfo {author} {\bibfnamefont {L.}~\bibnamefont
  {Schermelleh}}, \bibinfo {author} {\bibfnamefont {A.}~\bibnamefont
  {Ferrand}}, \bibinfo {author} {\bibfnamefont {T.}~\bibnamefont {Huser}},
  \bibinfo {author} {\bibfnamefont {C.}~\bibnamefont {Eggeling}}, \bibinfo
  {author} {\bibfnamefont {M.}~\bibnamefont {Sauer}}, \bibinfo {author}
  {\bibfnamefont {O.}~\bibnamefont {Biehlmaier}},\ and\ \bibinfo {author}
  {\bibfnamefont {G.~P.}\ \bibnamefont {Drummen}},\ }\bibfield  {title}
  {\bibinfo {title} {Super-resolution microscopy demystified},\ }\href@noop {}
  {\bibfield  {journal} {\bibinfo  {journal} {Nat. Cell Biology}\ }\textbf
  {\bibinfo {volume} {21}},\ \bibinfo {pages} {72} (\bibinfo {year}
  {2019})}\BibitemShut {NoStop}%
\bibitem [{\citenamefont {Defienne}\ \emph {et~al.}(2024)\citenamefont
  {Defienne}, \citenamefont {Bowen}, \citenamefont {Chekhova}, \citenamefont
  {Lemos}, \citenamefont {Oron}, \citenamefont {Ramelow}, \citenamefont
  {Treps},\ and\ \citenamefont {Faccio}}]{Defienne24}%
  \BibitemOpen
  \bibfield  {author} {\bibinfo {author} {\bibfnamefont {H.}~\bibnamefont
  {Defienne}}, \bibinfo {author} {\bibfnamefont {W.~P.}\ \bibnamefont {Bowen}},
  \bibinfo {author} {\bibfnamefont {M.}~\bibnamefont {Chekhova}}, \bibinfo
  {author} {\bibfnamefont {G.~B.}\ \bibnamefont {Lemos}}, \bibinfo {author}
  {\bibfnamefont {D.}~\bibnamefont {Oron}}, \bibinfo {author} {\bibfnamefont
  {S.}~\bibnamefont {Ramelow}}, \bibinfo {author} {\bibfnamefont
  {N.}~\bibnamefont {Treps}},\ and\ \bibinfo {author} {\bibfnamefont
  {D.}~\bibnamefont {Faccio}},\ }\bibfield  {title} {\bibinfo {title} {Advances
  in quantum imaging},\ }\href@noop {} {\bibfield  {journal} {\bibinfo
  {journal} {Nat. Photonics}\ }\textbf {\bibinfo {volume} {18}},\ \bibinfo
  {pages} {1024} (\bibinfo {year} {2024})}\BibitemShut {NoStop}%
\bibitem [{\citenamefont {Deschout}\ \emph {et~al.}(2014)\citenamefont
  {Deschout}, \citenamefont {Zanacchi}, \citenamefont {Mlodzianoski},
  \citenamefont {Diaspro}, \citenamefont {Bewersdorf}, \citenamefont {Hess},\
  and\ \citenamefont {Braeckmans}}]{Deschout14}%
  \BibitemOpen
  \bibfield  {author} {\bibinfo {author} {\bibfnamefont {H.}~\bibnamefont
  {Deschout}}, \bibinfo {author} {\bibfnamefont {F.~C.}\ \bibnamefont
  {Zanacchi}}, \bibinfo {author} {\bibfnamefont {M.}~\bibnamefont
  {Mlodzianoski}}, \bibinfo {author} {\bibfnamefont {A.}~\bibnamefont
  {Diaspro}}, \bibinfo {author} {\bibfnamefont {J.}~\bibnamefont {Bewersdorf}},
  \bibinfo {author} {\bibfnamefont {S.~T.}\ \bibnamefont {Hess}},\ and\
  \bibinfo {author} {\bibfnamefont {K.}~\bibnamefont {Braeckmans}},\ }\bibfield
   {title} {\bibinfo {title} {Precisely and accurately localizing single
  emitters in fluorescence microscopy},\ }\href@noop {} {\bibfield  {journal}
  {\bibinfo  {journal} {Nat. Methods}\ }\textbf {\bibinfo {volume} {11}},\
  \bibinfo {pages} {253} (\bibinfo {year} {2014})}\BibitemShut {NoStop}%
\bibitem [{\citenamefont {Lelek}\ \emph {et~al.}(2021)\citenamefont {Lelek},
  \citenamefont {Gyparaki}, \citenamefont {Beliu}, \citenamefont {Schueder},
  \citenamefont {Griffi{\'e}}, \citenamefont {Manley}, \citenamefont
  {Jungmann}, \citenamefont {Sauer}, \citenamefont {Lakadamyali},\ and\
  \citenamefont {Zimmer}}]{Lelek21}%
  \BibitemOpen
  \bibfield  {author} {\bibinfo {author} {\bibfnamefont {M.}~\bibnamefont
  {Lelek}}, \bibinfo {author} {\bibfnamefont {M.~T.}\ \bibnamefont {Gyparaki}},
  \bibinfo {author} {\bibfnamefont {G.}~\bibnamefont {Beliu}}, \bibinfo
  {author} {\bibfnamefont {F.}~\bibnamefont {Schueder}}, \bibinfo {author}
  {\bibfnamefont {J.}~\bibnamefont {Griffi{\'e}}}, \bibinfo {author}
  {\bibfnamefont {S.}~\bibnamefont {Manley}}, \bibinfo {author} {\bibfnamefont
  {R.}~\bibnamefont {Jungmann}}, \bibinfo {author} {\bibfnamefont
  {M.}~\bibnamefont {Sauer}}, \bibinfo {author} {\bibfnamefont
  {M.}~\bibnamefont {Lakadamyali}},\ and\ \bibinfo {author} {\bibfnamefont
  {C.}~\bibnamefont {Zimmer}},\ }\bibfield  {title} {\bibinfo {title}
  {Single-molecule localization microscopy},\ }\href@noop {} {\bibfield
  {journal} {\bibinfo  {journal} {Nat. Reviews Methods Primers}\ }\textbf
  {\bibinfo {volume} {1}},\ \bibinfo {pages} {39} (\bibinfo {year}
  {2021})}\BibitemShut {NoStop}%
\bibitem [{\citenamefont {Fazel}\ \emph {et~al.}(2024)\citenamefont {Fazel},
  \citenamefont {Grussmayer}, \citenamefont {Ferdman}, \citenamefont
  {Radenovic}, \citenamefont {Shechtman}, \citenamefont {Enderlein},\ and\
  \citenamefont {Press\'e}}]{Fazel24}%
  \BibitemOpen
  \bibfield  {author} {\bibinfo {author} {\bibfnamefont {M.}~\bibnamefont
  {Fazel}}, \bibinfo {author} {\bibfnamefont {K.~S.}\ \bibnamefont
  {Grussmayer}}, \bibinfo {author} {\bibfnamefont {B.}~\bibnamefont {Ferdman}},
  \bibinfo {author} {\bibfnamefont {A.}~\bibnamefont {Radenovic}}, \bibinfo
  {author} {\bibfnamefont {Y.}~\bibnamefont {Shechtman}}, \bibinfo {author}
  {\bibfnamefont {J.}~\bibnamefont {Enderlein}},\ and\ \bibinfo {author}
  {\bibfnamefont {S.}~\bibnamefont {Press\'e}},\ }\bibfield  {title} {\bibinfo
  {title} {Fluorescence microscopy: A statistics-optics perspective},\ }\href
  {https://doi.org/10.1103/RevModPhys.96.025003} {\bibfield  {journal}
  {\bibinfo  {journal} {Rev. Mod. Phys.}\ }\textbf {\bibinfo {volume} {96}},\
  \bibinfo {pages} {025003} (\bibinfo {year} {2024})}\BibitemShut {NoStop}%
\bibitem [{\citenamefont {Betzig}\ \emph {et~al.}(2006)\citenamefont {Betzig},
  \citenamefont {Patterson}, \citenamefont {Sougrat}, \citenamefont
  {Lindwasser}, \citenamefont {Olenych}, \citenamefont {Bonifacino},
  \citenamefont {Davidson}, \citenamefont {Lippincott-Schwartz},\ and\
  \citenamefont {Hess}}]{Betzig06}%
  \BibitemOpen
  \bibfield  {author} {\bibinfo {author} {\bibfnamefont {E.}~\bibnamefont
  {Betzig}}, \bibinfo {author} {\bibfnamefont {G.~H.}\ \bibnamefont
  {Patterson}}, \bibinfo {author} {\bibfnamefont {R.}~\bibnamefont {Sougrat}},
  \bibinfo {author} {\bibfnamefont {O.~W.}\ \bibnamefont {Lindwasser}},
  \bibinfo {author} {\bibfnamefont {S.}~\bibnamefont {Olenych}}, \bibinfo
  {author} {\bibfnamefont {J.~S.}\ \bibnamefont {Bonifacino}}, \bibinfo
  {author} {\bibfnamefont {M.~W.}\ \bibnamefont {Davidson}}, \bibinfo {author}
  {\bibfnamefont {J.}~\bibnamefont {Lippincott-Schwartz}},\ and\ \bibinfo
  {author} {\bibfnamefont {H.~F.}\ \bibnamefont {Hess}},\ }\bibfield  {title}
  {\bibinfo {title} {Imaging intracellular fluorescent proteins at nanometer
  resolution},\ }\href@noop {} {\bibfield  {journal} {\bibinfo  {journal}
  {Science}\ }\textbf {\bibinfo {volume} {313}},\ \bibinfo {pages} {1642}
  (\bibinfo {year} {2006})}\BibitemShut {NoStop}%
\bibitem [{\citenamefont {Rust}\ \emph {et~al.}(2006)\citenamefont {Rust},
  \citenamefont {Bates},\ and\ \citenamefont {Zhuang}}]{Rust06}%
  \BibitemOpen
  \bibfield  {author} {\bibinfo {author} {\bibfnamefont {M.~J.}\ \bibnamefont
  {Rust}}, \bibinfo {author} {\bibfnamefont {M.}~\bibnamefont {Bates}},\ and\
  \bibinfo {author} {\bibfnamefont {X.}~\bibnamefont {Zhuang}},\ }\bibfield
  {title} {\bibinfo {title} {Sub-diffraction-limit imaging by stochastic
  optical reconstruction microscopy ({STORM})},\ }\href@noop {} {\bibfield
  {journal} {\bibinfo  {journal} {Nature methods}\ }\textbf {\bibinfo {volume}
  {3}},\ \bibinfo {pages} {793} (\bibinfo {year} {2006})}\BibitemShut {NoStop}%
\bibitem [{\citenamefont {Sharonov}\ and\ \citenamefont
  {Hochstrasser}(2006)}]{Sharonov06}%
  \BibitemOpen
  \bibfield  {author} {\bibinfo {author} {\bibfnamefont {A.}~\bibnamefont
  {Sharonov}}\ and\ \bibinfo {author} {\bibfnamefont {R.~M.}\ \bibnamefont
  {Hochstrasser}},\ }\bibfield  {title} {\bibinfo {title} {Wide-field
  subdiffraction imaging by accumulated binding of diffusing probes},\
  }\href@noop {} {\bibfield  {journal} {\bibinfo  {journal} {Proc. Nat. Acad.
  Sci.}\ }\textbf {\bibinfo {volume} {103}},\ \bibinfo {pages} {18911}
  (\bibinfo {year} {2006})}\BibitemShut {NoStop}%
\bibitem [{\citenamefont {Gu}\ \emph {et~al.}(2013)\citenamefont {Gu},
  \citenamefont {Cao}, \citenamefont {Castelletto}, \citenamefont
  {Kouskousis},\ and\ \citenamefont {Li}}]{Qu13}%
  \BibitemOpen
  \bibfield  {author} {\bibinfo {author} {\bibfnamefont {M.}~\bibnamefont
  {Gu}}, \bibinfo {author} {\bibfnamefont {Y.}~\bibnamefont {Cao}}, \bibinfo
  {author} {\bibfnamefont {S.}~\bibnamefont {Castelletto}}, \bibinfo {author}
  {\bibfnamefont {B.}~\bibnamefont {Kouskousis}},\ and\ \bibinfo {author}
  {\bibfnamefont {X.}~\bibnamefont {Li}},\ }\bibfield  {title} {\bibinfo
  {title} {Super-resolving single nitrogen vacancy centers within single
  nanodiamonds using a localization microscope},\ }\href@noop {} {\bibfield
  {journal} {\bibinfo  {journal} {Opt. Express}\ }\textbf {\bibinfo {volume}
  {21}},\ \bibinfo {pages} {17639} (\bibinfo {year} {2013})}\BibitemShut
  {NoStop}%
\bibitem [{\citenamefont {Pfender}\ \emph {et~al.}(2014)\citenamefont
  {Pfender}, \citenamefont {Aslam}, \citenamefont {Waldherr}, \citenamefont
  {Neumann},\ and\ \citenamefont {Wrachtrup}}]{Pfender14}%
  \BibitemOpen
  \bibfield  {author} {\bibinfo {author} {\bibfnamefont {M.}~\bibnamefont
  {Pfender}}, \bibinfo {author} {\bibfnamefont {N.}~\bibnamefont {Aslam}},
  \bibinfo {author} {\bibfnamefont {G.}~\bibnamefont {Waldherr}}, \bibinfo
  {author} {\bibfnamefont {P.}~\bibnamefont {Neumann}},\ and\ \bibinfo {author}
  {\bibfnamefont {J.}~\bibnamefont {Wrachtrup}},\ }\bibfield  {title} {\bibinfo
  {title} {Single-spin stochastic optical reconstruction microscopy},\
  }\href@noop {} {\bibfield  {journal} {\bibinfo  {journal} {Proc. Nat. Acad.
  Sci.}\ }\textbf {\bibinfo {volume} {111}},\ \bibinfo {pages} {14669}
  (\bibinfo {year} {2014})}\BibitemShut {NoStop}%
\bibitem [{\citenamefont {Small}\ and\ \citenamefont
  {Stahlheber}(2014)}]{Small14}%
  \BibitemOpen
  \bibfield  {author} {\bibinfo {author} {\bibfnamefont {A.}~\bibnamefont
  {Small}}\ and\ \bibinfo {author} {\bibfnamefont {S.}~\bibnamefont
  {Stahlheber}},\ }\bibfield  {title} {\bibinfo {title} {Fluorophore
  localization algorithms for super-resolution microscopy},\ }\href@noop {}
  {\bibfield  {journal} {\bibinfo  {journal} {Nat. Methods}\ }\textbf {\bibinfo
  {volume} {11}},\ \bibinfo {pages} {267} (\bibinfo {year} {2014})}\BibitemShut
  {NoStop}%
\bibitem [{\citenamefont {Thompson}\ \emph {et~al.}(2002)\citenamefont
  {Thompson}, \citenamefont {Larson},\ and\ \citenamefont {Webb}}]{Thompson02}%
  \BibitemOpen
  \bibfield  {author} {\bibinfo {author} {\bibfnamefont {R.~E.}\ \bibnamefont
  {Thompson}}, \bibinfo {author} {\bibfnamefont {D.~R.}\ \bibnamefont
  {Larson}},\ and\ \bibinfo {author} {\bibfnamefont {W.~W.}\ \bibnamefont
  {Webb}},\ }\bibfield  {title} {\bibinfo {title} {Precise nanometer
  localization analysis for individual fluorescent probes},\ }\href@noop {}
  {\bibfield  {journal} {\bibinfo  {journal} {Biophys. J.}\ }\textbf {\bibinfo
  {volume} {82}},\ \bibinfo {pages} {2775} (\bibinfo {year}
  {2002})}\BibitemShut {NoStop}%
\bibitem [{\citenamefont {Yildiz}\ \emph {et~al.}(2003)\citenamefont {Yildiz},
  \citenamefont {Forkey}, \citenamefont {McKinney}, \citenamefont {Ha},
  \citenamefont {Goldman},\ and\ \citenamefont {Selvin}}]{Yildiz03}%
  \BibitemOpen
  \bibfield  {author} {\bibinfo {author} {\bibfnamefont {A.}~\bibnamefont
  {Yildiz}}, \bibinfo {author} {\bibfnamefont {J.~N.}\ \bibnamefont {Forkey}},
  \bibinfo {author} {\bibfnamefont {S.~A.}\ \bibnamefont {McKinney}}, \bibinfo
  {author} {\bibfnamefont {T.}~\bibnamefont {Ha}}, \bibinfo {author}
  {\bibfnamefont {Y.~E.}\ \bibnamefont {Goldman}},\ and\ \bibinfo {author}
  {\bibfnamefont {P.~R.}\ \bibnamefont {Selvin}},\ }\bibfield  {title}
  {\bibinfo {title} {Myosin {V} walks hand-over-hand: single fluorophore
  imaging with 1.5-nm localization},\ }\href@noop {} {\bibfield  {journal}
  {\bibinfo  {journal} {Science}\ }\textbf {\bibinfo {volume} {300}},\ \bibinfo
  {pages} {2061} (\bibinfo {year} {2003})}\BibitemShut {NoStop}%
\bibitem [{\citenamefont {Lidke}\ \emph {et~al.}(2005)\citenamefont {Lidke},
  \citenamefont {Rieger}, \citenamefont {Jovin},\ and\ \citenamefont
  {Heintzmann}}]{Lidke05}%
  \BibitemOpen
  \bibfield  {author} {\bibinfo {author} {\bibfnamefont {K.~A.}\ \bibnamefont
  {Lidke}}, \bibinfo {author} {\bibfnamefont {B.}~\bibnamefont {Rieger}},
  \bibinfo {author} {\bibfnamefont {T.~M.}\ \bibnamefont {Jovin}},\ and\
  \bibinfo {author} {\bibfnamefont {R.}~\bibnamefont {Heintzmann}},\ }\bibfield
   {title} {\bibinfo {title} {Superresolution by localization of quantum dots
  using blinking statistics},\ }\href@noop {} {\bibfield  {journal} {\bibinfo
  {journal} {Opt. Express}\ }\textbf {\bibinfo {volume} {13}},\ \bibinfo
  {pages} {7052} (\bibinfo {year} {2005})}\BibitemShut {NoStop}%
\bibitem [{\citenamefont {Huang}\ \emph {et~al.}(2011)\citenamefont {Huang},
  \citenamefont {Schwartz}, \citenamefont {Byars},\ and\ \citenamefont
  {Lidke}}]{Huang11}%
  \BibitemOpen
  \bibfield  {author} {\bibinfo {author} {\bibfnamefont {F.}~\bibnamefont
  {Huang}}, \bibinfo {author} {\bibfnamefont {S.~L.}\ \bibnamefont {Schwartz}},
  \bibinfo {author} {\bibfnamefont {J.~M.}\ \bibnamefont {Byars}},\ and\
  \bibinfo {author} {\bibfnamefont {K.~A.}\ \bibnamefont {Lidke}},\ }\bibfield
  {title} {\bibinfo {title} {Simultaneous multiple-emitter fitting for single
  molecule super-resolution imaging},\ }\href@noop {} {\bibfield  {journal}
  {\bibinfo  {journal} {Biomed. Opt. Express}\ }\textbf {\bibinfo {volume}
  {2}},\ \bibinfo {pages} {1377} (\bibinfo {year} {2011})}\BibitemShut
  {NoStop}%
\bibitem [{\citenamefont {Cox}\ and\ \citenamefont
  {Hinkley}(1974)}]{Cox&Hinkley}%
  \BibitemOpen
  \bibfield  {author} {\bibinfo {author} {\bibfnamefont {D.~R.}\ \bibnamefont
  {Cox}}\ and\ \bibinfo {author} {\bibfnamefont {D.~V.}\ \bibnamefont
  {Hinkley}},\ }\href@noop {} {\emph {\bibinfo {title} {Theoretical
  statistics}}}\ (\bibinfo  {publisher} {Chapman and Hall},\ \bibinfo {address}
  {London},\ \bibinfo {year} {1974})\BibitemShut {NoStop}%
\bibitem [{\citenamefont {Lehmann}\ and\ \citenamefont
  {Casella}(2006)}]{Lehmann-Casella}%
  \BibitemOpen
  \bibfield  {author} {\bibinfo {author} {\bibfnamefont {E.~L.}\ \bibnamefont
  {Lehmann}}\ and\ \bibinfo {author} {\bibfnamefont {G.}~\bibnamefont
  {Casella}},\ }\href@noop {} {\emph {\bibinfo {title} {Theory of point
  estimation}}}\ (\bibinfo  {publisher} {Springer},\ \bibinfo {address}
  {Berlin},\ \bibinfo {year} {2006})\BibitemShut {NoStop}%
\bibitem [{\citenamefont {Ram}\ \emph {et~al.}(2006)\citenamefont {Ram},
  \citenamefont {Ward},\ and\ \citenamefont {Ober}}]{Ram06}%
  \BibitemOpen
  \bibfield  {author} {\bibinfo {author} {\bibfnamefont {S.}~\bibnamefont
  {Ram}}, \bibinfo {author} {\bibfnamefont {E.~S.}\ \bibnamefont {Ward}},\ and\
  \bibinfo {author} {\bibfnamefont {R.~J.}\ \bibnamefont {Ober}},\ }\bibfield
  {title} {\bibinfo {title} {Beyond {Rayleigh's} criterion: a resolution
  measure with application to single-molecule microscopy},\ }\href@noop {}
  {\bibfield  {journal} {\bibinfo  {journal} {Proc. Nat. Acad. Sci.}\ }\textbf
  {\bibinfo {volume} {103}},\ \bibinfo {pages} {4457} (\bibinfo {year}
  {2006})}\BibitemShut {NoStop}%
\bibitem [{\citenamefont {Chao}\ \emph {et~al.}(2016)\citenamefont {Chao},
  \citenamefont {Sally~Ward},\ and\ \citenamefont {Ober}}]{Chao16}%
  \BibitemOpen
  \bibfield  {author} {\bibinfo {author} {\bibfnamefont {J.}~\bibnamefont
  {Chao}}, \bibinfo {author} {\bibfnamefont {E.}~\bibnamefont {Sally~Ward}},\
  and\ \bibinfo {author} {\bibfnamefont {R.~J.}\ \bibnamefont {Ober}},\
  }\bibfield  {title} {\bibinfo {title} {Fisher information theory for
  parameter estimation in single molecule microscopy: tutorial},\ }\href@noop
  {} {\bibfield  {journal} {\bibinfo  {journal} {J. Opt. Soc. Am. A}\ }\textbf
  {\bibinfo {volume} {33}},\ \bibinfo {pages} {B36} (\bibinfo {year}
  {2016})}\BibitemShut {NoStop}%
\bibitem [{\citenamefont {Tsang}\ \emph {et~al.}(2016)\citenamefont {Tsang},
  \citenamefont {Nair},\ and\ \citenamefont {Lu}}]{Tsang16}%
  \BibitemOpen
  \bibfield  {author} {\bibinfo {author} {\bibfnamefont {M.}~\bibnamefont
  {Tsang}}, \bibinfo {author} {\bibfnamefont {R.}~\bibnamefont {Nair}},\ and\
  \bibinfo {author} {\bibfnamefont {X.-M.}\ \bibnamefont {Lu}},\ }\bibfield
  {title} {\bibinfo {title} {Quantum theory of superresolution for two
  incoherent optical point sources},\ }\href@noop {} {\bibfield  {journal}
  {\bibinfo  {journal} {Phys. Rev. X}\ }\textbf {\bibinfo {volume} {6}},\
  \bibinfo {pages} {031033} (\bibinfo {year} {2016})}\BibitemShut {NoStop}%
\bibitem [{\citenamefont {Giovannetti}\ \emph {et~al.}(2006)\citenamefont
  {Giovannetti}, \citenamefont {Lloyd},\ and\ \citenamefont
  {Maccone}}]{Giovannetti06}%
  \BibitemOpen
  \bibfield  {author} {\bibinfo {author} {\bibfnamefont {V.}~\bibnamefont
  {Giovannetti}}, \bibinfo {author} {\bibfnamefont {S.}~\bibnamefont {Lloyd}},\
  and\ \bibinfo {author} {\bibfnamefont {L.}~\bibnamefont {Maccone}},\
  }\bibfield  {title} {\bibinfo {title} {Quantum metrology},\ }\href@noop {}
  {\bibfield  {journal} {\bibinfo  {journal} {Phys. Rev. Lett.}\ }\textbf
  {\bibinfo {volume} {96}},\ \bibinfo {pages} {010401} (\bibinfo {year}
  {2006})}\BibitemShut {NoStop}%
\bibitem [{\citenamefont {Pirandola}\ \emph {et~al.}(2018)\citenamefont
  {Pirandola}, \citenamefont {Bardhan}, \citenamefont {Gehring}, \citenamefont
  {Weedbrook},\ and\ \citenamefont {Lloyd}}]{Pirandola18}%
  \BibitemOpen
  \bibfield  {author} {\bibinfo {author} {\bibfnamefont {S.}~\bibnamefont
  {Pirandola}}, \bibinfo {author} {\bibfnamefont {B.~R.}\ \bibnamefont
  {Bardhan}}, \bibinfo {author} {\bibfnamefont {T.}~\bibnamefont {Gehring}},
  \bibinfo {author} {\bibfnamefont {C.}~\bibnamefont {Weedbrook}},\ and\
  \bibinfo {author} {\bibfnamefont {S.}~\bibnamefont {Lloyd}},\ }\bibfield
  {title} {\bibinfo {title} {Advances in photonic quantum sensing},\
  }\href@noop {} {\bibfield  {journal} {\bibinfo  {journal} {Nature Photonics}\
  }\textbf {\bibinfo {volume} {12}},\ \bibinfo {pages} {724} (\bibinfo {year}
  {2018})}\BibitemShut {NoStop}%
\bibitem [{\citenamefont {Albarelli}\ \emph {et~al.}(2020)\citenamefont
  {Albarelli}, \citenamefont {Barbieri}, \citenamefont {Genoni},\ and\
  \citenamefont {Gianani}}]{Albarelli20}%
  \BibitemOpen
  \bibfield  {author} {\bibinfo {author} {\bibfnamefont {F.}~\bibnamefont
  {Albarelli}}, \bibinfo {author} {\bibfnamefont {M.}~\bibnamefont {Barbieri}},
  \bibinfo {author} {\bibfnamefont {M.~G.}\ \bibnamefont {Genoni}},\ and\
  \bibinfo {author} {\bibfnamefont {I.}~\bibnamefont {Gianani}},\ }\bibfield
  {title} {\bibinfo {title} {A perspective on multiparameter quantum metrology:
  From theoretical tools to applications in quantum imaging},\ }\href@noop {}
  {\bibfield  {journal} {\bibinfo  {journal} {Phys. Lett. A}\ }\textbf
  {\bibinfo {volume} {384}},\ \bibinfo {pages} {126311} (\bibinfo {year}
  {2020})}\BibitemShut {NoStop}%
\bibitem [{\citenamefont {Barbieri}(2022)}]{Barbieri22}%
  \BibitemOpen
  \bibfield  {author} {\bibinfo {author} {\bibfnamefont {M.}~\bibnamefont
  {Barbieri}},\ }\bibfield  {title} {\bibinfo {title} {Optical quantum
  metrology},\ }\href@noop {} {\bibfield  {journal} {\bibinfo  {journal} {PRX
  Quantum}\ }\textbf {\bibinfo {volume} {3}},\ \bibinfo {pages} {010202}
  (\bibinfo {year} {2022})}\BibitemShut {NoStop}%
\bibitem [{\citenamefont {Tsang}(2017)}]{Tsang17}%
  \BibitemOpen
  \bibfield  {author} {\bibinfo {author} {\bibfnamefont {M.}~\bibnamefont
  {Tsang}},\ }\bibfield  {title} {\bibinfo {title} {Subdiffraction incoherent
  optical imaging via spatial-mode demultiplexing},\ }\href@noop {} {\bibfield
  {journal} {\bibinfo  {journal} {New J. Phys.}\ }\textbf {\bibinfo {volume}
  {19}},\ \bibinfo {pages} {023054} (\bibinfo {year} {2017})}\BibitemShut
  {NoStop}%
\bibitem [{\citenamefont {{\v{R}}eha{\v{c}}ek}\ \emph
  {et~al.}(2017)\citenamefont {{\v{R}}eha{\v{c}}ek}, \citenamefont {Hradil},
  \citenamefont {Stoklasa}, \citenamefont {Pa{\'u}r}, \citenamefont {Grover},
  \citenamefont {Krzic},\ and\ \citenamefont {S{\'a}nchez-Soto}}]{Rehacek17}%
  \BibitemOpen
  \bibfield  {author} {\bibinfo {author} {\bibfnamefont {J.}~\bibnamefont
  {{\v{R}}eha{\v{c}}ek}}, \bibinfo {author} {\bibfnamefont {Z.}~\bibnamefont
  {Hradil}}, \bibinfo {author} {\bibfnamefont {B.}~\bibnamefont {Stoklasa}},
  \bibinfo {author} {\bibfnamefont {M.}~\bibnamefont {Pa{\'u}r}}, \bibinfo
  {author} {\bibfnamefont {J.}~\bibnamefont {Grover}}, \bibinfo {author}
  {\bibfnamefont {A.}~\bibnamefont {Krzic}},\ and\ \bibinfo {author}
  {\bibfnamefont {L.}~\bibnamefont {S{\'a}nchez-Soto}},\ }\bibfield  {title}
  {\bibinfo {title} {Multiparameter quantum metrology of incoherent point
  sources: towards realistic superresolution},\ }\href@noop {} {\bibfield
  {journal} {\bibinfo  {journal} {Phys. Rev. A}\ }\textbf {\bibinfo {volume}
  {96}},\ \bibinfo {pages} {062107} (\bibinfo {year} {2017})}\BibitemShut
  {NoStop}%
\bibitem [{\citenamefont {Bisketzi}\ \emph {et~al.}(2019)\citenamefont
  {Bisketzi}, \citenamefont {Branford},\ and\ \citenamefont
  {Datta}}]{Bisketzi19}%
  \BibitemOpen
  \bibfield  {author} {\bibinfo {author} {\bibfnamefont {E.}~\bibnamefont
  {Bisketzi}}, \bibinfo {author} {\bibfnamefont {D.}~\bibnamefont {Branford}},\
  and\ \bibinfo {author} {\bibfnamefont {A.}~\bibnamefont {Datta}},\ }\bibfield
   {title} {\bibinfo {title} {Quantum limits of localisation microscopy},\
  }\href@noop {} {\bibfield  {journal} {\bibinfo  {journal} {New J. Phys.}\
  }\textbf {\bibinfo {volume} {21}},\ \bibinfo {pages} {123032} (\bibinfo
  {year} {2019})}\BibitemShut {NoStop}%
\bibitem [{\citenamefont {Hervas}\ \emph {et~al.}(2024)\citenamefont {Hervas},
  \citenamefont {S\'anchez-Soto}, \citenamefont {Goldberg}, \citenamefont
  {Hradil},\ and\ \citenamefont {\ifmmode \check{R}\else
  \v{R}\fi{}eh\'a\ifmmode~\check{c}\else \v{c}\fi{}ek}}]{Hervas24}%
  \BibitemOpen
  \bibfield  {author} {\bibinfo {author} {\bibfnamefont {J.~R.}\ \bibnamefont
  {Hervas}}, \bibinfo {author} {\bibfnamefont {L.~L.}\ \bibnamefont
  {S\'anchez-Soto}}, \bibinfo {author} {\bibfnamefont {A.~Z.}\ \bibnamefont
  {Goldberg}}, \bibinfo {author} {\bibfnamefont {Z.}~\bibnamefont {Hradil}},\
  and\ \bibinfo {author} {\bibfnamefont {J.}~\bibnamefont {\ifmmode
  \check{R}\else \v{R}\fi{}eh\'a\ifmmode~\check{c}\else \v{c}\fi{}ek}},\
  }\bibfield  {title} {\bibinfo {title} {Optimizing measurement tradeoffs in
  multiparameter spatial superresolution},\ }\href
  {https://doi.org/10.1103/PhysRevA.110.033716} {\bibfield  {journal} {\bibinfo
   {journal} {Phys. Rev. A}\ }\textbf {\bibinfo {volume} {110}},\ \bibinfo
  {pages} {033716} (\bibinfo {year} {2024})}\BibitemShut {NoStop}%
\bibitem [{\citenamefont {Mikhalychev}\ \emph {et~al.}(2019)\citenamefont
  {Mikhalychev}, \citenamefont {Bessire}, \citenamefont {Karuseichyk},
  \citenamefont {Sakovich}, \citenamefont {Untern{\"a}hrer}, \citenamefont
  {Lyakhov}, \citenamefont {Michels}, \citenamefont {Stefanov},\ and\
  \citenamefont {Mogilevtsev}}]{mikhalychev2019efficiently}%
  \BibitemOpen
  \bibfield  {author} {\bibinfo {author} {\bibfnamefont {A.}~\bibnamefont
  {Mikhalychev}}, \bibinfo {author} {\bibfnamefont {B.}~\bibnamefont
  {Bessire}}, \bibinfo {author} {\bibfnamefont {I.}~\bibnamefont
  {Karuseichyk}}, \bibinfo {author} {\bibfnamefont {A.}~\bibnamefont
  {Sakovich}}, \bibinfo {author} {\bibfnamefont {M.}~\bibnamefont
  {Untern{\"a}hrer}}, \bibinfo {author} {\bibfnamefont {D.}~\bibnamefont
  {Lyakhov}}, \bibinfo {author} {\bibfnamefont {D.~L.}\ \bibnamefont
  {Michels}}, \bibinfo {author} {\bibfnamefont {A.}~\bibnamefont {Stefanov}},\
  and\ \bibinfo {author} {\bibfnamefont {D.}~\bibnamefont {Mogilevtsev}},\
  }\bibfield  {title} {\bibinfo {title} {Efficiently reconstructing compound
  objects by quantum imaging with higher-order correlation functions},\
  }\href@noop {} {\bibfield  {journal} {\bibinfo  {journal} {Commun. Phys.}\
  }\textbf {\bibinfo {volume} {2}},\ \bibinfo {pages} {134} (\bibinfo {year}
  {2019})}\BibitemShut {NoStop}%
\bibitem [{\citenamefont {Vlasenko}\ \emph {et~al.}(2020)\citenamefont
  {Vlasenko}, \citenamefont {Mikhalychev}, \citenamefont {Karuseichyk},
  \citenamefont {Lyakhov}, \citenamefont {Michels},\ and\ \citenamefont
  {Mogilevtsev}}]{Vlasenko20}%
  \BibitemOpen
  \bibfield  {author} {\bibinfo {author} {\bibfnamefont {S.}~\bibnamefont
  {Vlasenko}}, \bibinfo {author} {\bibfnamefont {A.~B.}\ \bibnamefont
  {Mikhalychev}}, \bibinfo {author} {\bibfnamefont {I.~L.}\ \bibnamefont
  {Karuseichyk}}, \bibinfo {author} {\bibfnamefont {D.~A.}\ \bibnamefont
  {Lyakhov}}, \bibinfo {author} {\bibfnamefont {D.~L.}\ \bibnamefont
  {Michels}},\ and\ \bibinfo {author} {\bibfnamefont {D.}~\bibnamefont
  {Mogilevtsev}},\ }\bibfield  {title} {\bibinfo {title} {Optimal correlation
  order in superresolution optical fluctuation microscopy},\ }\href@noop {}
  {\bibfield  {journal} {\bibinfo  {journal} {Phys. Rev. A}\ }\textbf {\bibinfo
  {volume} {102}},\ \bibinfo {pages} {063507} (\bibinfo {year}
  {2020})}\BibitemShut {NoStop}%
\bibitem [{\citenamefont {Kurdzia{\l}ek}\ and\ \citenamefont
  {Demkowicz-Dobrza{\'n}ski}(2021)}]{Kurdzialek21}%
  \BibitemOpen
  \bibfield  {author} {\bibinfo {author} {\bibfnamefont {S.}~\bibnamefont
  {Kurdzia{\l}ek}}\ and\ \bibinfo {author} {\bibfnamefont {R.}~\bibnamefont
  {Demkowicz-Dobrza{\'n}ski}},\ }\bibfield  {title} {\bibinfo {title}
  {Super-resolution optical fluctuation imaging—fundamental estimation theory
  perspective},\ }\href@noop {} {\bibfield  {journal} {\bibinfo  {journal} {J.
  Opt.}\ }\textbf {\bibinfo {volume} {23}},\ \bibinfo {pages} {075701}
  (\bibinfo {year} {2021})}\BibitemShut {NoStop}%
\bibitem [{\citenamefont {Mikhalychev}\ \emph {et~al.}(2021)\citenamefont
  {Mikhalychev}, \citenamefont {Novik}, \citenamefont {Karuseichyk},
  \citenamefont {Lyakhov}, \citenamefont {Michels},\ and\ \citenamefont
  {Mogilevtsev}}]{mikhalychev2021lost}%
  \BibitemOpen
  \bibfield  {author} {\bibinfo {author} {\bibfnamefont {A.}~\bibnamefont
  {Mikhalychev}}, \bibinfo {author} {\bibfnamefont {P.}~\bibnamefont {Novik}},
  \bibinfo {author} {\bibfnamefont {I.}~\bibnamefont {Karuseichyk}}, \bibinfo
  {author} {\bibfnamefont {D.}~\bibnamefont {Lyakhov}}, \bibinfo {author}
  {\bibfnamefont {D.~L.}\ \bibnamefont {Michels}},\ and\ \bibinfo {author}
  {\bibfnamefont {D.}~\bibnamefont {Mogilevtsev}},\ }\bibfield  {title}
  {\bibinfo {title} {Lost photon enhances superresolution},\ }\href@noop {}
  {\bibfield  {journal} {\bibinfo  {journal} {npj Quantum Information}\
  }\textbf {\bibinfo {volume} {7}},\ \bibinfo {pages} {125} (\bibinfo {year}
  {2021})}\BibitemShut {NoStop}%
\bibitem [{\citenamefont {Horn}\ and\ \citenamefont
  {Johnson}(1985)}]{HornJohnson}%
  \BibitemOpen
  \bibfield  {author} {\bibinfo {author} {\bibfnamefont {R.~A.}\ \bibnamefont
  {Horn}}\ and\ \bibinfo {author} {\bibfnamefont {C.~J.}\ \bibnamefont
  {Johnson}},\ }\href@noop {} {\emph {\bibinfo {title} {Matrix Analysis}}}\
  (\bibinfo  {publisher} {Cambridge University Press},\ \bibinfo {address} {New
  York},\ \bibinfo {year} {1985})\BibitemShut {NoStop}%
\bibitem [{\citenamefont {Bernardo}\ and\ \citenamefont
  {Smith}(2009)}]{Bernardo-Smith}%
  \BibitemOpen
  \bibfield  {author} {\bibinfo {author} {\bibfnamefont {J.~M.}\ \bibnamefont
  {Bernardo}}\ and\ \bibinfo {author} {\bibfnamefont {A.~F.}\ \bibnamefont
  {Smith}},\ }\href@noop {} {\emph {\bibinfo {title} {Bayesian theory}}}\
  (\bibinfo  {publisher} {John Wiley \& Sons},\ \bibinfo {address} {New York},\
  \bibinfo {year} {2009})\BibitemShut {NoStop}%
\bibitem [{\citenamefont {Braunstein}\ and\ \citenamefont
  {Kimble}(1998)}]{Braunstein98}%
  \BibitemOpen
  \bibfield  {author} {\bibinfo {author} {\bibfnamefont {S.~L.}\ \bibnamefont
  {Braunstein}}\ and\ \bibinfo {author} {\bibfnamefont {H.~J.}\ \bibnamefont
  {Kimble}},\ }\bibfield  {title} {\bibinfo {title} {Teleportation of
  continuous quantum variables},\ }\href
  {https://doi.org/10.1103/PhysRevLett.80.869} {\bibfield  {journal} {\bibinfo
  {journal} {Phys. Rev. Lett.}\ }\textbf {\bibinfo {volume} {80}},\ \bibinfo
  {pages} {869} (\bibinfo {year} {1998})}\BibitemShut {NoStop}%
\bibitem [{\citenamefont {Horoshko}\ and\ \citenamefont
  {Kilin}(2000)}]{Horoshko00}%
  \BibitemOpen
  \bibfield  {author} {\bibinfo {author} {\bibfnamefont {D.~B.}\ \bibnamefont
  {Horoshko}}\ and\ \bibinfo {author} {\bibfnamefont {S.~Y.}\ \bibnamefont
  {Kilin}},\ }\bibfield  {title} {\bibinfo {title} {Quantum teleportation using
  quantum nondemolition technique},\ }\href
  {https://doi.org/10.1103/PhysRevA.61.032304} {\bibfield  {journal} {\bibinfo
  {journal} {Phys. Rev. A}\ }\textbf {\bibinfo {volume} {61}},\ \bibinfo
  {pages} {032304} (\bibinfo {year} {2000})}\BibitemShut {NoStop}%
\bibitem [{\citenamefont {Helstrom}(1976)}]{Helstrom-book}%
  \BibitemOpen
  \bibfield  {author} {\bibinfo {author} {\bibfnamefont {C.~W.}\ \bibnamefont
  {Helstrom}},\ }\href@noop {} {\emph {\bibinfo {title} {Quantum Detection and
  Estimation Theory}}}\ (\bibinfo  {publisher} {Academic Press},\ \bibinfo
  {address} {Cambridge},\ \bibinfo {year} {1976})\BibitemShut {NoStop}%
\bibitem [{\citenamefont {Holevo}(1982)}]{HolevoBook82}%
  \BibitemOpen
  \bibfield  {author} {\bibinfo {author} {\bibfnamefont {A.~S.}\ \bibnamefont
  {Holevo}},\ }\href@noop {} {\emph {\bibinfo {title} {Probabilistic and
  statistical aspects of quantum theory}}}\ (\bibinfo  {publisher}
  {North-Holland},\ \bibinfo {address} {Amsterdam},\ \bibinfo {year}
  {1982})\BibitemShut {NoStop}%
\bibitem [{\citenamefont {Paris}(2009)}]{Paris09}%
  \BibitemOpen
  \bibfield  {author} {\bibinfo {author} {\bibfnamefont {M.~G.~A.}\
  \bibnamefont {Paris}},\ }\bibfield  {title} {\bibinfo {title} {Quantum
  estimation for quantum technology},\ }\href@noop {} {\bibfield  {journal}
  {\bibinfo  {journal} {Int. J. Quantum Inf.}\ }\textbf {\bibinfo {volume}
  {7}},\ \bibinfo {pages} {125} (\bibinfo {year} {2009})}\BibitemShut {NoStop}%
\bibitem [{\citenamefont {Cohen}(1968)}]{Cohen68}%
  \BibitemOpen
  \bibfield  {author} {\bibinfo {author} {\bibfnamefont {M.}~\bibnamefont
  {Cohen}},\ }\bibfield  {title} {\bibinfo {title} {The {Fisher} information
  and convexity},\ }\href@noop {} {\bibfield  {journal} {\bibinfo  {journal}
  {IEEE Trans. Inf. Theory}\ }\textbf {\bibinfo {volume} {14}},\ \bibinfo
  {pages} {591} (\bibinfo {year} {1968})}\BibitemShut {NoStop}%
\bibitem [{\citenamefont {Goodman}(2005)}]{Goodman-book}%
  \BibitemOpen
  \bibfield  {author} {\bibinfo {author} {\bibfnamefont {J.~W.}\ \bibnamefont
  {Goodman}},\ }\href@noop {} {\emph {\bibinfo {title} {Introduction to Fourier
  Optics}}},\ \bibinfo {edition} {3rd}\ ed.\ (\bibinfo  {publisher} {Roberts
  and Co.},\ \bibinfo {address} {Englewood},\ \bibinfo {year}
  {2005})\BibitemShut {NoStop}%
\bibitem [{\citenamefont {T\'oth}\ and\ \citenamefont {Petz}(2013)}]{Toth13}%
  \BibitemOpen
  \bibfield  {author} {\bibinfo {author} {\bibfnamefont {G.}~\bibnamefont
  {T\'oth}}\ and\ \bibinfo {author} {\bibfnamefont {D.}~\bibnamefont {Petz}},\
  }\bibfield  {title} {\bibinfo {title} {Extremal properties of the variance
  and the quantum {Fisher} information},\ }\href
  {https://doi.org/10.1103/PhysRevA.87.032324} {\bibfield  {journal} {\bibinfo
  {journal} {Phys. Rev. A}\ }\textbf {\bibinfo {volume} {87}},\ \bibinfo
  {pages} {032324} (\bibinfo {year} {2013})}\BibitemShut {NoStop}%
\bibitem [{\citenamefont {T\'oth}\ and\ \citenamefont
  {Fr\"owis}(2022)}]{Toth22}%
  \BibitemOpen
  \bibfield  {author} {\bibinfo {author} {\bibfnamefont {G.}~\bibnamefont
  {T\'oth}}\ and\ \bibinfo {author} {\bibfnamefont {F.}~\bibnamefont
  {Fr\"owis}},\ }\bibfield  {title} {\bibinfo {title} {Uncertainty relations
  with the variance and the quantum {Fisher} information based on convex
  decompositions of density matrices},\ }\href
  {https://doi.org/10.1103/PhysRevResearch.4.013075} {\bibfield  {journal}
  {\bibinfo  {journal} {Phys. Rev. Res.}\ }\textbf {\bibinfo {volume} {4}},\
  \bibinfo {pages} {013075} (\bibinfo {year} {2022})}\BibitemShut {NoStop}%
\bibitem [{\citenamefont {Pa{\'u}r}\ \emph {et~al.}(2016)\citenamefont
  {Pa{\'u}r}, \citenamefont {Stoklasa}, \citenamefont {Hradil}, \citenamefont
  {S{\'a}nchez-Soto},\ and\ \citenamefont {Rehacek}}]{Paur16}%
  \BibitemOpen
  \bibfield  {author} {\bibinfo {author} {\bibfnamefont {M.}~\bibnamefont
  {Pa{\'u}r}}, \bibinfo {author} {\bibfnamefont {B.}~\bibnamefont {Stoklasa}},
  \bibinfo {author} {\bibfnamefont {Z.}~\bibnamefont {Hradil}}, \bibinfo
  {author} {\bibfnamefont {L.~L.}\ \bibnamefont {S{\'a}nchez-Soto}},\ and\
  \bibinfo {author} {\bibfnamefont {J.}~\bibnamefont {Rehacek}},\ }\bibfield
  {title} {\bibinfo {title} {Achieving the ultimate optical resolution},\
  }\href@noop {} {\bibfield  {journal} {\bibinfo  {journal} {Optica}\ }\textbf
  {\bibinfo {volume} {3}},\ \bibinfo {pages} {1144} (\bibinfo {year}
  {2016})}\BibitemShut {NoStop}%
\bibitem [{\citenamefont {Boucher}\ \emph {et~al.}(2020)\citenamefont
  {Boucher}, \citenamefont {Fabre}, \citenamefont {Labroille},\ and\
  \citenamefont {Treps}}]{Boucher20}%
  \BibitemOpen
  \bibfield  {author} {\bibinfo {author} {\bibfnamefont {P.}~\bibnamefont
  {Boucher}}, \bibinfo {author} {\bibfnamefont {C.}~\bibnamefont {Fabre}},
  \bibinfo {author} {\bibfnamefont {G.}~\bibnamefont {Labroille}},\ and\
  \bibinfo {author} {\bibfnamefont {N.}~\bibnamefont {Treps}},\ }\bibfield
  {title} {\bibinfo {title} {Spatial optical mode demultiplexing as a practical
  tool for optimal transverse distance estimation},\ }\href@noop {} {\bibfield
  {journal} {\bibinfo  {journal} {Optica}\ }\textbf {\bibinfo {volume} {7}},\
  \bibinfo {pages} {1621} (\bibinfo {year} {2020})}\BibitemShut {NoStop}%
\bibitem [{\citenamefont {Rouvi{\`e}re}\ \emph {et~al.}(2024)\citenamefont
  {Rouvi{\`e}re}, \citenamefont {Barral}, \citenamefont {Grateau},
  \citenamefont {Karuseichyk}, \citenamefont {Sorelli}, \citenamefont
  {Walschaers},\ and\ \citenamefont {Treps}}]{Rouviere24}%
  \BibitemOpen
  \bibfield  {author} {\bibinfo {author} {\bibfnamefont {C.}~\bibnamefont
  {Rouvi{\`e}re}}, \bibinfo {author} {\bibfnamefont {D.}~\bibnamefont
  {Barral}}, \bibinfo {author} {\bibfnamefont {A.}~\bibnamefont {Grateau}},
  \bibinfo {author} {\bibfnamefont {I.}~\bibnamefont {Karuseichyk}}, \bibinfo
  {author} {\bibfnamefont {G.}~\bibnamefont {Sorelli}}, \bibinfo {author}
  {\bibfnamefont {M.}~\bibnamefont {Walschaers}},\ and\ \bibinfo {author}
  {\bibfnamefont {N.}~\bibnamefont {Treps}},\ }\bibfield  {title} {\bibinfo
  {title} {Ultra-sensitive separation estimation of optical sources},\
  }\href@noop {} {\bibfield  {journal} {\bibinfo  {journal} {Optica}\ }\textbf
  {\bibinfo {volume} {11}},\ \bibinfo {pages} {166} (\bibinfo {year}
  {2024})}\BibitemShut {NoStop}%
\bibitem [{\citenamefont {Motka}\ \emph {et~al.}(2016)\citenamefont {Motka},
  \citenamefont {Stoklasa}, \citenamefont {D’Angelo}, \citenamefont {Facchi},
  \citenamefont {Garuccio}, \citenamefont {Hradil}, \citenamefont {Pascazio},
  \citenamefont {Pepe}, \citenamefont {Teo}, \citenamefont
  {{\v{R}}eh{\'a}{\v{c}}ek},\ and\ \citenamefont {S{\'a}nchez-Soto}}]{Motka16}%
  \BibitemOpen
  \bibfield  {author} {\bibinfo {author} {\bibfnamefont {L.}~\bibnamefont
  {Motka}}, \bibinfo {author} {\bibfnamefont {B.}~\bibnamefont {Stoklasa}},
  \bibinfo {author} {\bibfnamefont {M.}~\bibnamefont {D’Angelo}}, \bibinfo
  {author} {\bibfnamefont {P.}~\bibnamefont {Facchi}}, \bibinfo {author}
  {\bibfnamefont {A.}~\bibnamefont {Garuccio}}, \bibinfo {author}
  {\bibfnamefont {Z.}~\bibnamefont {Hradil}}, \bibinfo {author} {\bibfnamefont
  {S.}~\bibnamefont {Pascazio}}, \bibinfo {author} {\bibfnamefont {F.~V.}\
  \bibnamefont {Pepe}}, \bibinfo {author} {\bibfnamefont {Y.~S.}\ \bibnamefont
  {Teo}}, \bibinfo {author} {\bibfnamefont {J.}~\bibnamefont
  {{\v{R}}eh{\'a}{\v{c}}ek}},\ and\ \bibinfo {author} {\bibfnamefont {L.~L.}\
  \bibnamefont {S{\'a}nchez-Soto}},\ }\bibfield  {title} {\bibinfo {title}
  {Optical resolution from {Fisher} information},\ }\href@noop {} {\bibfield
  {journal} {\bibinfo  {journal} {Eur. Phys. J. Plus}\ }\textbf {\bibinfo
  {volume} {131}},\ \bibinfo {pages} {1} (\bibinfo {year} {2016})}\BibitemShut
  {NoStop}%
\bibitem [{\citenamefont {Rotem}\ \emph {et~al.}(2019)\citenamefont {Rotem},
  \citenamefont {Gefen}, \citenamefont {Oviedo-Casado}, \citenamefont {Prior},
  \citenamefont {Schmitt}, \citenamefont {Burak}, \citenamefont {McGuiness},
  \citenamefont {Jelezko},\ and\ \citenamefont {Retzker}}]{Rotem19}%
  \BibitemOpen
  \bibfield  {author} {\bibinfo {author} {\bibfnamefont {A.}~\bibnamefont
  {Rotem}}, \bibinfo {author} {\bibfnamefont {T.}~\bibnamefont {Gefen}},
  \bibinfo {author} {\bibfnamefont {S.}~\bibnamefont {Oviedo-Casado}}, \bibinfo
  {author} {\bibfnamefont {J.}~\bibnamefont {Prior}}, \bibinfo {author}
  {\bibfnamefont {S.}~\bibnamefont {Schmitt}}, \bibinfo {author} {\bibfnamefont
  {Y.}~\bibnamefont {Burak}}, \bibinfo {author} {\bibfnamefont
  {L.}~\bibnamefont {McGuiness}}, \bibinfo {author} {\bibfnamefont
  {F.}~\bibnamefont {Jelezko}},\ and\ \bibinfo {author} {\bibfnamefont
  {A.}~\bibnamefont {Retzker}},\ }\bibfield  {title} {\bibinfo {title} {Limits
  on spectral resolution measurements by quantum probes},\ }\href
  {https://doi.org/10.1103/PhysRevLett.122.060503} {\bibfield  {journal}
  {\bibinfo  {journal} {Phys. Rev. Lett.}\ }\textbf {\bibinfo {volume} {122}},\
  \bibinfo {pages} {060503} (\bibinfo {year} {2019})}\BibitemShut {NoStop}%
\bibitem [{\citenamefont {Oviedo-Casado}\ \emph {et~al.}(2020)\citenamefont
  {Oviedo-Casado}, \citenamefont {Rotem}, \citenamefont {Nigmatullin},
  \citenamefont {Prior},\ and\ \citenamefont {Retzker}}]{Oviedo20}%
  \BibitemOpen
  \bibfield  {author} {\bibinfo {author} {\bibfnamefont {S.}~\bibnamefont
  {Oviedo-Casado}}, \bibinfo {author} {\bibfnamefont {A.}~\bibnamefont
  {Rotem}}, \bibinfo {author} {\bibfnamefont {R.}~\bibnamefont {Nigmatullin}},
  \bibinfo {author} {\bibfnamefont {J.}~\bibnamefont {Prior}},\ and\ \bibinfo
  {author} {\bibfnamefont {A.}~\bibnamefont {Retzker}},\ }\bibfield  {title}
  {\bibinfo {title} {Correlated noise in {Brownian} motion allows for super
  resolution},\ }\href@noop {} {\bibfield  {journal} {\bibinfo  {journal} {Sci.
  Rep.}\ }\textbf {\bibinfo {volume} {10}},\ \bibinfo {pages} {19691} (\bibinfo
  {year} {2020})}\BibitemShut {NoStop}%
\bibitem [{\citenamefont {Dertinger}\ \emph {et~al.}(2009)\citenamefont
  {Dertinger}, \citenamefont {Colyer}, \citenamefont {Iyer}, \citenamefont
  {Weiss},\ and\ \citenamefont {Enderlein}}]{Dertinger09}%
  \BibitemOpen
  \bibfield  {author} {\bibinfo {author} {\bibfnamefont {T.}~\bibnamefont
  {Dertinger}}, \bibinfo {author} {\bibfnamefont {R.}~\bibnamefont {Colyer}},
  \bibinfo {author} {\bibfnamefont {G.}~\bibnamefont {Iyer}}, \bibinfo {author}
  {\bibfnamefont {S.}~\bibnamefont {Weiss}},\ and\ \bibinfo {author}
  {\bibfnamefont {J.}~\bibnamefont {Enderlein}},\ }\bibfield  {title} {\bibinfo
  {title} {Fast, background-free, {3D} super-resolution optical fluctuation
  imaging {(SOFI)}},\ }\href@noop {} {\bibfield  {journal} {\bibinfo  {journal}
  {Proc. Nat. Acad. Sci.}\ }\textbf {\bibinfo {volume} {106}},\ \bibinfo
  {pages} {22287} (\bibinfo {year} {2009})}\BibitemShut {NoStop}%
\end{thebibliography}%

\end{document}